\theoremstyle{plain}\newtheorem{theorem}{Theorem}[section]
\theoremstyle{plain}\newtheorem{lemma}[theorem]{Lemma}
\theoremstyle{plain}\newtheorem{corollary}[theorem]{Corollary}
\theoremstyle{definition}\newtheorem{definition}[theorem]{Definition}
\theoremstyle{definition}\newtheorem{remark}[theorem]{Remark}
\theoremstyle{definition}\newtheorem{example}[theorem]{Example}
\declaretheoremstyle[
spaceabove=6pt, spacebelow=6pt,
numbered=yes,
sibling=theorem,
headfont=\normalfont\bfseries,
notefont=\mdseries, notebraces={(}{)},
bodyfont=\normalfont,
postheadspace=1em,
preheadhook=\endgraf\nobreak\noindent\hrulefill\vspace*{\dimexpr-6pt+\topsep\relax},
prefoothook=\vspace*{\dimexpr-8pt+\topsep\relax}\endgraf\nobreak\noindent\hrulefill
]{algostyle}
\declaretheorem[style=algostyle]{algorithm}
\newcommand{\timecomp}[1]{\ensuremath{\textrm{TIME}\left(#1\right)}}
\newcommand{\etime}[1]{\ensuremath{\textrm{E}^{#1}\textrm{TIME}}}
\newcommand{\exptime}[1]{\ensuremath{\textrm{EXP}^{#1}\textrm{TIME}}}
\newcommand{\elementary}{\ensuremath{\textrm{ELEMENTARY}}}
\newcommand{\ptime}{\ensuremath{\textrm{P}}}
\newcommand{\OO}{\mathcal{O}}
\newcommand{\nats}[0]{\ensuremath{\mathbb{N}}}
\newcommand{\A}{\mathcal{A}}
\newcommand{\B}{\mathcal{B}}
\newcommand{\F}{\mathcal{F}}
\renewcommand{\P}{\mathbb{P}}
\newcommand{\V}{\mathcal{V}}
\newcommand{\Flab}{\mathcal{F}_{\mathtt{lab}}}
\newcommand{\NF}{\mathcal{N}\!\mathcal{F}}
\newcommand{\Sorts}{\mathcal{S}}
\newcommand{\Constructors}{\mathcal{C}}
\newcommand{\Defineds}{\mathcal{D}}
\newcommand{\Data}{\mathcal{D}\!\!\mathcal{A}}
\newcommand{\Terms}{\mathcal{T}}
\newcommand{\Rules}{\mathcal{R}}
\newcommand{\Ruleslab}{\Rules_{\mathtt{lab}}}
\newcommand{\Var}{\mathit{Var}}
\newcommand{\asort}{\iota}
\newcommand{\atype}{\sigma}
\newcommand{\btype}{\tau}
\newcommand{\ctype}{\pi}
\newcommand{\apps}[3]{#1\ #2 \cdots #3}
\newcommand{\symb}[1]{\mathtt{#1}}
\newcommand{\identifier}[1]{\mathsf{#1}}
\newcommand{\unknown}[1]{\underline{\symb{#1}}}
\newcommand{\encode}[1]{\overline{\symb{#1}}}
\newcommand{\interpret}[1]{\llbracket #1 \rrbracket_{\B}}
\newcommand{\numinterpret}[1]{\langle #1 \rangle}
\newcommand{\labl}{\mathsf{label}}
\newcommand{\numrep}[1]{[#1]}
\newcommand{\Conf}{\mathsf{Confirmed}}
\newcommand{\arrtype}{\Rightarrow}
\newcommand{\arrz}{\to}
\newcommand{\arrzt}{\Rightarrow}
\newcommand{\arr}[1]{\arrz_{#1}}
\newcommand{\arrr}[1]{\arr{#1}^*}
\newcommand{\arrp}[1]{\arr{#1}^+}
\newcommand{\supterm}{\rhd}
\newcommand{\suptermeq}{\unrhd}
\newcommand{\transition}[5]{#1~\displaystyle{\mathop{=\!\!=\!\!\!
  \Longrightarrow}^{#2/#3\ #4}}~#5}
\newcommand{\card}{\mathtt{card}}
\newcommand{\blank}{\textbf{\textvisiblespace}}
\newcommand{\nul}{\symb{0}}
\newcommand{\one}{\symb{1}}
\newcommand{\nil}{\symb{[]}}
\newcommand{\cons}{\symb{;}}
\newcommand{\q}{\symb{?}}
\newcommand{\h}{\symb{\#}}
\newcommand{\strue}{\symb{true}}
\newcommand{\sfalse}{\symb{false}}
\newcommand{\bits}{\symb{list}}
\newcommand{\bit}{\symb{symb}}
\newcommand{\symbs}{\symb{symb}}
\newcommand{\bool}{\symb{bool}}
\newcommand{\secshort}{\S~}
\begin{document}

%

\title{Complexity Hierarchies and Higher-Order Cons-Free Term Rewriting}
\thanks{Supported by the Marie Sk{\l}odowska-Curie
action ``HORIP'', program H2020-MSCA-IF-2014, 658162 and by the
Danish Council for Independent Research Sapere Aude grant ``Complexity
via Logic and Algebra''}
\author[C.~Kop]{Cynthia Kop} 
\author[J.G.~Simonsen]{Jakob Grue Simonsen} 
\address{Department of Computer Science, Copenhagen University}
\email{\{kop,simonsen\}@di.ku.dk}
\maketitle

\begin{abstract}
Constructor rewriting systems are said to be cons-free if, roughly,
constructor terms in the right-hand sides of rules are subterms of
the left-hand sides; the computational intuition
is that rules cannot build new data structures.
In programming language research, cons-free languages have been used
to characterize hierarchies of computational complexity classes; in
term rewriting, cons-free first-order TRSs have been used
to characterize $\ptime$.

We investigate cons-free higher-order term rewriting systems, the
complexity classes they characterize, and how these depend on the
type order of the systems. We prove that, for every $K
\geq 1$, left-linear cons-free systems with type order $K$
characterize $\etime{K}$ if unrestricted evaluation is used
(i.e., the system does not have a fixed reduction strategy).

The main difference with prior work in implicit complexity is that
(i) our results hold for non-orthogonal TRSs
with no assumptions on reduction strategy,
(ii) we consequently obtain much larger classes for each
type order ($\etime{K}$ versus $\exptime{K-1}$), and
(iii) results for cons-free term rewriting systems have previously
only been obtained for $K=1$, and with additional syntactic
restrictions besides cons-freeness and left-linearity.

Our results are 
among the first implicit
characterizations of the hierarchy $\textrm{E} = \etime{1} \subsetneq
\etime{2} \subsetneq \cdots$.
Our work confirms prior results that having full non-determinism (via
overlapping rules) does not directly allow for characterization of
non-deterministic complexity classes like $\textrm{NE}$.  We also show
that non-determinism makes the classes characterized highly sensitive to minor syntactic changes like admitting product types or
non-left-linear rules.
\end{abstract}

\section{Introduction}

In~\cite{jon:01}, Jones introduces \emph{cons-free programming}:
working with a small functional programming language, cons-free
programs are exactly those where function bodies cannot contain use of
data constructors (the ``cons'' operator on lists).  Put differently,
a cons-free program is \emph{read-only}: data structures cannot be
created or altered, only read from the input; and any data passed as
arguments to recursive function calls must thus be part of the
original input.

The interest in such programs lies in their applicability to
computational complexity: by imposing cons-freeness, the resulting
programs can only decide the sets in a proper subclass of the
Turing-decidable sets; indeed are said to \emph{characterize} the
subclass.  
Jones shows that adding further restrictions such as type order
or enforcing tail recursion lowers the resulting expressiveness to known
classes.  For example, cons-free programs with data order $0$ can
decide exactly the sets in PTIME, while tail-recursive cons-free
programs with data order $1$ can decide exactly the sets in PSPACE.
The study of such restrictions and the complexity classes characterized
is a research area known as \emph{implicit complexity}
and has a long history with many distinct approaches
(see, e.g., \cite{DBLP:conf/tlca/Baillot07,baillot_et_al:LIPIcs:2012:3664,DBLP:conf/esop/BaillotGM10,DBLP:journals/cc/BellantoniC92,DBLP:journals/apal/BellantoniNS00,Hofmann:hab,DBLP:journals/tcs/KristiansenN04}).

Rather than a toy language, it 
is tantalizing to consider \emph{term rewriting} instead.  Term
rewriting systems have no fixed evaluation order (so call-by-name or
call-by-value can be introduced as needed, but are not
\emph{required});
and term rewriting is natively non-deterministic,
allowing distinct rules to be applied (``functions to be invoked'')
to the same piece of syntax,
hence could be useful for extensions towards
non-deterministic complexity classes. Implicit complexity using
term rewriting has seen significant advances using a plethora
of approaches (e.g.\ \cite{DBLP:conf/aplas/AvanziniEM12,DBLP:conf/rta/AvanziniM10,DBLP:journals/corr/AvanziniM13}).
Most of this research has, however, considered fixed evaluation orders (most prominently innermost reduction),
and if not, then systems which are either orthogonal, or at least confluent (e.g. \cite{DBLP:conf/rta/AvanziniM10}).
Almost all of the work considers only first-order rewriting.

The authors of~\cite{car:sim:14} provide a first definition of
cons-free term rewriting without constraints on evaluation order or
confluence requirements, and prove that this class---limited to
\emph{first-order} rewriting---characterizes $\textrm{PTIME}$.
However, they impose a rather severe partial linearity restriction on
the programs.  
This paper seeks to answer two questions: (i) what happens if
\emph{no} restrictions beyond left-linearity and cons-freeness are
imposed? And (ii) what if we consider \emph{higher-order} term
rewriting?
We obtain that $K^{\text{th}}$-order cons-free term rewriting exactly
characterizes $\etime{K}$.  This is surprising because in Jones'
rewriting-like language, $K^{\text{th}}$-order programs characterize
$\exptime{K-1}$:
surrendering both determinism and evaluation order thus significantly
increases expressivity.
Our results are comparable to work in descriptive complexity theory
(roughly, the study of logics characterizing complexity classes) where the \emph{non}-deterministic
classes $\textrm{NEXP}^{K-1}\textrm{TIME}$ in the exponential hierarchy are exactly the sets axiomatizable by $\Sigma_K$ formulas in appropriate query logics \cite{KUPER199333,HELLA2006197}.

\section{Preliminaries}

\subsection{Computational Complexity}

We presuppose introductory working knowledge of computability and
complexity theory (
see, e.g., \cite{Jones:CompComp}). Notation is fixed below.

Turing Machines (TMs) are tuples $(I,A,S,T)$ where $I \supseteq \{
0,1 \}$ is a finite set of \emph{initial symbols}; $A \supseteq I \cup
\{\blank\}$ is a finite set of \emph{tape symbols} with $\blank \notin
I$ the special \emph{blank} symbol; $S \supseteq \{ \symb{start},
\symb{accept}, \symb{reject} \}$ is a finite set of states; and $T$ is
a finite set of transitions $(i,r,w,d,j)$ with $i \in S \setminus
\{\symb{accept},\symb{reject}\}$ (the \emph{original state}), $r \in
A$ (the \emph{read symbol}), $w \in A$ (the \emph{written symbol}), $d
\in \{ \symb{L},\symb{R} \}$ (the \emph{direction}), and $j \in S$
(the \emph{result state}).
We also write this transition as $\transition{i}{r}{w}{d}{j}$.
All machines in this paper are \emph{deterministic}: every pair
$(i,r)$ with $i \in S \setminus \{ \symb{accept},\symb{reject} \}$ is
associated with exactly one transition $(i,r,w,d,j)$.
Every Turing Machine in this paper has a single, right-infinite tape.

A \emph{valid tape} is a right-infinite sequence of tape symbols with
only finitely many not $\blank$.  A \emph{configuration} of a TM is a
triple $(t,p,s)$ with $t$ a valid tape, $p \in \nats$ and $s \in S$.
The transitions $T$ induce a binary relation $\arrzt$ between
configurations in the obvious way.

\begin{definition}
Let $I \supseteq \{\nul,\one \}$ be a set of \emph{symbols}.
A \emph{decision problem} is a set $X \subseteq I^+$.
\end{definition}

A TM with input alphabet $I$ \emph{decides} $X \subseteq I^+$ if for
any string $x \in I^+$, we have $x \in X$ if{f} $(\blank x_1\dots
x_n\blank\blank\dots,0,\symb{start}) \arrzt^* (t,i,\symb{accept})$ for
some $t,i$, and $(\blank x_1\dots x_n\blank \blank\dots,0,
\symb{start}) \arrzt^* (t,i,\symb{reject})$ otherwise (i.e., the
machine halts on all inputs, ending in $\symb{accept}$ or
$\symb{reject}$ depending on whether $x \in X$).
If $f : \nats \longrightarrow \nats$ is a function, a (deterministic)
TM \emph{runs in time} $\lambda n.f(n)$ if, for each $n \in \nats
\setminus \{0\}$ and each $x \in I^n$, we have
$(\blank x\blank\blank\dots,0,\symb{start}) \arrzt^{\leq f(n)}
(t,i,\unknown{s})$ for some $\unknown{s} \in \{\symb{accept},
\symb{reject}\}$, where $\arrzt^{\leq f(n)}$ denotes a sequence of at
most $f(n)$ transitions.

\medskip
We categorize decision problems into classes based on the
time needed to decide them.

\begin{definition}
Let $f : \nats \longrightarrow \nats$ be a function. Then,
$\timecomp{f(n)}$ is the set of all $S \subseteq I^+$ such that
there exist  $a > 0$ and a deterministic TM running in time $\lambda
n.a \cdot f(n)$ that decides $S$ (i.e., $S$ is decidable in time
$\OO(f(n))$).
Note that by design, $\timecomp{\cdot}$ is closed under $\OO$.
\end{definition}

\begin{definition}
For $K,n \geq 0$, let $\mathrm{exp}_2^0(n) = n$ and
$\mathrm{exp}_2^{K+1}(n) = 2^{\mathrm{exp}_2^K(n)} =
\mathrm{exp}_2^K(2^n)$.

For $K \geq 1$ define:
$
\etime{K} \triangleq \bigcup_{a \in \nats}
\timecomp{\textrm{exp}_2^{K}(an)}
$.
\end{definition}

Observe in particular that $\etime{1} = \bigcup_{a \in \nats}
\timecomp{\textrm{exp}_2^{1}(an)} = \bigcup_{a \in \nats}
\timecomp{2^{an}} = \textrm{E}$ (where $\textrm{E}$ is the usual
complexity class of this name, see e.g.,
\cite[Ch.\ 20]{Papadimitriou:complexity}).
Note also that for any $d,K \geq 1$, we have $(\mathrm{exp}_ 2^K(x))^d
= 2^{d \cdot \mathrm{exp}_2^{K-1}(x)} \leq 2^{\mathrm{exp}_2^{K-1}(
dx)} = \mathrm{exp}_2^K(dx)$.  Hence, if $P$ is a polynomial with
non-negative integer coefficients and the set $S \subseteq \{0,1\}^+$
is decided by an algorithm running in $\timecomp{P(\mathrm{exp
}_ 2^K(an))}$ for some $a \in \nats$, then $S \in \etime{K}$.

By the Time Hierarchy Theorem \cite{Sipser:comp},
$\textrm{E} = \etime{1} \subsetneq \etime{2} \subsetneq \etime{3}
\subsetneq \cdots$.  The union
$\bigcup_{K \in \nats} \etime{K}$ is the set $\elementary$ of
elementary-time computable languages.

We will also sometimes refer to
$
\exptime{K} \triangleq \bigcup_{a,b \in \nats}
\timecomp{\textrm{exp}_2^{K}(an^b)}
$.

\subsection{Applicative term rewriting systems}\label{sec:atrss}

Unlike first-order term rewriting, there is no single, unified
approach to higher-order term rewriting, but rather a number of
different co-extensive systems with distinct syntax; for an overview
of basic issues, see \cite{T03_R}.
For the present paper, we have chosen to employ \emph{applicative
TRSs with simple types}, as (a) the applicative style and absence of
explicitly bound variables allows us to present our examples---in
particular the ``counting modules'' of
\secshort\ref{sec:counting}---in the most intuitive way, and (b) this
particular variant of higher-order rewriting is syntactically similar
to Jones' original definition using functional programming.
However, our proofs do not use any features of ATRS that preclude
using different formalisms; for a presentation using simply-typed
rewriting with explicit binders, we refer to the conference version
of this paper~\cite{kop:sim:16}.


\begin{definition}[Simple types]\label{def:types}
We assume given a non-empty set $\Sorts$ of \emph{sorts}.
Every $\asort \in \Sorts$ is a type of order \emph{0}.
If $\atype,\btype$ are types of order $n$ and $m$ respectively,
then $\atype \arrtype \btype$ is a type of order $\max(n+1,m)$.
Here $\arrtype$ is right-associative, so $\atype \arrtype \btype
\arrtype \ctype$ should be read $\atype \arrtype (\btype \arrtype
\ctype)$.
\end{definition}

We additionally assume given disjoint sets $\F$ of \emph{function
symbols} and $\V$ of variables, each equipped with a type.  This
typing imposes a restriction on the
formation of \emph{terms}:

\begin{definition}[Terms]\label{def:terms}
The set $\Terms(\F,\V)$ of terms over $\F$ and $\V$ consists of
those expressions $s$ such that $s : \atype$ can be derived for some
type $\atype$ using the following clauses:
(a) $a : \atype$ for $(a : \atype) \in \F \cup \V$, and
(b) $s\ t : \btype$ if $s : \atype \arrtype \btype$ and $t : \atype$.
\end{definition}

Clearly, each term has a \emph{unique} type.
A term \emph{has base type} if its type is in $\Sorts$, and \emph{has
functional type} otherwise.
We denote $\Var(s)$ for the set of variables occurring in a term $s$
and say $s$ is \emph{ground} if $\Var(s) = \emptyset$.
Application is left-associative, 
so every term may be denoted $\apps{a}{s_1}{s_n}$ with $a \in \F
\cup \V$.  We call $a$ the \emph{head} of this term.
We will sometimes employ vector notation, denoting
$\apps{a}{s_1}{s_n}$ simply as $a\ \vec{s}$ when no confusion can
arise.

\begin{example}\label{ex:bitlists}
We will often use extensions of the signature
$\F_{\bits}$, given by:
\[
\begin{array}{rclcrclcrclcrcl}
\nul & : & \bit & \quad &
\one & : & \bit & \quad &
\nil & : & \bits & \quad &
\cons & : & \bit \arrtype \bits \arrtype \bits \\
\end{array}
\]
Terms are for instance $\one : \bit$ and
$\cons\ \nul\ (\cons\ \one\ \nil) : \bits$, as well as
$(\cons\ \nul) : \bits \arrtype \bits$.
However, we will always denote $\cons$ in a right-associative infix
way and only use it fully applied; thus, the second of these terms
will be denoted $\nul\cons\one\cons\nil$ and the third will not occur.
Later extensions of the signature will often use additional
constants of type $\symb{symb}$.
\end{example}

The notion of substitution from first-order rewriting extends in the
obvious way to applicative rewriting, but we must take special care
when defining subterms.

\begin{definition}[Substitution, subterms and contexts]
A \emph{substitution} is a type-preserving map from $\V$ to
$\Terms(\F,\V)$ that is the identity on all but finitely many
variables.  Substitutions
$\gamma$ are extended to arbitrary terms $s$, notation $s\gamma$,
by replacing each variable $x$ by $\gamma(x)$.
The \emph{domain} of a substitution $\gamma$ is the set
consisting of those variables $x$ such that $\gamma(x) \neq x$.

We say $t$ is a subterm of $s$, notation $s \suptermeq t$, if (a)
$s = t$, or (b) $s \supterm t$, where $s_1\ s_2 \supterm t$ if
$s_1 \supterm t$ or $s_2 \suptermeq t$.  In case (b), we say $t$ is a
\emph{strict} subterm of $s$.
\end{definition}

Note that $s_1$ is \emph{not} considered a subterm of $s_1\ s_2$; thus,
in a term $\apps{\identifier{f}}{x_1}{x_n}$ the only
strict subterms are $x_1,\dots,x_n$; the term $\apps{\identifier{
f}}{x_1}{x_{n-1}}$ (for instance) is \emph{not} a subterm.
The reason for this arguably unusual definition is that the
restrictions on rules we will employ do not allow us to ever isolate
the head of an application.  Therefore, such ``subterms'' would not be
used, and are moreover problematic to consider due to their higher type
order.

\begin{example}\label{ex:introsucc}
Let $\symb{succ} : \bits \arrtype \bits$ be added to 
$\F_{\symb{bits}}$ of Example~\ref{ex:bitlists}.
Then $\symb{succ}\ (\nul\cons\one\cons\nil) \supterm \one\cons\nil$,
but not $\symb{succ}\ (\nul\cons\one\cons\nil) \supterm
\symb{succ}$.
An example substitution is $\gamma := [xs:=y\cons\one\cons zs]$ (which
is the identity on all variables but $xs$), and for $s =
\symb{succ}\ (\nul\cons xs)$ we have $s\gamma = \symb{succ}\ 
(\nul\cons y\cons\one\cons zs)$.
\end{example}

At last we are prepared to define the reduction relation.

\begin{definition}[Rules and rewriting]\label{def:rules}
A \emph{rule} is a pair $\ell \arrz r$ of terms in $\Terms(\F,\V)$
with the same \emph{type}
such that $\Var(r) \subseteq \Var(\ell)$.
A rule $\ell \arrz r$ is \emph{left-linear} if every variable occurs
at most once in $\ell$.
Given a set $\Rules$ of rules, the reduction relation
$\arr{\Rules}$ on $\Terms(\F,\V)$ is given by:
\begin{center}
\begin{tabular}{rcll}
$\ell\gamma$ & $\arr{\Rules}$ & $r\gamma$ &
  for any $\ell \arrz r \in \Rules$ and substitution $\gamma$ \\
$s\ t$ & $\arr{\Rules}$ & $s'\ t$ & if $s \arr{\Rules} s'$ \\
$s\ t$ & $\arr{\Rules}$ & $s\ t'$ & if $t \arr{\Rules} t'$ \\
\end{tabular}
\end{center}
\end{definition}

Let $\arrp{\Rules}$ denote the transitive closure of $\arr{\Rules}$
and $\arrr{\Rules}$ the transitive-reflexive closure.
We say that \emph{$s$ reduces to $t$} if $s \arrr{\Rules} t$.
A term $s$ is in \emph{normal form} if there is no $t$ such that $s
\arr{\Rules} t$, and \emph{$t$ is a normal form of $s$} if $s
\arrr{\Rules} t$ and $t$ is in normal form.
An \emph{applicative term rewriting system}, abbreviated \emph{ATRS}
is a pair $(\F,\Rules)$
and its \emph{type order} (or just \emph{order}) is the maximal order
of any type declaration in $\F$.

\begin{example}\label{ex:plusfo}
Let $\F_{\symb{count}} = \F_\bits \cup \{ \symb{succ} : \bits \arrtype
\bits \}$ be the signature from Example~\ref{ex:introsucc}.
We consider the ATRS $(\F_{\symb{count}},\Rules_{\symb{count}})$ with
the following rules:
\[
\begin{array}{crclcrcl}
\mathsf{(A)} & \symb{succ}\ \nil & \arrz & \one\cons\nil \quad &
\mathsf{(B)} & \symb{succ}\ (\nul\cons xs) & \arrz & \one\cons xs \\
& & & &
\mathsf{(C)} & \symb{succ}\ (\one\cons xs) & \arrz & \nul\cons
  (\symb{succ}\ xs) \\
\end{array}
\]
This is a \emph{first-order} ATRS, implementing the successor function
on a binary number expressed as a bit string with the least significant
digit first.  For example, $5$ is represented by $\one\cons\nul\cons
\one\cons\nil$, and indeed $\symb{succ}\ (\one\cons\nul\cons\one\cons
\nil) \arr{\Rules} \nul\cons(\symb{succ}\ (\nul\cons\one\cons\nil))
\arr{\Rules} \nul\cons\one\cons\one\cons\nil$, which represents $6$.
\end{example}

\begin{example}\label{ex:plusho}
We may also define counting as an operation on \emph{functions}.
We let $\F_{\symb{hocount}}$ contain a number of typed
symbols, including $\nul,\one : \bit$,\ $\symb{o} : \symb{nat}$ and
$\symb{s} : \symb{nat} \arrtype \symb{nat}$ as well as
$\symb{set} : (\symb{nat} \arrtype \symb{symb}) \arrtype \symb{nat}
\arrtype \symb{symb} \arrtype \symb{nat} \arrtype \symb{symb}$.
This is a second-order signature with \emph{unary} numbers $\symb{o},
\symb{s}\ \symb{o}, \symb{s}\ (\symb{s}\ \symb{o}), \dots$,
which allows us to represent the
bit strings from before as functions in $\symb{nat} \arrtype \bit$: 
a bit string $b_0\dots b_{n-1}$ corresponds to a function which
reduces $\symb{s}^i\ \symb{o}$ to $b_i$ for $0 \leq i < n$ and to
$\nul$ for $i \geq n$.
Let $\Rules_{\symb{hocount}}$ consist of the rules below;
types can be derived from context.  The
successor of a ``bit string'' $F$ is given by $\symb{fsucc}\ F\ 
\symb{o}$.
\[
\begin{array}{crclcrcl}
(\mathsf{D}) & \symb{ifeq}\ \symb{o}\ \symb{o}\ x\ y & \arrz & x &
(\mathsf{M}) & \symb{neg}\ \ & \arrz & \one \\
(\mathsf{E}) & \symb{ifeq}\ (\symb{s}\ n)\ \symb{o}\ x\ y & \arrz & y &
(\mathsf{N}) & \symb{neg}\ \one & \arrz & \nul \\
(\mathsf{F}) & \symb{ifeq}\ \symb{o}\ (\symb{s}\ m)\ x\ y & \arrz & y &
(\mathsf{O}) & \symb{nul}\ n & \arrz & \nul \\
(\mathsf{G}) & \symb{ifeq}\ (\symb{s}\ n)\ (\symb{s}\ m)\ x\ y &
  \arrz & \symb{ifeq}\ n\ m\ x\ y \\
(\mathsf{H}) & \symb{set}\ F\ n\ x\ m & \arrz &
  \symb{ifeq}\ n\ m\ x\ (F\ m) \\
(\mathsf{I}) & \symb{flip}\ F\ n & \arrz & \symb{set}\ F\ n\ 
  (\symb{neg}\ (F\ n)) \\
(\mathsf{J}) & \symb{fsucc}\ F\ n & \arrz & \symb{fsucchelp}\ (F\ n)\ 
  (\symb{flip}\ F\ n)\ n \\
(\mathsf{K}) & \symb{fsucchelp}\ \nul\ F\ n & \arrz & F \\
(\mathsf{L}) & \symb{fsucchelp}\ \one\ F\ n & \arrz & \symb{fsucc}\ F\ 
  (\symb{s}\ n) \\
\end{array}
\]
Rules (\textsf{I})--(\textsf{L}) have a functional type
$\symb{nat} \arrtype \symb{symb}$.
The function $\symb{nul}$ represents bit strings $0\dots 0$, and
if $F$ represents $b_0\dots b_{n-1}$ then 
$\symb{set}\ F\ (\symb{s}^i\ \symb{o})\ x$ represents
$b_0\dots b_{i-1} x b_{i+1}\dots b_{n-1}$.
The number $5$ 
is for instance
represented by $t := \symb{set}\ (\symb{set}\ \symb{nul}\ \symb{o}\ 
\one)\ (\symb{s}^2\ \symb{o})\ \one$.
We easily see that (**) $t\ \symb{o} \arrr{\Rules} \one$ and $t\ 
(\symb{s}\ \symb{o}) \arrr{\Rules} \nul$.
Intuitively, $\symb{fsucc}$ operates on $\one \dots \one\nul b_{i+1}
\dots b_{n-1}$ by flipping bits until some $\nul$ is encountered,
giving $\nul\dots\nul\one b_{i+1}\dots b_{n-1}$.
Using (**), $\symb{fsucc}\ t\ \symb{o} \arr{\Rules}
\symb{fsucchelp}\ (t\ \symb{o})\ (\symb{flip}\ t\ \symb{o})\ \symb{o}
\arrr{\Rules}
\symb{fsucchelp}\ \one\ (\symb{set}\ t\ \symb{o}\ (\symb{neg}\ \one))\ 
\symb{o} \arrr{\Rules}
\symb{fsucc}\ (\symb{set}\ t\ \symb{o}\ \nul)\ (\symb{s}\ \symb{o})
\linebreak\arrr{\Rules}
\symb{fsucchelp}\ \nul\ (\symb{set}\ (\symb{set}\ t\ \symb{o}\ \nul)\ 
(\symb{s}\ \symb{o})\ \one)\ (\symb{s}\ \symb{o})
\arr{\Rules}
\symb{set}\ (\symb{set}\ t\ \symb{o}\ \nul)\ (\symb{s}\ \symb{o})\ 
\one$; writing $u$ for this term, we can confirm that $u\ 
(\symb{s}^i\ \symb{o}) \arrr{\Rules} \one$ if only if $i=1$ or $i=2$:
$u$ represents $6$.
\end{example}


For the problems we will consider, a key notion is that of \emph{data
terms}.

\begin{definition}\label{def:data}
We fix a partitioning of $\F$ into two disjoint sets, $\Defineds$ of
\emph{defined symbols} and $\Constructors$ of \emph{constructor
symbols}, such that $\identifier{f} \in \Defineds$ for all
$\identifier{f}\ \vec{\ell} \arrz r \in \Rules$.
A term $\ell$ is a \emph{pattern} if (a) $\ell$ is a variable, or
(b) $\ell = \apps{\identifier{c}}{\ell_1}{\ell_m}$ with $\identifier{
c} : \atype_1 \arrtype \dots \arrtype \atype_m \arrtype \asort \in
\Constructors$ for $\asort \in \Sorts$ and all $\ell_i$ patterns.
A \emph{data term} is a pattern without variables,
and the set of all data terms is denoted $\Data$.
A term $\apps{\identifier{f}}{\ell_1}{\ell_n}$ of base type, with
$\identifier{f} \in \Defineds$ and all $\ell_i \in \Data$ data terms
is called a \emph{basic term}.
Note that all non-variable patterns---which includes all
data terms---also have base type.
\end{definition}

We will particularly consider \emph{left-linear constructor
rewriting systems}.

\begin{definition}
A \emph{constructor rewriting system} is an ATRS such that all rules
have the form $\apps{\identifier{f}}{\ell_1}{\ell_k} \arrz r$
with $\identifier{f} \in \Defineds$ and all $\ell_i$ patterns.
It is \emph{left-linear} if all rules are left-linear.
\end{definition}

Left-linear constructor rewriting systems are very common in the
literature on term rewriting.  The higher-order extension of
\emph{patterns} where the first-order definition merely requires
constructor terms corresponds to the typical restrictions in
functional programming languages, where constructors must be fully
applied.
However, unlike functional programming languages, we allow for
overlapping rules, and do not impose an evaluation strategy.

\begin{example}
The ATRSs from Examples~\ref{ex:plusfo} and~\ref{ex:plusho} are
left-linear 
constructor rewriting systems.  In
Example~\ref{ex:plusfo}, $\Constructors = \F_\bits$ and $\Defineds =
\{\symb{succ}\}$.  If a rule $\nul\cons\nil \arrz \nil$ were added
to $\Rules_{\symb{count}}$, it would no longer be a constructor
rewriting system as this would force $\cons$ to be in $\Defineds$,
conflicting with rules $\mathsf{(B)}$ and $\mathsf{(C)}$.
A rule such as $\symb{equal}\ n\ n \arrz \one$ would break
left-linearity.
\end{example}

\subsection{Deciding problems using rewriting}\label{subsec:deciderewrite}

Like Turing Machines, an ATRS can decide a set $S \subseteq I^+$
(where $I$ is a finite set of symbols).
Consider ATRSs with a signature $\F = \Constructors_I \cup \Defineds$
where $\Constructors_I = \{\nil : \bits,\ \cons : \symbs \arrtype \bits
\arrtype \bits,\strue : \bool,\sfalse : \bool \} \cup \{ \symb{a} :
\symbs \mid a \in I\}$.
There is an obvious correspondence between elements of $I^+$ and data
terms of sort $\bits$; if $x \in I^+$, we write $\encode{x}$ for the
corresponding data term.

\begin{definition}\label{def:acceptance_ATRS}
An ATRS \emph{accepts} $S \subseteq I^+$ if there is a designated
defined symbol $\symb{decide} : \bits \arrtype \bool$ such that, for
every $x \in I^+$ we have $\symb{decide}\ \encode{x} \arrr{\Rules}
\symb{true}$ if{f} $x \in S$.
The ATRS \emph{decides} $S$ if moreover
$\symb{decide}\ \encode{x} \arrr{\Rules} \sfalse$ if{f} $x \notin S$.
\end{definition}

While Jones considered programs \emph{deciding} decision
problems, in this paper we will consider \emph{acceptance}---a
property reminiscent of the acceptance criterion of non-deterministic
Turing machines---because term rewriting is inherently
non-deterministic unless further constraints (e.g., orthogonality) are
imposed. Thus, an input $x$ is ``rejected'' by a rewriting system if
there is no reduction to $\symb{true}$ from
$\symb{decide}\ \encode{x}$.  As evaluation is non-deterministic,
there may be many distinct reductions starting from
$\symb{decide}\ \encode{x}$.

With an eye on future extensions in \emph{functional}
complexity---where the computational complexity of functions, rather
than sets, is considered---our definitions and lemmas will
more generally consider programs which reduce an arbitrary
\emph{basic term} to a \emph{data term}.
However, our main theorems consider only programs with
main symbol $\symb{decide} : \bits \arrtype \bool$.

\section{Cons-free rewriting}\label{sec:limitations}

As we aim
to find groups of programs which
can handle \emph{restricted} classes of Turing-computable problems, we
will impose certain limitations.  We limit interest to the left-linear
constructor TRSs from \secshort\ref{sec:atrss}, but
impose the additional restriction
that they must be \emph{cons-free}.

\begin{definition}\label{def:consfree}
A rule $\ell \arrz r$ is cons-free if for all $r \suptermeq s$: if
$s$ has the form $\apps{\identifier{c}}{s_1}{s_n}$ with $\identifier{
c} \in \Constructors$, then $s \in \Data$ or $\ell \supterm s$.
A left-linear 
constructor ATRS is cons-free if all its
rules are.
\end{definition}

Definition~\ref{def:consfree} corresponds largely to the definitions of
cons-freeness in~\cite{car:sim:14,jon:01}.  In a cons-free
system, it is not possible to build new non-constant data,
as we will see in \secshort\ref{subsec:properties}.

\begin{example}\label{ex:plushoconsfree}
The ATRSs from Examples~\ref{ex:plusfo} and~\ref{ex:plusho} are not
cons-free; in the first case due to rules $\mathsf{(B)}$ and
$\mathsf{(C)}$, in the second due to rule $\mathsf{(F)}$.  To some
extent, we can repair the second case, however: by counting
\emph{down} rather than \emph{up}.
To be exact, we let $n$ be a \emph{fixed} number, assume
that $\symb{s}^n\ \nul$ is given as input to the ATRS, and represent
a number as a finite bitstring $b_0\dots b_{n-1}$ with the most
significant digit first---in contrast to Example~\ref{ex:plusho},
where we used essentially infinite bitstrings $b_0\dots b_{n-1}
\nul\nul\nul\dots$ with the \emph{least} significant digit first.

We can reuse most of the previous rules, but replace the
(non-cons-free) rule $\mathsf{(L)}$ by:
\[
\begin{array}{crclccrcl}
\mathsf{(L.1)} & \symb{fsucchelp}\ \one\ F\ \symb{o} & \arrz & F & \quad &
\mathsf{(L.2)} & \symb{fsucchelp}\ \one\ F\ (\symb{s}\ n) & \arrz &
  \symb{fsucc}\ F\ n \\
\end{array}
\]
Now a function $F$ represents $b_0\dots b_{n-1}$ if $F$
reduces $\symb{s}^i\ \symb{o}$ to $b_i$ for $0 \leq i < n$;
since we only consider $n$ bits, $F$ may reduce to anything given data
not of this form.  Then $\symb{fsucc}\ F\ (\symb{s}^n\ \symb{o})$
reduces to a function representing the successor of $F$,
modulo $2^n$ (
$1\dots 1$ is reduced to $0\dots 0$).
\end{example}

\begin{remark}
The limitation to left-linear constructor systems is standard, but
also \emph{necessary}: if either restriction is dropped, our
limitation to cons-free systems becomes meaningless, and we retain a
Turing-complete language.
This will be discussed in detail in \secshort\ref{sec:llcatrs}.

As the first two restrictions are necessary to give meaning to the
third, we will consider the limitation to left-linear constructor
ATRSs implicit in the notion ``cons-free''.
\end{remark}

\subsection{Properties of Cons-free Term Rewriting}
\label{subsec:properties}

As mentioned, cons-free term rewriting cannot create new
non-constant data terms.  This
means that the set of data terms that might occur during a reduction
starting in some basic term $s$ are exactly the data terms occurring
in $s$, or those occurring in the right-hand side of some rule.
Formally:

\begin{definition}\label{def:B}
Let $(\F,\Rules)$ be a fixed constructor ATRS.  For a given term
$s$, the set $\B_s$ contains all data terms $t$ such that (i) $s
\suptermeq t$, or (ii) $r \suptermeq t$ for some rule $\ell \arrz r
\in \Rules$.
\end{definition}

$\B_s$ is a set of data terms, is closed under subterms and, since we
have assumed $\Rules$ to be fixed, has a linear number of elements in
the size of
$s$.  The property that no new data is generated by reducing $s$ is
formally expressed by the following result:

\begin{definition}[$\B$-safety]
Let $\B \subseteq \Data$ be a set which (i) is closed under taking
subterms, and (ii) contains all data terms occurring as a subterm of
the right-hand side of a rule in $\Rules$.  A term $s$ is
\emph{$\B$-safe} if for all $t$ with $s \suptermeq t$: if $t$ has the
form $\apps{\identifier{c}}{t_1}{t_m}$ with $\identifier{c} \in
\Constructors$, then $t \in \B$.
\end{definition}

\begin{lemma}\label{lem:safetypreserve}
If $s$ is $\B$-safe and $s \arr{\Rules} t$, then $t$ is $\B$-safe.
\end{lemma}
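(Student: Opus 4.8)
The plan is to prove the statement by induction on the structure of the reduction step $s \arr{\Rules} t$, following the three clauses of Definition~\ref{def:rules}. Since $\B$-safety is a statement about all subterms of a term, I would first record the easy observation that $\B$-safety is preserved under taking subterms: if $s$ is $\B$-safe and $s \suptermeq s'$, then $s'$ is $\B$-safe, because every subterm of $s'$ is also a subterm of $s$. I would also note the dual closure fact: if $s = s_1\ s_2$, then $s$ is $\B$-safe if and only if both $s_1$ and $s_2$ are $\B$-safe \emph{and}, in the case that the head of $s$ is a constructor applied to enough arguments to have base type, $s$ itself lies in $\B$. The slightly delicate point here is the treatment of the head of an application: because the paper's subterm relation does \emph{not} count $s_1$ as a subterm of $s_1\ s_2$ (only the strict subterms of $s_1$ are subterms of $s_1\ s_2$), I have to be a little careful, but since a constructor term of functional type imposes no membership requirement, and a constructor term of base type is of the form $\apps{\identifier{c}}{t_1}{t_m}$ with all $t_i$ proper subterms, the bookkeeping works out.

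For the inductive argument, the two congruence cases ($s\ t \arr{\Rules} s'\ t$ from $s \arr{\Rules} s'$, and symmetrically) are routine: by the closure fact, $\B$-safety of $s\ t$ gives $\B$-safety of $s$ and of $t$; the induction hypothesis gives $\B$-safety of $s'$; and since the rewrite happens strictly below the root, the head symbol and the membership-in-$\B$ obligation (if any) at the root are unchanged, so $s'\ t$ is $\B$-safe. The base case is the root rewrite $\ell\gamma \arr{\Rules} r\gamma$ for some cons-free rule $\ell \arrz r$ and substitution $\gamma$. Here I want to show: every subterm of $r\gamma$ of the form $\apps{\identifier{c}}{u_1}{u_m}$ with $\identifier{c} \in \Constructors$ lies in $\B$. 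Such a subterm either (a) occurs entirely within the substituted part, i.e. it is a subterm of $x\gamma$ for some $x \in \Var(r) \subseteq \Var(\ell)$, or (b) it has a non-variable ``skeleton'' in $r$, meaning there is a subterm $s_0$ of $r$ with $s_0$ of the form $\apps{\identifier{c}}{s_1}{s_k}$ and $s_0\gamma$ being our term (possibly with the $u_i$ arising partly from $r$ and partly from $\gamma$).

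In case (a), since $x$ occurs in $\ell$ and $\ell\gamma$ is $\B$-safe (being a subterm of the $\B$-safe term $s$), every constructor-headed subterm of $x\gamma$ lies in $\B$, so we are done. In case (b), cons-freeness of the rule applies to $s_0$: either $s_0 \in \Data$, in which case $s_0$ has no variables, so $s_0\gamma = s_0 \in \Data \subseteq \B$ (using that $\B$ is closed under subterms and $\Data$-membership; actually $s_0$ being a data term of base type is automatically in $\B$ by the closure hypotheses on $\B$ — it is a subterm of the right-hand side $r$), or else $\ell \supterm s_0$, i.e. $s_0$ is a strict subterm of $\ell$, hence $s_0\gamma$ is a subterm of $\ell\gamma$, which is $\B$-safe, so again $s_0\gamma \in \B$. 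The main obstacle, and the step that needs the most care, is making the case split (a)/(b) precise — formally, decomposing an arbitrary constructor-headed subterm of $r\gamma$ according to whether its position in $r\gamma$ lies inside a variable-image or not — and correctly handling the ``mixed'' subterms whose head comes from $r$ but some arguments are supplied by $\gamma$; these fall under case (b) via the subterm $s_0$ of $r$ that is the maximal non-variable prefix, and one checks that the cons-freeness condition on $s_0$ still suffices because the arguments coming from $\gamma$ are themselves $\B$-safe by case (a). Everything else is straightforward structural induction.
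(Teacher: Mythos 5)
Your overall strategy matches the paper's: reduce to the root-rewrite case, and there split the constructor-headed subterms of $r\gamma$ into those lying inside a substituted variable (handled by $\B$-safety of $\ell\gamma$, using that the $\ell_i$ are patterns so that $\gamma(x)$ really is a subterm of $\ell\gamma$) and those with a constructor skeleton in $r$ (handled by cons-freeness). However, two of your intermediate steps fail as written, both at exactly the points where the applicative subterm convention bites. First, the forward direction of your ``closure fact'' is false: $\B$-safety of $s_1\ s_2$ does \emph{not} imply $\B$-safety of $s_1$. Take $s_1 = \cons\ \nul$ and $s_2 = \nil$: the term $\nul\cons\nil$ is $\B$-safe whenever $\nul\cons\nil \in \B$, but $s_1$ is a constructor-headed term, is a subterm of itself, and has functional type, so it can never lie in $\B \subseteq \Data$ (data terms have base type) and hence is never $\B$-safe. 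Your justification---that a constructor term of functional type ``imposes no membership requirement''---misreads the definition of $\B$-safety, which demands $t \in \B$ for \emph{every} constructor-headed subterm $t$, irrespective of its type. Since your congruence case applies the induction hypothesis to $s_1$, that step does not go through as stated. The repair is to observe that if the left component of a $\B$-safe application is constructor-headed, then the enclosing base-type application must be in $\B$, hence a data term, hence a normal form, so no reduction can occur inside it and the case is vacuous; the paper sidesteps the issue entirely by inducting on the whole spine $\apps{a}{u_1}{u_p}$ at once, so the induction hypothesis is only ever invoked on the arguments $u_i$, which genuinely are subterms.

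Second, your case split (a)/(b) in the root-rewrite case is not exhaustive. A constructor-headed subterm of $r\gamma$ can also have the form $\apps{\gamma(x)}{(r_1\gamma)}{(r_{n'}\gamma)}$ with $x \in \V$ applied to $n' > 0$ arguments inside $r$: the constructor head comes from $\gamma$, yet the term is not contained in $\gamma(x)$, and there is no constructor-headed prefix of it in $r$, so neither your case (a) nor your case (b) (nor your ``maximal non-variable prefix'' remark, which covers the opposite mixing) applies. The paper treats this case explicitly and shows it is vacuous: $\gamma(x)$ is a subterm of $\ell\gamma$, hence in $\B$, hence a data term of base type, which cannot be applied to further arguments. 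You need to add this case, or argue up front that applied variables in $r$ cannot be instantiated by constructor-headed terms. Both gaps are repairable and the repairs are short, but as submitted the argument is incomplete at these two points.
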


\begin{proof}
By induction on the form of $s$; the result follows trivially by the
induction hypothesis if the reduction does not take place at the head
of $s$, leaving only the base case $s = \apps{\apps{\identifier{f}}{
(\ell_1\gamma)}{(\ell_k\gamma)}}{s_1}{s_n} \arr{\Rules}
\apps{r\gamma}{s_1}{s_n} = t$ for some rule $\apps{\identifier{f}}{
\ell_1}{\ell_k} \arrz r \in \Rules$, substitution $\gamma$ and $n
\geq 0$.
All subterms $u$ of $t$ are (a) subterms of some $s_i$, (b) subterms
of $r\gamma$ or (c) the term $t$ itself, so
suppose $u = \apps{\identifier{c}}{t_1}{t_m}$ with $\identifier{c}
\in \Constructors$ and consider the three possible situations.

In case (a), $u \in \B$ by $\B$-safety of $s$.

In case (b), either $\gamma(x) \suptermeq u$ for some $x$, or $u =
r'\gamma$ for some $r \suptermeq r' \notin \V$.  In the first case,
$x \in \Var(\ell_i)$ for some $i$ and---since $\ell_i$ is a
pattern---a trivial induction on the form of $\ell_i$ shows that
$\ell\gamma \suptermeq \gamma(x) \suptermeq u$, so again $u \in
\B$ by $\B$-safety of $s = \ell\gamma$.  In the second case, if
$r' = \apps{x}{r_1}{r_{n'}}$ with $x \in \V$ and $n > 0$ then $s
\suptermeq \gamma(x)$ as before, so $\gamma(x) \in \Data$ (because
$\gamma(x)$ must have a constructor as its head), which imposes $n
= 0$; contradiction.
Otherwise $r' = \apps{\identifier{c}}{r_1}{r_n}$, so by definition
of cons-freeness, either $u = r' \in \B$ or $s \supterm \ell_i\gamma
\suptermeq r'\gamma = u$.

In case (c), $n = 0$ because, following the analysis above, $r\gamma
\in \B$.
\end{proof}

Thus, if we start with a basic term $\apps{\identifier{f}}{s_1}{s_n}$,
any data terms occurring in a reduction $\identifier{f}\ \vec{s}
\arrr{\Rules} t$ (directly or as subterms) are in $\B_{\identifier{f}
\ \vec{s}}$.
This insight will be instrumental in \secshort\ref{sec:algorithm}.

\begin{example}\label{ex:majority}
By Lemma~\ref{lem:safetypreserve}, functions in a cons-free ATRS
cannot build recursive data.  Therefore it is often necessary to
``code around'' a problem.  Consider the task of finding the most
common bit in a given bit string.
A typical solution employs a rule like $\symb{majority}\ cs \arrz
\symb{cmp}\ (\symb{count0}\ cs)\ (\symb{count1}\ cs)$.
Now, however, we cannot define $\symb{count}$ functions
which may return arbitrary terms of the form $\symb{s}^i\ \symb{o}$.
Instead we use subterms of the input as a measure of size,
representing a number $i$ by a list of length $i$.
\[
\begin{array}{rclcrcl}
\symb{majority}\ cs & \arrz & \symb{count}\ cs\ cs\ cs \\
\symb{count}\ (\nul\cons xs)\ ys\ (b\cons zs) & \arrz &
  \symb{count}\ xs\ ys\ zs & \quad &
\symb{cmp}\ \nil\ zs & \arrz & \one \\
\symb{count}\ (\one\cons xs)\ (b\cons ys)\ zs & \arrz &
  \symb{count}\ xs\ ys\ zs & &
\symb{cmp}\ (y\cons ys)\ \nil & \arrz & \nul \\
\symb{count}\ \nil\ ys\ zs & \arrz & \symb{cmp}\ ys\ zs & &
\symb{cmp}\ (y\cons ys)\ (z\cons zs) & \arrz & \symb{cmp}\ ys\ zs \\
\end{array}
\]
(The signature extends $\F_\bits$, but is otherwise omitted as types
can easily be derived.)
\end{example}

Through cons-freeness, we obtain another useful property: we do not
have to consider constructors which take functional arguments.

\begin{lemma}\label{lem:niceconstructor}
Given a cons-free ATRS $(\F,\Rules)$ with $\F = \Defineds \cup
\Constructors$, let $Y = \{ \identifier{c} : \atype \in \Constructors
\mid \mathit{order}(\atype) > 1\}$.  Define $\F' := \F \setminus Y$,
and let $\Rules'$ consist of those rules in $\Rules$ not using any
element of $Y$ in either left- or right-hand side.  Then
  (a) all data terms and $\B$-safe terms are in $\Terms(\F',\emptyset)$, and
  (b) if $s$ is a basic term and $s \arrr{\Rules} t$, then
  $t \in \Terms(\F',\emptyset)$ and $s \arrr{\Rules'} t$.
\end{lemma}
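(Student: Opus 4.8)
The plan is to prove both parts essentially simultaneously, with part (b) following from part (a) plus a reduction-tracking argument. First I would establish part (a). For data terms this is immediate: a data term is a variable-free pattern, hence of the form $\apps{\identifier{c}}{\ell_1}{\ell_m}$ with each $\ell_i$ again a data term and $\identifier{c}$ a constructor; an easy induction on the structure of the data term shows that every constructor occurring in it has base type (since the whole term and all its proper subterms have base type), so none of these constructors can lie in $Y$. For $\B$-safe terms I would argue similarly: if $s$ is $\B$-safe and $s \suptermeq t$ with $t = \apps{\identifier{c}}{t_1}{t_m}$, $\identifier{c} \in \Constructors$, then $t \in \B$ by definition of $\B$-safety, and since $\B \subseteq \Data$ we just showed $t \in \Terms(\F',\emptyset)$, in particular $\identifier{c} \notin Y$. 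But I also need that \emph{every} symbol occurring anywhere in $s$ avoids $Y$ — so I would additionally note that any occurrence of a constructor $\identifier{c}$ inside a $\B$-safe term $s$ heads some subterm $\apps{\identifier{c}}{t_1}{t_m}$ of $s$ (here using that the relevant notion of subterm never isolates the head of an application, so a constructor symbol always comes with all the arguments it is applied to within $s$), and we have just seen such a subterm lies in $\B \subseteq \Data$ and hence uses no symbol of $Y$.

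Next, part (b). Let $s = \apps{\identifier{f}}{\ell_1}{\ell_n}$ be a basic term, so each $\ell_i$ is a data term; by part (a), $s \in \Terms(\F',\emptyset)$. Now I would recall from the discussion following Lemma~\ref{lem:safetypreserve} that $s$ is $\B_s$-safe (it is $\B$-safe for $\B = \B_s$, which is closed under subterms and contains all data subterms of right-hand sides by Definition~\ref{def:B}), and that by Lemma~\ref{lem:safetypreserve} every term in a reduction $s \arrr{\Rules} t$ is again $\B_s$-safe, hence by part (a) lies in $\Terms(\F',\emptyset)$. In particular the final term $t$ is in $\Terms(\F',\emptyset)$, which is the first claim of (b).

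For the remaining claim, that the whole reduction can be carried out with $\Rules'$, I would proceed by induction on the length of the reduction $s \arrr{\Rules} t$. The only thing to check is that each single step $u \arr{\Rules} u'$ with $u$ a $\B_s$-safe term actually uses a rule of $\Rules'$. Suppose the step contracts a redex $\ell\gamma$ inside $u$ using $\ell \arrz r \in \Rules$. Since $u$ is $\B_s$-safe and $\ell\gamma \subtermeq u$ is contracted at a base-type position, all constructor-headed subterms of $\ell\gamma$ lie in $\B_s \subseteq \Data$, so in particular the constructors appearing in the left-hand-side \emph{pattern} $\ell$ — which, being a constructor pattern in a constructor rewriting system, consists entirely of constructor applications above the variables — all avoid $Y$. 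For the right-hand side $r$: by cons-freeness, every constructor-headed subterm $\apps{\identifier{c}}{r_1}{r_m}$ of $r$ is either already in $\Data$ (hence uses no symbol of $Y$ by part (a)) or is a subterm of $\ell$ (hence its head constructor already occurs in $\ell$, which we just handled). Therefore $\ell \arrz r$ uses no symbol of $Y$ in either side, so $\ell \arrz r \in \Rules'$, and $u \arr{\Rules'} u'$; the induction hypothesis applied to $u' \arrr{\Rules} t$ (note $u'$ is again $\B_s$-safe) finishes the argument.

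I expect the main obstacle to be the bookkeeping about \emph{occurrences} of symbols versus \emph{headed subterms}: the statement "$t \in \Terms(\F',\emptyset)$" is about every symbol that syntactically occurs in $t$, whereas $\B$-safety and cons-freeness are phrased in terms of constructor-\emph{headed} subterms. The key observation making this work is the paper's deliberately restrictive subterm relation ($s_1$ is not a subterm of $s_1\ s_2$), which guarantees that any constructor symbol appearing in a term is the head of a genuine subterm carrying along all of its arguments; this is exactly what lets us transfer the "lies in $\B \subseteq \Data$" conclusion from headed subterms to arbitrary symbol occurrences, and it is worth stating explicitly as the crux of the proof.
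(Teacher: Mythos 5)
Your proof is correct and follows essentially the same route as the paper's (two-sentence) argument: data terms have base type and are closed under subterms, so their constructors have order at most $1$; constructor occurrences in $\B$-safe terms head subterms lying in $\B \subseteq \Data$; and cons-freeness plus pattern matching on $\B$-safe terms (propagated along the reduction via Lemma~\ref{lem:safetypreserve}) ensures only rules of $\Rules'$ can fire. Your explicit identification of the restrictive subterm relation as the reason symbol occurrences can be traded for constructor-headed subterms is exactly the detail the paper leaves implicit.
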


\begin{proof}
Since data terms have base type, and the subterms of data
terms are data terms, we have (a).
Thus $\B$-safe terms can only
be matched by rules in $\Rules'$,
so Lemma~\ref{lem:safetypreserve} gives (b).
\end{proof}
\vspace{-0.75em}
\subsection{A larger example}\label{subsec:sat}

So far, all our examples have been deterministic.
To 
show the possibilities, we
consider a first-order cons-free ATRS that solves the Boolean
satisfiability problem (SAT).  This is striking because, in Jones'
language in~\cite{jon:01}, first-order programs cannot
do this
unless P = NP, even if a non-deterministic $\symb{choose}$
operator is added~\cite{DBLP:conf/amast/Bonfante06}.
The crucial difference is that we, unlike Jones, do not employ a
call-by-value 
strategy.

Given $n$ boolean variables $x_1,\dots,x_n$ and a boolean formula
$\psi ::= \varphi_1 \wedge \dots \wedge \varphi_m$, the satisfiability
problem considers whether there is an assignment of each $x_i$ to
$\top$ or $\bot$ such that $\psi$ evaluates to $\top$.  Here, each
clause $\varphi_i$ has the form $a_{i,1} \vee \dots \vee a_{i,k_i}$,
where each literal $a_{i,j}$ is either some $x_p$ or $\neg x_p$.  We  
represent this decision problem as a string over $I:=\{\nul,\one,
\mathtt{\#},\mathtt{?}\}$:
the formula $\psi$ is represented by $E::=b_{1,1}\dots b_{1,n}\#
b_{2,1}\dotsb_{2,n}\#\dots\# b_{m,1}\dots b_{m,n}\#$, where for each
$i,j$:
$b_{i,j}$ is $\one$ if $x_j$ is a literal in $\varphi_i$,
$b_{i,j}$ is $\nul$ if $\neg x_j$ is a literal in $\varphi_i$, and
$b_{i,j}$ is $\mathtt{?}$ otherwise.

\begin{example}\label{sat:base}
The satisfiability problem for $(x_1 \vee \neg x_2) \wedge (x_2 \vee
\neg x_3)$ is encoded as $E := 10?\#?10\#$.  Encoding this string as
a data term, we obtain $\encode{E} =
\one\cons\nul\cons\q\cons\h\cons\q\cons\one\cons\nul\cons\h\cons\nil$.
\end{example}

Defining $\Constructors_I$ as done in
\secshort\ref{subsec:deciderewrite} and assuming other declarations
clear from context, we claim that the system in Figure~\ref{fig:sat}
can reduce $\symb{decide}\ \encode{E}$ to
$\symb{true}$ if and only if $\psi$ is satisfiable.

\begin{figure}[th]
\texttt{//} Rules using $\unknown{a},\unknown{b}$ stand for several
rules once: $\unknown{a},\unknown{b}$ range over $\{\symb{0},
\symb{1},\symb{?}\}$ (but not $\symb{\#}$).\phantom{ABC}
\[
\begin{array}{rclrcl}
\symb{equal}\ (\h\cons xs)\ (\h\cons ys) & \arrz & \strue \quad &
\symb{equal}\ (\h\cons xs)\ (\unknown{a}\cons ys) & \arrz & \sfalse \\
\symb{equal}\ \nil\ ys & \arrz & \sfalse &
\symb{equal}\ (\unknown{a}\cons xs)\ (\h\cons ys) & \arrz & \sfalse \\
& & & 
\symb{equal}\ (\unknown{a}\cons xs)\ (\unknown{b}\cons ys) & \arrz &
  \symb{equal}\ xs\ ys \\
\symb{either}\ xs\ yss & \arrz & xs &
\symb{skip}\ (\h\cons xs) & \arrz & xs \\
\symb{either}\ xs\ yss & \arrz & yss &
\symb{skip}\ (\unknown{a}\cons xs) & \arrz & \symb{skip}\ xs
\end{array}
\]
\vspace{-4pt}
\[
\begin{array}{rcl}
\symb{decide}\ cs & \arrz & \symb{assign}\ cs\ \nil\ \nil\ cs \\
\symb{assign}\ (\h\cons xs)\ yss\ zss\ cs & \arrz &
  \symb{main}\ yss\ zss\ cs \\
\symb{assign}\ (\unknown{a}\cons xs)\ yss\ zss\ cs & \arrz &
  \symb{assign}\ xs\ (\symb{either}\ xs\ yss)\ zss\ cs \\
\symb{assign}\ (\unknown{a}\cons xs)\ yss\ zss\ cs & \arrz &
  \symb{assign}\ xs\ yss\ (\symb{either}\ xs\ zss)\ cs \\
\symb{main}\ yss\ zss\ (\q\cons xs) & \arrz & \symb{main}\ yss\ zss\ xs \\
\symb{main}\ yss\ zss\ (\nul\cons xs) & \arrz &
  \symb{membtest}\ yss\ zss\ xs\ (\symb{equal}\ zss\ xs)\ (\symb{equal}\ yss\ xs) \\
\symb{main}\ yss\ zss\ (\one\cons xs) & \arrz &
  \symb{membtest}\ yss\ zss\ xs\ (\symb{equal}\ yss\ xs)\ (\symb{equal}\ zss\ xs) \\
\symb{main}\ yss\ zss\ (\h\cons xs) & \arrz & \sfalse \\
\symb{main}\ yss\ zss\ \nil & \arrz & \strue \\
\symb{membtest}\ yss\ zss\ xs\ \strue\ b & \arrz &
  \symb{main}\ yss\ zss\ (\symb{skip}\ xs) \\
\symb{membtest}\ yss\ zss\ xs\ b\ \strue & \arrz & \symb{main}\ yss\ zss\ xs \\
\end{array}
\]
\caption{A cons-free first-order ATRS solving the satisfiability problem.}
\label{fig:sat}
\end{figure}

In this system, we follow some of the same ideas as in
Example~\ref{ex:majority}.  In particular, any list of the form
$b_{i+1}\cons\dots\cons b_n\cons\h\dots$ with each $b_j \in \{\nul,\one,
\q\}$ is considered to represent the number $i$ (with $\h\cons\dots$
representing $n$). The rules for $\symb{equal}$ are defined so that
$\symb{equal}\ s\ t$ tests equality of these \emph{numbers}, not the
full lists.
The key idea new to this example is that we use terms not in normal
form to represent a \emph{set} of numbers.
Fixing $n$,
a set $X \subseteq \{1,\dots,n\}$ is
encoded as a pair $(yss,zss)$ of terms such that, for $i \in \{1,\dots,
n\}$:
  $yss \arrr{\Rules} xs$ for a representation $xs$ of $i$
  if and only if $i \in X$,
and
  $zss \arrr{\Rules} xs$ for a representation $xs$ of $i$
  if and only if $i \notin X$.

These pairs ($yss$,$zss$) are constructed using the symbol
$\mathtt{either}$, which is defined by a pair of overlapping rules:
$\mathtt{either}\ s_1\ (\mathtt{either}\ s_2\ (\dots\ 
(\mathtt{either}\ s_{n-1}\ s_n)\dots))$ reduces to each
$s_i$.
We can use such terms as we do---copying and passing
them around without reducing to normal form---because
we do not use 
call-by-value or similar strategies:
the
ATRS may be evaluated using, e.g., outermost reduction.  While
we \emph{can} use other strategies, any evaluation which
reduces $yss$ or $zss$ too eagerly just ends in an
irreducible, non-data state. 

Now, an evaluation starting in $\symb{decide}\ \encode{E}$ first
non-deterministically constructs a ``set''$X$---represented
as $(yss,zss)$---containing those boolean
variables assigned $\symb{true}$: $\symb{decide}\ \encode{E}
\arrr{\Rules} \symb{main}\ yss\ zss\ \encode{E}$.
Then, the main function goes through $\encode{E}$, finding for each
clause a literal that is satisfied by the assignment.
Encountering $b_{i,j} \neq \symb{?}$, we determine if $j
\in X$ by comparing both a reduct of $yss$ and of $zss$ to $j$.  If
$yss \arrr{\Rules}$ ``$j$'' then $j \in X$, if $zss
\arrr{\Rules}$ ``$j$'' then $j \notin X$; in either case, we continue
accordingly.  If the evaluation state is incorrect, or if $yss$ or
$zss$ 
are both reduced to some other term, the
evaluation gets stuck in a non-data normal form.

Note: variable namings are indicative of their use: in an
evaluation starting in $\symb{decide}\ \encode{E}$, the variables $xs$
and $ys$ are always instantiated by data term lists, and $cs$ by
$\encode{E}$; variables $yss$ and $zss$ are instantiated by terms of
type list which do not need to be in normal form.

\begin{example}\label{ex:sat}
To determine satisfiability of $(x_1 \vee \neg x_2) \wedge (x_2 \vee \neg
x_3)$, we reduce $\symb{decide}\ \encode{E}$, where $E = 10?\#?10\#$.
First, we build a valuation.  The 
$\symb{assign}$ rules are non-deterministic, but a possible reduction is
$\symb{decide}\ \encode{E} \arrr{\Rules}
\symb{main}\ s\ t\ \encode{E}$, where $s =
\symb{either}\ \encode{0?\#?10\#}\ \nil$ and
$t = \symb{either}\ \encode{\#?10\#}\ (\symb{either}\ 
  \encode{?\#?10\#}\ \nil)$.
Since $n = 3$, $\encode{0?\#?10\#}$ represents $1$ while
$\encode{\#?10\#}$ and $\encode{?\#?10\#}$ represent $3$ and $2$
respectively.
Thus, 
we have
$[x_1:=\top,x_2:=\bot,x_3:= \bot]$.

Then the main loop recurses over the problem. Since
$s$ reduces to a term $\encode{0?\#\dots}$ and $t$ to both
$\encode{\#\dots}$ and $\encode{?\#\dots}$ we have
$\symb{main}\ s\ t\ \encode{E} = \symb{main}\ s\ t\ \encode{10?\#?10\#}
\arrr{\Rules} \symb{main}\ s\ t\ (\symb{skip}\ \encode{10\#?10\#})
\linebreak\arrr{\Rules} \symb{main}\ s\ t\ \encode{?10\#}$:
the first clause is confirmed since 
$x_1:=\top$, so it
is removed and the loop continues with the second clause.
Next, the loop passes over those variables whose assignment does not
contribute to 
the clause, until the clause is confirmed due
to $x_3$:
$\symb{main}\ s\ t\ \encode{?01\#} \arr{\Rules}
\symb{main}\ s\ t\ \encode{01\#} \arrr{\Rules}
\symb{main}\ s\ t\ \encode{1\#} \arrr{\Rules}
\symb{main}\ s\ t \ (\symb{skip}\ \encode{\#}) \arr{\Rules}
\symb{main}\ s\ t\ \nil \arr{\Rules} \strue$.
\end{example}

Due to non-determinism, the term in Example~\ref{ex:sat} could also
have been reduced to $\sfalse$, 
by selecting a different
valuation.  This is not problematic: by definition, the ATRS
accepts the set of satisfiable formulas if: $\symb{decide}\ \encode{E}
\arrr{\Rules} \strue$ iff $E$ is a satisfiable formula.

\section{Simulating $\etime{k}$ Turing machines}\label{sec:counting}

We now show how to simulate Turing Machines by cons-free rewriting.  For this, we use
an approach very similar to that by Jones~\cite{jon:01}.
Fixing a machine $(I,A,S,T)$, we let $\Constructors := \Constructors_A
\cup \{ \symb{s} : \symb{state} \mid s \in S \} \cup \{ \symb{fail} :
\symb{state},\ \symb{L} : \symb{direction},\ \symb{R} :
\symb{direction},\ \symb{action} : \symbs \arrtype \symb{direction}
\arrtype \symb{state} \arrtype \symb{trans} \}$; we denote $\symb{B}$
for the symbol corresponding to $\blank \in A$.
We will introduce defined symbols and rules such that, for any
string $E = c_1\dots c_n \in I^+$:
\begin{itemize}
\item $\symb{decide}\ \encode{E} \arrr{\Rules} \strue$ if{f}
  $(\blank c_1 \dots c_n\blank\blank\dots,0,\symb{start}) \arrzt^*
  (t,i,\symb{accept})$ for some $t,i$;
\item $\symb{decide}\ \encode{E} \arrr{\Rules} \sfalse$ if{f}
  $(\blank c_1 \dots c_n\blank\blank\dots,0,\symb{start}) \arrzt^*
  (t,i,\symb{reject})$ for some $t,i$.
\end{itemize}
While $\symb{decide}\ \encode{E}$ may have other normal forms,
only one normal form will be a data term.

\subsection{Core simulation}

The idea of the simulation is to represent non-negative integers
as terms and let $\symb{tape}\ n\ p$ reduce to the symbol at position
$p$ on the tape at the start of the $n^{\text{th}}$ step, while
$\symb{state}\ n\ p$ returns the state of the machine at time $n$,
\emph{provided the tape reading head is at position $p$}.
If the reading head is not at position $p$ at time $n$, then $\symb{state}\ 
n\ p$ should return $\symb{fail}$ instead; this allows us
to test the position of the reading head. 
As the machine is deterministic, we can devise rules to compute these
terms from earlier configurations.

Finding a suitable representation of integers
is the most intricate part of this simulation,
where we may need higher-order functions and non-deterministic
rules.  Therefore, let us first assume that this can be done.
Then, for a Turing machine which is known to run in time bounded
above by $\lambda n.P(n)$, we define the ATRS in Figure~\ref{fig:TM}
(further elaboration is given as ``comments'' in the ATRS).
As before, the rules are constructed such that, in an evaluation of
$\symb{decide}\ \encode{E}$, the variable $cs$ can always be assumed
to be instantiated by $\encode{E}$.

\begin{figure}[bp]
\[
\begin{array}{rcl}
\multicolumn{3}{l}{
  //\ \text{Determine the transition taken at time}\ \numrep{n}\ 
  \text{given input}\ cs\text{, \emph{provided} the tape}
} \\
\multicolumn{3}{l}{
  //\ \text{reading head is at position}\ \numrep{p}\ \text{at time}
  \ \numrep{n}\text{; if not, reduce to}\ \symb{NA}\ \text{instead.}
} \\
\symb{transition}\ cs\ \numrep{n}\ \numrep{p} & \arrz &
  \symb{transitionhelp}\ (\symb{state}\ cs\ \numrep{n}\ \numrep{p})\ 
                         (\symb{tape}\ cs\ \numrep{n}\ \numrep{p}) \\
\vspace{-4pt}
\symb{transitionhelp}\ \symb{fail}\ x & \arrz & \symb{NA} \\
\symb{transitionhelp}\ \unknown{s}\ \unknown{r} & \arrz &
  \symb{action}\ \unknown{w}\ \unknown{d}\ \unknown{t}\hfill
  \llbracket\text{for all}\ \transition{\unknown{s}}{\unknown{r}}{
  \unknown{w}}{\unknown{d}}{\unknown{t}} \in T\rrbracket \\
\symb{transitionhelp}\ \unknown{s}\ x & \arrz & \symb{end}\ 
  \unknown{s} \hfill\llbracket\text{for}\ \unknown{s} \in
  \{\symb{accept},\symb{reject}\}\rrbracket \\
\end{array}
\]
\[
\begin{array}{rcl}
\multicolumn{3}{l}{
  //\ \text{Determine the state at time}\ \numrep{n}\ \text{given\ 
  input}\ cs\text{, \emph{provided} the tape reading head}
} \\
\multicolumn{3}{l}{
  //\ \text{is at position}\ \numrep{p}\ \text{at time}\ \numrep{n}\ 
  \text{(which happens if it is at position}\ \numrep{p-1},\ 
  \numrep{p}\ \text{or}
} \\
\multicolumn{3}{l}{
  //\ \numrep{p+1}\ \text{at time}\ 
  \numrep{n-1}\ \text{and the right action is taken); if not, reduce
  to}\ \symb{fail}\text{.}
} \\
\symb{state}\ cs\ \numrep{n}\ \numrep{p} & \arrz &
  \symb{ifelse}_{\symb{state}}\ \numrep{n=0}\ 
  (\symb{state0}\ cs\ \numrep{p})\ 
  (\symb{statex}\ cs\ \numrep{n-1}\ \numrep{p}) \\
\symb{state0}\ cs\ \numrep{p} & \arrz &
  \symb{ifelse}_{\symb{state}}\ \numrep{p=0}\ \symb{start}\ 
  \symb{fail} \\
\symb{statex}\ cs\ \numrep{n}\ \numrep{p} & \arrz &
  \symb{statey}\ (\symb{transition}\ cs\ \numrep{n}\ \numrep{p-1})\ 
  (\symb{transition}\ cs\ \numrep{n}\ \numrep{p}) \\
  & & \phantom{\symb{statey}}\ 
  (\symb{transition}\ cs\ \numrep{n}\ \numrep{p+1}) \\
\end{array}
\]
\[
\begin{array}{rclrcl}
\symb{statey}\ (\symb{action}\ x\ \symb{R}\ q)\ a\ e & \arrz & q &
\symb{statey}\ \symb{NA}\ (\symb{action}\ x\ d\ q)\ e & \arrz & 
  \symb{fail} \\
\symb{statey}\ (\symb{action}\ x\ \symb{L}\ q)\ a\ e & \arrz &
  \symb{fail} &
\symb{statey}\ \symb{NA}\ \symb{NA}\ (\symb{action}\ x\ \symb{L}\ q) &
  \arrz & q \\
\symb{statey}\ (\symb{end}\ q)\ a\ e & \arrz & \symb{fail}\ \ \ &
\symb{statey}\ \symb{NA}\ \symb{NA}\ (\symb{action}\ x\ \symb{R}\ q) &
  \arrz & \symb{fail} \\
\symb{statey}\ \symb{NA}\ (\symb{end}\ q)\ e & \arrz & q &
\symb{statey}\ \symb{NA}\ \symb{NA}\ (\symb{end}\ q) & \arrz &
  \symb{fail} \\
\end{array}
\]
\[
\begin{array}{rcl}
\multicolumn{3}{l}{
  //\ \text{Determine the tape symbol at position}\ \numrep{p}\ 
  \text{at time}\ \numrep{n}\ \text{given input}\ cs \text{, which is}
} \\
\multicolumn{3}{l}{
  //\ 
  \symb{tape}\ cs\ \numrep{n-1}\ 
  \numrep{p}\ \text{unless the transition at time}\ \numrep{n-1}\ 
  \text{occurred at position}\ \numrep{p}\text{.}
} \\
\symb{tape}\ cs\ \numrep{n}\ \numrep{p} & \arrz &
  \symb{ifelse}_{\symbs}\ \numrep{n=0}\ 
  (\symb{inputtape}\ cs\ \numrep{p}) \\
  & & \phantom{\symb{ifelse}_{\symbs}\ \numrep{n=0}}\ 
  (\symb{tapex}\ cs\ \numrep{n-1}\ \numrep{p}) \\
\symb{tapex}\ cs\ \numrep{n}\ \numrep{p} & \arrz &
  \symb{tapey}\ cs\ \numrep{n}\ \numrep{p}\ 
  (\symb{transition}\ cs\ \numrep{n}\ \numrep{p}) \\
\symb{tapey}\ cs\ \numrep{n}\ \numrep{p}\ (\symb{action}\ x\ d\ q) &
\arrz & x \\
\symb{tapey}\ cs\ \numrep{n}\ \numrep{p}\ \symb{NA} & \arrz &
  \symb{tape}\ cs\ \numrep{n}\ \numrep{p} \\
\symb{tapey}\ cs\ \numrep{n}\ \numrep{p}\ (\symb{end}\ q) & \arrz &
  \symb{tape}\ cs\ \numrep{n}\ \numrep{p} \\
\symb{inputtape}\ cs\ \numrep{p} & \arrz &
  \symb{ifelse}_{\symbs}\ \numrep{p=0}\ \symb{B}\ 
  (\symb{get}\ cs\ cs\ \numrep{p-1}) \\
\symb{get}\ cs\ \nil\ \numrep{i} & \arrz & \symb{B} \\
\symb{get}\ cs\ (x\cons xs)\ \numrep{i} & \arrz &
  \symb{ifelse}_\symbs\ \numrep{i=0}\ x\ 
  (\symb{get}\ cs\ xs\ \numrep{i-1}) \\
\multicolumn{3}{l}{
  //\ \text{We simulate the TM's outcome by testing whether
  the state at time}\ \numrep{P(|cs|)}\ \text{is}} \\
\multicolumn{3}{l}{
  //\ \,\symb{accept}\ 
  \text{or}\ \symb{reject}\text{, allowing for any reader head
  position in}\ 
  \{\numrep{P(|cs|)},\dots,\numrep{0}\}\text{.}
} \\
\symb{decide}\ cs & \arrz & \symb{findanswer}\ cs\ 
  \symb{fail}\ \numrep{P(|cs|)}\ \numrep{P(|cs|)} \\
\symb{findanswer}\ cs\ \symb{fail}\ \numrep{n}\ \numrep{p}
  & \arrz & \symb{findanswer}\ cs\ (\symb{state}\ cs\ 
  \numrep{n}\ \numrep{p})\ \numrep{n}\ \numrep{p-1} \\
\symb{findanswer}\ cs\ \symb{accept}\ \numrep{n}\ \numrep{p} & \arrz & \strue \\
\symb{teststate}\ cs\ \symb{reject}\ \numrep{n}\ \numrep{p} & \arrz & \sfalse \\
\multicolumn{3}{l}{
  //\ \text{Rules for an if-then-else statement (which is
  not included by default).}
} \\
\symb{ifelse}_{\unknown{\asort}}\ \symb{true}\ y\ z & \arrz & y
\hfill \llbracket\text{for all}\ \unknown{\asort} \in
\{\symb{state},\symbs\}\rrbracket \\
\symb{ifelse}_{\unknown{\asort}}\ \symb{false}\ y\ z & \arrz & z
\hfill \llbracket\text{for all}\ \unknown{\asort} \in
\{\symb{state},\symbs\}\rrbracket \\
\end{array}
\]
\caption{Simulating a deterministic Turing Machine running in
$\lambda x.P(x)$ time.}
\label{fig:TM}
\end{figure}

\subsection{Counting}\label{subsec:counting}

The goal, then, is to represent numbers and define
rules to do four things:
\begin{itemize}
\item calculate $\numrep{P(|cs|)}$ or an overestimation (as the TM
  cannot move from its final state);
\item test whether a ``number'' represents $0$;
\item given $\numrep{n}$, calculate $\numrep{n-1}$,
  \emph{provided $n > 0$}---so it suffices to determine
  $\numrep{\max(n-1,0)}$;
\item given $\numrep{p}$, calculate $\numrep{p+1}$,
  \emph{provided $p+1 \leq P(|cs|)$} as 
  $\symb{transition}\ cs\ \numrep{n}\ \numrep{p}
  \arr{\Rules} \symb{NA}$ when $n < p$ and $\numrep{n}$ never
  increases---so it suffices to determine
  $\numrep{\min(p+1,P(|cs|))}$.
\end{itemize}
These calculations all occur in the right-hand side of a
rule containing the initial input list $cs$ on the left,
which they can therefore use (for instance to recompute $P(|cs|)$).

Rather than representing a number by a single term, we will use
\emph{tuples} of terms (which are not terms themselves, as
ATRSs do not admit pair types).
To illus\-trate this, suppose we represent each number $n$ by
a pair $(n_1,n_2)$.
Then the predecessor and successor function must also be split,
e.g.~$\symb{pred}^1\ cs\ n_1\ n_2 \arrr{\Rules} n_1'$ and
$\symb{pred}^2\ cs\ n_1\ n_2 \arrr{\Rules} n_2'$ for $(n_1',n_2')$
some tuple representing $n-1$.  Thus, for instance the
last $\symb{get}$ rule becomes:
\[
\symb{get}\ cs\ (x\cons xs)\ i_1\ i_2 \arrz \symb{ifelse}_{
  \symb{symb}}\ (\symb{zero}\ i_1\ i_2)\ x\ (\symb{get}\ xs\ 
  (\symb{pred}^1\ cs\ i_1\ i_2)\ 
  (\symb{pred}^2\ cs\ i_1\ i_2)
\]

Following Jones~\cite{jon:01}, we use the notion of a \emph{counting
module} which provides an ATRS with a representation of a counting
function and a means of computing.  Counting  modules can be composed,
making it possible to count to greater numbers. Due to the laxity of
term rewriting, our constructions are technically quite different
from those of~\cite{jon:01}.

\begin{definition}[Counting Module]
Write $\F = \Constructors \cup \Defineds$ for the signature in
Figure~\ref{fig:TM}.  For $P$ a function from $\nats$ to $\nats$, a
$P$-counting module of \emph{order $K$} is a tuple
$C_{\pi} ::= (\vec{\atype},\Sigma,R,\A,\numinterpret{\cdot})$%
---where $\pi$ is the name we use to refer to the counting
module---such that:
\begin{itemize}
\item $\vec{\atype}$ is a sequence of types $\atype_1 \otimes \dots
  \otimes \atype_a$ where each $\atype_i$ has order at most $K-1$;
\item $\Sigma$ is a $K^{\text{th}}$-order signature disjoint from
  $\F$, which contains designated symbols $\symb{zero}_{\pi} :
  \bits \arrtype \atype_1 \arrtype \dots \arrtype \atype_a \arrtype
  \bool$ and, for $1 \leq i \leq a$, symbols
  $\symb{pred}_{\pi}^i,\symb{succ}_{\pi}^i :
  \bits \arrtype \atype_1 \arrtype \dots \arrtype \atype_a \arrtype
  \atype_i$ and $\symb{seed}_{\pi}^i : \bits \arrtype \atype_i$
  (and may contain others);
\item $R$ is a set of cons-free (left-linear constructor-)rules
  $\apps{\identifier{f}}{\ell_1}{\ell_k} \arrz r$ with
  $\identifier{f} \in \Sigma$,
  each $\ell_i \in \Terms(\Constructors,\V)$
  and $r \in \Terms(\Constructors \cup \Sigma,\V)$;
\item for every string $cs \subseteq I^+$, 
  $\A_{cs} \subseteq
  \{ (s_1,\dots,s_a) \in \Terms(\Constructors \cup \Sigma)^a \mid
  s_j : \atype_j$  for $1 \leq j \leq a\}$;
\item for every string $cs$, $\numinterpret{\cdot}_{cs}$
  is a surjective mapping from $\A_{cs}$ to $\{0,\dots,P(|cs|)-1\}$;
\item the following properties on $\A_{cs}$ and
  $\numinterpret{\cdot}_{cs}$ are satisfied:
  \begin{itemize}
  \item $(\symb{seed}^1_{\pi}\ cs,\dots,\symb{seed}^a_{\pi}\ cs) \in
     \A_{cs}$ and $\numinterpret{(\symb{seed}^1_{\pi}\ cs,\dots,
     \symb{seed}^a_{\pi}\ cs)}_{cs} = P(|cs|)-1$;
  \end{itemize}
  and for all $(s_1,\dots,s_a) \in \A_{cs}$ with $\numinterpret{(s_1,
  \dots,s_a)}_{cs} = m$:
  \begin{itemize}
  \item
    $(\symb{pred}^1_{\pi}\ cs\ \vec{s},\dots,\symb{pred}^a_{\pi}\ cs\ 
    \vec{s})$
    and
    $(\symb{succ}^1_{\pi}\ cs\ \vec{s},\dots,\symb{succ}^a_{\pi}\ cs\ 
    \vec{s})$
    are in $\A_{cs}$;
  \item
    $\numinterpret{(\symb{pred}^1_{\pi}\ cs\ \vec{s},\dots,
    \symb{pred}^a_{\pi}\ cs\ \vec{s})}_{cs} = \max(m-1,0)$;
  \item
    $\numinterpret{(\symb{succ}^1_{\pi}\ cs\ \vec{s},\dots,
    \symb{succ}^a_{\pi}\ cs\ \vec{s})}_{cs} = \min(m+1,P(|cs|)-1)$;
  \item $\symb{zero}_{\pi}\ cs\ \vec{s} \arrr{R} \strue$ if{f}
    $m = 0$ and $\symb{zero}_{\pi}\ cs\ \vec{s} \arrr{R} \sfalse$
    if{f} $m > 0$;
  \item if each $s_i \arrr{R} t_i$ and $(t_1,\dots,t_a) \in \A_{cs}$,
    then also $\numinterpret{(t_1,\dots,t_a)}_{cs} = m$.
  \end{itemize}
\end{itemize}
\end{definition}

It is not hard to see how we would use a $P$-counting module in the
ATRS of Figure~\ref{fig:TM}; this results in a $K^{\text{th}}$-order
system for a $K^{\text{th}}$-order module.  Note that number
representations $(s_1,\dots,s_a)$ are not required to be in normal
form: even
if we reduce $\vec{s}$ to some tuple $\vec{t}$, the result of the
$\symb{zero}$ test cannot change from $\strue$ to $\sfalse$ or vice
versa.  As the algorithm relies heavily on these tests, we may
safely assume that terms representing numbers are reduced in a lazy
way---as we did in \secshort\ref{subsec:sat} for the arguments $s$ and
$t$ of $\symb{main}$.

\medskip
To simplify the creation of counting modules, we start by observing
that $\symb{succ}_\pi$ can be expressed in terms of $\symb{seed}_\pi$,
$\symb{pred}_\pi$ and $\symb{zero}_\pi$, as demonstrated in
Figure~\ref{fig:succ} (which also introduces an equality
test, which will turn out to be useful in Lemma~\ref{lem:expQcount}).
In practice, $\symb{succ}_\pi\ cs\ \numrep{n}$
counts down from $\numrep{P(|cs|)-1}$ to some $\numrep{m}$ with
$n = m - 1$.

\begin{figure}[htb]
\vspace{-6pt}
\[
\begin{array}{rcl}
\symb{equal}_\pi\ cs\ n_1 \dots n_a\ m_1 \dots m_a & \arrz &
  \symb{ifelse}_\bool\ (\symb{zero}_\pi\ cs\ \vec{n})\ 
  (\symb{zero}_\pi\ cs\ \vec{m}) \\
  & & \phantom{AB}
  (\ \symb{ifelse}_\bool\ (\symb{zero}_\pi\ cs\ \vec{m})\ 
  \sfalse \\
  & & \phantom{ABCD}
  (\ \symb{equal}_\pi\ cs\ (\symb{pred}^1_\pi\ cs\ \vec{n}) \dots 
  (\symb{pred}^a_\pi\ cs\ \vec{n}) \\
  & & \phantom{ABCD (\ \symb{equal}_\pi\ cs\ }
  (\symb{pred}^1_\pi\ cs\ \vec{m})
  \dots (\symb{pred}^a_\pi\ cs\ \vec{m}) \\
  & & \phantom{ABCD})\ ) \\
\symb{succ}_\pi^i\ cs\ n_1 \dots n_a & \arrz &
  \symb{succ2}_\pi^i\ cs\ n_1 \dots n_a\ (\symb{seed}^1\ cs) \dots
  (\symb{seed}^a\ cs) \\
\symb{succ2}_\pi^i\ cs\ n_1 \dots n_a\ m_1 \dots m_a & \arrz &
  \symb{ifelse}_{\atype_i}\ (\symb{zero}_\pi\ cs\ \vec{m})\ 
  (\symb{seed}^i\ cs)\ (\ \symb{succ3}_\pi^i\ 
  \\ & & \phantom{AB}
  cs\ \vec{n}\ m_i\ (\symb{pred}^1_\pi\ cs\ 
  \vec{m}) \dots (\symb{pred}^a_\pi\ cs\ \vec{m})\ )
  \\
\symb{succ3}_\pi^i\ cs\ n_1 \dots n_a\ m_i\ m_1' \dots m_a' & \arrz &
  \symb{ifelse}_{\atype_i}\ (\symb{equal}_\pi\ cs\ n_1 \dots n_a\ 
  m_1' \dots m_a')\ m_i
  \\ & & \phantom{AB}
  (\ \symb{succ2}_\pi^i\ cs\ n_1 \dots n_a\ m_1' \dots m_a'\ ) \\
\end{array}
\]
\[
\left.
\begin{array}{rcl}
\symb{ifelse}_{\unknown{\btype}}\ \symb{true}\ y\ z & \arrz & y \\
\symb{ifelse}_{\unknown{\btype}}\ \symb{false}\ y\ z & \arrz & z \\
\end{array}
\right\}\ 
\llbracket\text{for}\ \unknown{\btype} \in
\{\bool,\atype_1,\dots,\atype_a\}\rrbracket
\]
\caption{Expressing $\symb{succ}_\pi$ in terms of $\symb{seed}_\pi$,
$\symb{pred}_\pi$ and $\symb{zero}_\pi$.}
\label{fig:succ}
\end{figure}

\begin{remark}\label{rem:nontermination}
Observant readers may notice that the rule for
$\symb{equal}_\pi$ is non-terminating: $\symb{equal}_\pi\ cs\ [0]\ 
[0]$ can be reduced to a term containing $\symb{equal}_\pi\ cs\ [0]\ 
[0]$ as a subterm, as the $\symb{ifelse}$ rules are not prioritised
over other rules.
Following Definition~\ref{def:acceptance_ATRS}, this is unproblematic:
it suffices if there \emph{is} a terminating evaluation from
$\symb{decide}\ \encode{x}$ to $\strue$ if $x \in S$; it is not
necessary for \emph{all} evaluations to terminate.
\end{remark}

\begin{example}\label{ex:countlin}
We design a $(\lambda n.n+1)$-counting module that represents numbers as
(terms reducing to) subterms of the input list $cs$.  Formally, we let
$C_{\mathtt{lin}} := (\symb{list},\Sigma,R,\A,\numinterpret{\cdot})$
where $\A_{cs} = \{ s \in \Terms(\Sigma \cup \Constructors) \mid s :
\symb{list} \wedge s$ has a unique normal form, which is a subterm of
$cs\}$ and $\numinterpret{s}_{cs} =$ the number of $\cons$ operators
in the normal form of $s$. $R$ consists of the rules below along with
the rules in in Figure~\ref{fig:succ}, and $\Sigma$
consists of the defined symbols in $R$.
\[
\begin{array}{rclcrclcrcl}
\symb{seed}^1_{\mathtt{lin}}\ cs & \arrz & cs & \quad &
\symb{pred}^1_{\mathtt{lin}}\ cs\ \nil & \arrz & \nil & &
\symb{zero}^1_{\mathtt{lin}}\ cs\ \nil & \arrz & \strue \\
& & & &
\symb{pred}^1_{\mathtt{lin}}\ cs\ (x\cons xs) & \arrz & xs & \quad &
\symb{zero}^1_{\mathtt{lin}}\ cs\ (x\cons xs) & \arrz & \sfalse \\
\end{array}
\]
\end{example}

The counting module of Example~\ref{ex:countlin} is very simple, but
does not count very high: using it with
Figure~\ref{fig:TM}, we can simulate only machines operating in
$n-1$ steps or fewer.  However, having the linear module as a basis,
we can define \emph{composite} modules to count higher:

\begin{lemma}\label{lem:RQcount}
If there exist a $P$-counting module $C_{\pi}$ and a $Q$-counting module
$C_{\rho}$, both of order at most $K$, then there is a $(\lambda n.P(n)
\cdot Q(n))$-counting module $C_{\pi\cdot\rho}$ of order at most $K$.
\end{lemma}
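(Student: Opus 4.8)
The idea is to represent a number $m < P(|cs|) \cdot Q(|cs|)$ by a pair consisting of a $P$-representation of a "high digit" $d_P$ and a $Q$-representation of a "low digit" $d_Q$, with the convention $m = d_P \cdot Q(|cs|) + d_Q$; equivalently, I count in a mixed-radix system where the least-significant block ranges over $0,\dots,Q(|cs|)-1$ and the most-significant block over $0,\dots,P(|cs|)-1$. Concretely, if $C_\pi$ has type sequence $\vec{\sigma}$ of length $a$ and $C_\rho$ has type sequence $\vec{\tau}$ of length $b$, then $C_{\pi\cdot\rho}$ will have type sequence $\sigma_1 \otimes \dots \otimes \sigma_a \otimes \tau_1 \otimes \dots \otimes \tau_b$ of length $a+b$; the order is $\max$ of the two, hence $\leq K$. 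The signature $\Sigma_{\pi\cdot\rho}$ is the disjoint union of $\Sigma_\pi$ and $\Sigma_\rho$ together with fresh symbols $\symb{zero}_{\pi\cdot\rho}, \symb{pred}^i_{\pi\cdot\rho}, \symb{seed}^i_{\pi\cdot\rho}$ (the $\symb{succ}$ symbols we get for free via Figure~\ref{fig:succ}, which only needs seed, pred, and zero). I set $\A^{\pi\cdot\rho}_{cs} = \A^\pi_{cs} \times \A^\rho_{cs}$ and $\numinterpret{(\vec{s},\vec{t})}_{cs} = \numinterpret{\vec{s}}_\pi \cdot Q(|cs|) + \numinterpret{\vec{t}}_\rho$, which is visibly a surjection onto $\{0,\dots,P(|cs|)Q(|cs|)-1\}$ by uniqueness of mixed-radix representation.

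The rules are the standard odometer: $\symb{seed}^i_{\pi\cdot\rho}$ returns $\symb{seed}^i_\pi\ cs$ for $i \leq a$ and $\symb{seed}^{i-a}_\rho\ cs$ for $i > a$, so the seed represents $(P(|cs|)-1, Q(|cs|)-1)$, which has interpretation $(P(|cs|)-1)Q(|cs|) + Q(|cs|)-1 = P(|cs|)Q(|cs|)-1$ as required. The zero test is $\symb{zero}_{\pi\cdot\rho}\ cs\ \vec{s}\ \vec{t} \arrz \symb{ifelse}_\bool\ (\symb{zero}_\pi\ cs\ \vec{s})\ (\symb{zero}_\rho\ cs\ \vec{t})\ \sfalse$, which reduces to $\strue$ iff both digits are zero iff the whole number is zero. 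The predecessor is where the carry happens: $\symb{pred}_{\pi\cdot\rho}$ should, when the low digit is nonzero, decrement the low digit and leave the high digit alone; when the low digit is zero, it should decrement the high digit and reset the low digit to $Q(|cs|)-1$ (i.e.\ to $\symb{seed}_\rho\ cs$). I implement this with a helper $\symb{predhelp}^i_{\pi\cdot\rho}$ that takes the result of $\symb{zero}_\rho\ cs\ \vec{t}$ as an extra argument and branches via $\symb{ifelse}_{\sigma_i}$ or $\symb{ifelse}_{\tau_{i-a}}$: for the high component ($i \leq a$) it returns $s_i$ if the low digit is nonzero and $\symb{pred}^i_\pi\ cs\ \vec{s}$ otherwise; for the low component ($i > a$) it returns $\symb{pred}^{i-a}_\rho\ cs\ \vec{t}$ if the low digit is nonzero and $\symb{seed}^{i-a}_\rho\ cs$ otherwise. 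One checks $\numinterpret{\symb{pred}_{\pi\cdot\rho}(\vec{s},\vec{t})}_{cs} = \max(m-1,0)$ by the two cases, using the $\max(m-1,0)$ and the value-preservation clauses of $C_\pi$ and $C_\rho$; note that when $m=0$ both digits are zero, $\symb{pred}_\pi$ and $\symb{pred}_\rho$ both fix $0$, and $\symb{seed}_\rho$ resets the low digit to $Q(|cs|)-1 \geq 0$, so the result is $0 \cdot Q(|cs|) + \min(0, Q(|cs|)-1)$—hm, here I must be slightly careful: when $m=0$ I want the result to again represent $0$, so in the low-digit-zero branch I should decrement the high digit with $\symb{pred}_\pi$ (fixing $0$) and set the low digit to $\symb{seed}_\rho$, giving $0\cdot Q(|cs|) + (Q(|cs|)-1) = Q(|cs|)-1 \neq 0$ when $Q(|cs|) > 1$. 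So the correct design is: the carry branch must only reset the low digit when the high digit is \emph{also} being decremented to a genuinely smaller value, i.e.\ I additionally test $\symb{zero}_\pi\ cs\ \vec{s}$, and if \emph{both} digits are zero I leave everything fixed. This is a three-way branch; I'll nest two $\symb{ifelse}$s.

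All rules introduced are cons-free left-linear constructor rules: the left-hand sides use only constructor patterns (indeed $\symb{seed}^i_\pi\ cs$ etc.\ appear only in right-hand sides), right-hand sides contain no constructors other than those already in $C_\pi$, $C_\rho$, or $\symb{ifelse}$, and $cs$ is passed along unchanged—so $r \in \Terms(\Constructors \cup \Sigma_{\pi\cdot\rho}, \V)$ as demanded. The order of every new type declaration is at most $\max$ of the orders in $\Sigma_\pi$ and $\Sigma_\rho$ plus the base-type helpers, hence $\leq K$. The value-preservation-under-reduction clause (``if $s_i \arrr{R} t_i$ and $(\vec t) \in \A_{cs}$ then the interpretation is unchanged'') follows component-wise from the same clauses for $C_\pi$ and $C_\rho$, since $\A^{\pi\cdot\rho}_{cs}$ is a product and the interpretation is a fixed arithmetic function of the two component interpretations. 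The main obstacle—and the only place requiring genuine care—is getting the carry logic in $\symb{pred}_{\pi\cdot\rho}$ exactly right at the boundary $m=0$, i.e.\ ensuring $\max(m-1,0)$ rather than an off-by-one wraparound; this is a finite case analysis on whether each digit is zero, resolved by the nested-$\symb{ifelse}$ design sketched above. Everything else is bookkeeping, and the construction of $\symb{succ}_{\pi\cdot\rho}$ is free from Figure~\ref{fig:succ}.
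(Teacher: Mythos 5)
Your construction is essentially the same as the paper's: represent $i = n\cdot Q(|cs|) + m$ by concatenating a $C_\pi$-tuple for $n$ with a $C_\rho$-tuple for $m$, take $\A^{\pi\cdot\rho}_{cs}$ to be the product of the two representation sets, define seed and zero componentwise, and implement pred with a carry branch keyed on whether the low digit is zero. The one place you diverge --- insisting on a three-way branch so that when \emph{both} digits are zero the predecessor fixes the representation of $0$ rather than resetting the low digit to $\symb{seed}_\rho$ --- is a genuine improvement: the paper's Figure~5 as written always sends the low components to $\symb{seed}_\rho^{i-a}\ cs$ in the carry case, which at total value $0$ yields a representative of $Q(|cs|)-1$ instead of the required $\max(0-1,0)=0$, so your extra $\symb{zero}_\pi$ test is exactly the repair needed to meet the counting-module specification at that boundary.
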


\begin{proof}
Fixing $cs$ and writing $N := P(|cs|)$ and $M :=
Q(|cs|)$, a number $i$ in $\{0,\dots,N \cdot M - 1\}$ can be seen as
a unique pair $(n,m)$ with $0 \leq n < N$ and $0 \leq m < M$, such
that $i = n \cdot M + m$.  Then $\symb{seed}$, $\symb{pred}$ and
$\symb{zero}$ can be expressed using the same
functions on $n$ and $m$.  Write
$C_\pi ::= (\atype_1 \otimes \dots \otimes \atype_a,\Sigma^\pi,
R^\pi,\A^\pi,\numinterpret{\cdot}^\pi)$ and $C_\rho ::= (\btype_1
\otimes \dots \otimes \btype_b,\Sigma^\rho,R^\rho,\A^\rho,
\numinterpret{\cdot}^\rho)$; we assume $\Sigma^\pi$ and $\Sigma^\rho$
are disjoint (wlog by renaming).
Then numbers in $n \in \{0,\dots,
N\}$ are represented in $C_\pi$ by tuples $(u_1,\dots,u_a)$ of length
$a$, and numbers in $m \in \{0,\dots,M\}$ are represented in $C_\rho$
by tuples $(v_1,\dots,v_b)$ of length $b$.  We will represent $n
\cdot M + m$ by $(u_1,\dots,u_a,v_1,\dots,v_b)$.
Formally,
$C_{\pi \cdot \rho} := (\atype_1 \otimes \dots \otimes
\atype_a \otimes \btype_1 \otimes \dots \otimes \btype_b,\Sigma^\pi
\cup \Sigma^\rho \cup \Sigma,R^\pi \cup R^\rho \cup R,\A^{\pi \cdot
\rho},\numinterpret{\cdot}^{\pi \cdot \rho})$, where:
\begin{itemize}
\item $\A^{\pi \cdot \rho} = \{ (u_1,\dots,u_a,v_1,\dots,v_b) \mid
  (u_1,\dots,u_a) \in \A^\pi \wedge (v_1,\dots,v_b) \in \A^\rho \}$,
\item $\numinterpret{(u_1,\dots,u_a,v_1,\dots,v_b)}^{\pi \cdot \rho}_{
  cs} = \numinterpret{(u_1,\dots,u_a)}^\pi_{cs} \cdot Q(|cs|) +
  \numinterpret{(v_1,\dots,v_b)}^\rho_{cs}$,
\item $\Sigma$ consists of the defined symbols in
  $R^\pi \cup R^\rho \cup R$, where $R$ is given by
  Figure~\ref{fig:prod}.
\qedhere
\end{itemize}
\begin{figure}[htb]
\begin{tabular}{ll}
\texttt{//}\hspace{-4pt} & $N \cdot M -1 = (N -1) \cdot M + (M-1)$, which corresponds to
     the pair $(N-1,M-1)$; \phantom{Ab} \\
\texttt{//}\hspace{-4pt}
   & that is, the tuple $(\symb{seed}_\pi^1\ cs,\dots,\symb{seed}_\pi^a\ cs,
    \symb{seed}_\rho^1\ cs,\dots,\symb{seed}_\rho^b\ cs)$.
\end{tabular}
\[
\begin{array}{rcl}
\symb{seed}_{\pi \cdot \rho}^i\ cs & \arrz &
  \symb{seed}_\pi^i\ cs
  \phantom{ABC}\llbracket\text{for}\ 1 \leq i \leq a\rrbracket \\
\symb{seed}_{\pi \cdot \rho}^i\ cs & \arrz & \symb{seed}_{\rho}^{i-a}
  \ cs
  \phantom{AB}\llbracket\text{for}\ a+1 \leq i \leq a+b\rrbracket \\
\end{array}
\]
\begin{tabular}{ll}
\texttt{//}\hspace{-4pt} &
$(n,m)$ represents $0$ if{f} both $n$ and $m$ are $0$.
\phantom{ABCDEFGHIJKLMNOPQRSTUVWXYZ}
\end{tabular}
\[
\begin{array}{rcl}
\symb{zero}_{\pi \cdot \rho}\ cs\ u_1 \dots u_a\ v_1 \dots v_b &
  \arrz & \symb{ifelse}_{\symb{bool}}\ (\symb{zero}_\pi\ 
  cs\ u_1 \dots u_a)\ (\symb{zero}_\rho\ cs\ v_1 \dots v_b) \\
  & & \phantom{\symb{ifelse}_{\symb{bool}}\ (\symb{zero}_\pi\ cs\ u_1 \dots u_a)}\ \sfalse \\
\end{array}
\]
\begin{tabular}{ll}
\texttt{//}\hspace{-4pt} &
$(n,m)-1$ results in $(n,m-1)$ if $m > 0$,
  otherwise in $(n-1,M-1)$.
  \phantom{ABCDEFGHIK}
  \\
\end{tabular}
\[
\begin{array}{rcl}
\symb{pred}^i_{\pi \cdot \rho}\ cs\ u_1 \dots u_a\ v_1\ \dots\ v_b &
  \arrz &
  \symb{ptest}_{\pi \cdot \rho}^i\ cs\ (\symb{zero}_\rho\ v_1 \dots
  v_b)\ u_1\ \dots\ u_a\ v_1 \dots v_b \\
  & & \hfill\llbracket\text{for}\ 1 \leq i \leq a+b\rrbracket \\
\symb{ptest}^i_{\pi \cdot \rho}\ cs\ \symb{false}\ \vec{u}\ \vec{v} &
  \arrz & u_i\hfill\llbracket\text{for}\ 1 \leq i \leq a\rrbracket \\
\symb{ptest}^i_{\pi \cdot \rho}\ cs\ \symb{false}\ \vec{u}\ \vec{v} &
  \arrz & \symb{pred}_\rho^{i-a}\ cs\ \vec{v}
  \hfill\llbracket\text{for}\ a+1 \leq i \leq a+b\rrbracket \\
\symb{ptest}^i_{\pi \cdot \rho}\ cs\ \symb{true}\ \vec{u}\ \vec{v} &
  \arrz & \symb{pred}_\pi^i\ cs\ \vec{u}
  \hfill\llbracket\text{for}\ 1 \leq i \leq a\rrbracket \\
\symb{ptest}^i_{\pi \cdot \rho}\ cs\ \symb{true}\ \vec{u}\ \vec{v} &
  \arrz & \symb{seed}_\rho^{i-a}\ cs\ \vec{v}
  \hfill\llbracket\text{for}\ a+1 \leq i \leq a+b\rrbracket \\
\end{array}
\]
\caption{Rules for the product counting module $C_{\pi\cdot\rho}$
(Lemma~\ref{lem:RQcount})}
\label{fig:prod}
\end{figure}
\end{proof}

Lemma~\ref{lem:RQcount} is powerful because it can be used
iteratively.  Starting from the counting module from
Example~\ref{ex:countlin}, we can thus define a first-order
$(\lambda n.(n+1)^a)$-counting module $C_{\mathtt{lin} \cdots
\mathtt{lin}}$ for any $a$.  To reach yet higher numbers, we follow
the ideas from Example~\ref{ex:plusho} and define counting rules
on binary numbers represented as functional terms $F : \vec{\atype}
\arrtype \bool$.

\begin{lemma}\label{lem:expQcount}
If there is a $P$-counting module $C_{\pi}$ of order $K$, then there
is a $(\lambda n.2^{P(n)})$-counting module $C_{p[\pi]}$ of order
$K+1$.
\end{lemma}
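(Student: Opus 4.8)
The goal is to build, from a $P$-counting module $C_\pi$ of order $K$, a $(\lambda n.2^{P(n)})$-counting module $C_{p[\pi]}$ of order $K+1$. The key idea—already previewed in Example~\ref{ex:plusho} and adjusted for cons-freeness in Example~\ref{ex:plushoconsfree}—is to represent a number $i < 2^{P(|cs|)}$ in binary as a bitstring $b_0 \dots b_{N-1}$ with $N = P(|cs|)$, most significant digit first, and to encode that bitstring as a \emph{function} $F : \atype_1 \arrtype \dots \arrtype \atype_a \arrtype \bool$ which, given (a tuple representing) a position $j < N$, reduces to $b_j$. Since tuples representing positions are supplied by the given module $C_\pi$, and $C_\pi$ has order $\leq K$, the type $\vec{\atype} \arrtype \bool$ has order $\leq K+1$, so the resulting module has order $K+1$ as required. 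The tuple type $\vec{\btype}$ of $C_{p[\pi]}$ will just be the single type $\atype_1 \arrtype \dots \arrtype \atype_a \arrtype \bool$ (so $b = 1$ in the counting-module tuple), and $\A_{cs}$ will consist of those terms of this type that, on every argument tuple in $\A^\pi_{cs}$, reduce to $\nul$ or $\one$ consistently; $\numinterpret{F}_{cs}$ reads off the binary value $\sum_j b_j 2^{N-1-j}$.

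\emph{Carrying it out.} First I would fix the representation and define $\numinterpret{\cdot}_{cs}$ precisely, using $\numinterpret{\cdot}^\pi_{cs}$ to index bit positions; I would check surjectivity onto $\{0,\dots,2^N-1\}$. Next I would define $\symb{seed}_{p[\pi]}\ cs$ to be a term representing $1\dots 1$ (value $2^N-1$): a function $\symb{allone}$ with a rule $\symb{allone}\ cs\ \vec{u} \arrz \one$ does this. The $\symb{zero}_{p[\pi]}$ test must check that $F$ reduces to $\nul$ on \emph{every} position; here I would use the given module's machinery to iterate over positions $N-1, N-2, \dots, 0$ via $\symb{pred}_\pi$ and $\symb{zero}_\pi$, testing $F$ applied to each position tuple, short-circuiting to $\sfalse$ as soon as a $\one$ is found—this is directly analogous to the $\symb{findanswer}$/$\symb{get}$ style loops in Figure~\ref{fig:TM}. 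For $\symb{pred}_{p[\pi]}$ I would adapt the counting-\emph{down} construction of Example~\ref{ex:plushoconsfree}: to decrement $b_0\dots b_{N-1}$, scan from the least significant end ($j = N-1$ downward) flipping $\one$s to $\nul$ until the first $\nul$, which becomes $\one$; concretely the new function is built by wrapping $F$ in a $\symb{set}$-like defined symbol $\symb{flipupto}$ that, on input position $j$, compares $j$ (using $\symb{equal}_\pi$, available from Figure~\ref{fig:succ}) against the cut-off and either returns $F\ j$, its negation, or a recomputed value. Since $\symb{succ}_{p[\pi]}$ is derivable from $\symb{seed}$, $\symb{pred}$, $\symb{zero}$ by Figure~\ref{fig:succ}, I do not need to construct it separately. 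Throughout, all new rules have left-hand sides over $\Terms(\Constructors,\V)$ and right-hand sides introducing no new constructor data, so cons-freeness and the left-linear constructor format are preserved, and the new signature $\Sigma$ is disjoint from $\F$ and from $\Sigma^\pi$ (wlog by renaming) and is $(K{+}1)^{\text{th}}$-order.

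\emph{Verification obligations and the main obstacle.} I would then verify each clause of the counting-module definition: that $\symb{seed}_{p[\pi]}\ cs \in \A_{cs}$ with interpretation $2^N-1$; that $\symb{pred}$ and $\symb{succ}$ map $\A_{cs}$ into $\A_{cs}$ with the correct $\max(m-1,0)$ and $\min(m+1,2^N-1)$ semantics; that $\symb{zero}_{p[\pi]}\ cs\ F$ reduces to $\strue$ iff the value is $0$ and to $\sfalse$ iff it is positive; and—crucially—the \emph{stability} clause: if $F \arrr{R} F'$ and $F' \in \A_{cs}$, then $\numinterpret{F'}_{cs} = \numinterpret{F}_{cs}$. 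The main obstacle is exactly this interaction between non-determinism/laziness and the higher-order encoding: because $F$ need not be in normal form and may contain unevaluated $\symb{either}$-style subterms, and because $C_\pi$'s own representations are only determined ``up to $\A^\pi$'', I must argue that querying $F$ on a position tuple $\vec{s}$ yields a bit depending only on $\numinterpret{\vec s\,}^\pi_{cs}$, not on $\vec s$ itself—leaning on the last clause of $C_\pi$'s specification—and that reducing $F$ cannot change any of these bits, only expose them. The proof that $\symb{pred}_{p[\pi]}$ is correct is the most delicate: one must show the scan genuinely locates the lowest $\nul$ bit and that the rebuilt function agrees with $b_0\dots b_{N-1}$ decremented, including the wrap-around case $0 \mapsto 0$, all while the intermediate terms are manipulated without normalising—so the bulk of the work is a careful induction on the position index mirroring the correctness argument sketched for $\symb{fsucc}$ in Example~\ref{ex:plushoconsfree}.
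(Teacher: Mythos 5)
Your proposal follows essentially the same route as the paper's proof: the same order-$(K{+}1)$ representation of a bitstring $b_0\dots b_{N-1}$ (most significant first) as a single term of type $\atype_1 \arrtype \dots \arrtype \atype_a \arrtype \bool$ queried on position tuples from $C_\pi$, the same constant-$\strue$ seed, the same $\symb{zero}$ test iterating positions via $\symb{pred}_\pi$/$\symb{zero}_\pi$, the same wrapper-based bit-flipping predecessor, and the same appeal to Figure~\ref{fig:succ} for $\symb{succ}$ and $\symb{equal}_\pi$. The one slip is that your description of the predecessor scan is inverted: flipping $\one$s to $\nul$ until the first $\nul$, which becomes $\one$, is binary \emph{increment}; decrement must flip trailing $\nul$s to $\one$s until the first $\one$ is reached, which becomes $\nul$ (this is what the paper's $\symb{predhelp}$/$\symb{checkbit}$/$\symb{flip}$ rules implement), so the rules as you describe them would compute the successor rather than the predecessor. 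Since you correctly state the intended semantics ($\max(m-1,0)$, wrap-around $0 \mapsto 0$) and correctly identify the delicate verification obligations (well-definedness of bits under non-normalized reducts and the stability clause), this reads as a transcription error rather than a conceptual gap.
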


\begin{proof}
Write $N := P(|cs|)$ and let $C_\pi = (\atype_1 \otimes
\dots \otimes \atype_a,\Sigma,R,\A,\numinterpret{\cdot
}^\pi)$.
We define the $2^P$-counting module $C_{\symb{p}[\pi]}$ as
$(\atype_1 \arrtype \dots \arrtype \atype_a \arrtype \bool,
\Sigma^{\symb{p}[\pi]},R^{\symb{p}[\pi]},\mathcal{H},
\numinterpret{\cdot}^{\symb{p}[\pi])}$, where:
\begin{itemize}
\item $\mathcal{H}_{cs}$ contains terms $q : \vec{\atype}
  \arrtype \bool$ representing a bitstring $b_0\dots b_{N-1}$ as
  follows: $\apps{q}{s_1}{s_n}$ reduces to $\strue$ if $(s_1,\dots,s_n)$
  represents a number $i$ in $C_\pi$ such
  that $b_i = \one$ and to $\sfalse$ if it represents $i$ with $b_i =
  \nul$.
  Formally, $\mathcal{H}_{cs}$ is the set of all $q \in
  \Terms(\Sigma^{\symb{p}[\pi]} \cup \Constructors,\emptyset)$ of type
  $\atype_1 \arrtype \dots \arrtype \atype_a \arrtype \bool$, where:
  \begin{itemize}
  \item for all $(s_1,\dots,s_a) \in \A_{cs}$: $\apps{q}{s_1}{s_a}$
    reduces to $\strue$ or $\sfalse$, but not both;
  \item for all $(s_1,\dots,s_a),(t_1,\dots,t_a) \in \A_{cs}$: if
    $\numinterpret{(\vec{s})}^\pi_{cs} = \numinterpret{(\vec{t})
    }^\pi_{cs}$---so they represent the same number $i$---%
    then $\apps{q}{s_1}{s_a}$ and $\apps{q}{t_1}{t_a}$
    reduce to the same boolean value.
  \end{itemize}
  For $q \in \mathcal{H}_{cs}$ and $i < N$, we can thus say
  either $q \cdot \numrep{i} \arrr{R^{\symb{p}[\pi]}} \strue$ or $q
  \cdot
  \numrep{i} \arrr{R^{\symb{p}[\pi]}} \sfalse$.
\item Let
  $\numinterpret{q}^{\symb{p}[\pi]}_{cs} = \sum_{i=0}^{N-1} \{ 2^{
  N-i-1} \mid \apps{q}{s_1}{s_a} \arrr{R} \strue$ for some
  $(s_1,\dots,s_a)$ with $\numinterpret{(s_1,\dots,s_a)}^\pi_{cs} =
  i \}$.  That is, $q$ represents the number given by the
  bitstring $b_0\dots b_N$
  with $b_N$ the least significant digit (where $b_i = 1$ if and only
  if $q \cdot \numrep{i} \arrr{R^{\symb{p}[\pi]}}
  \strue$).
\item $\Sigma^{\symb{p}[\pi]} = \Sigma \cup \Sigma'$ and
  $R^{\symb{p}[\pi]} = R \cup R'$, where $\Sigma'$ contains all new
  symbols in $R'$, and $R'$ contains the rules below along
  with rules for $\symb{equal}_{\pi}$ and
  $\symb{succ}_{\symb{p}[\pi]}$ following Figure~\ref{fig:succ}.
\end{itemize}
\texttt{//} $\symb{seed}\ cs$ results in a bitstring
that is $1$ at all bits.  We let $\symb{seed}_{\symb{p}[\pi]}\ cs$ be a
normal form: \\
\texttt{//} a term of type $\atype_1 \arrtype \dots \arrtype
\atype_a \arrtype \bool$ which maps all $\numrep{i}$ to $\strue$.
\[
\begin{array}{rcl}
\symb{seed}_{\symb{p}[\pi]}\ cs\ k_1 \dots k_a & \arrz & \strue \\
\end{array}
\]
\texttt{//} A bitstring represents $0$ if all its bits are set to $0$.
To test this, we count down in $C_\pi$ and \\
\texttt{//} evaluate $F\ \numrep{N-1},\ F\ \numrep{N-2},\ \dots,\ F\ 
\numrep{0}$ to see whether any results in $\sfalse$.
\[
\begin{array}{rcl}
\symb{zero}_{\symb{p}[\pi]}\ cs\ F & \arrz & \symb{zero'}_{\symb{p}[
  \pi]}\ cs\ (\symb{seed}^1_\pi\ cs) \dots (\symb{seed}^a_\pi\ cs)\ F \\
\symb{zero'}_{\symb{p}[\pi]}\ cs\ k_1 \dots k_a\ F & \arrz &
  \symb{ifelse}_\bool\ (\apps{F}{k_1}{k_a})\ \sfalse \\
  & & \phantom{AB}(\ \symb{ifelse}_\bool\ (\symb{zero}_\pi\ cs\ k_1 \dots
  k_a)\ \strue \\
  & & \phantom{ABCD}(\ \symb{zero'}_{\symb{p}[\pi]}\ cs\ (\symb{pred
  }^1_\pi\ cs\ \vec{k}) \dots (\symb{pred}^a_\pi\ cs\ \vec{k})\ F\ ) \\
  & & \phantom{AB}) \\
\end{array}
\]
\texttt{//} The predecessor function follows a similar
approach to Examples~\ref{ex:plusho} and~\ref{ex:plushoconsfree}: we
flip $b_i$ \\
\texttt{//} for $i=N-1,N-2,\dots$ until $b_i = \one$
(thus replacing $b_0\dots b_{i-1}\one \nul\dots \nul$
by $b_0\dots b_{i-1}\nul\one\dots\one$).
\begingroup
\addtolength{\jot}{-3.5pt}
\begin{align*}
\symb{pred}_{\symb{p}[\pi]}\ cs\ F & \quad\arrz\quad
  \symb{predtest}_{\symb{p}[\pi]}\ cs\ (\symb{zero}_{\symb{p}[\pi]}\ 
  F)\ cs\ F \\
\symb{predtest}_{\symb{p}[\pi]}\ cs\ \strue\ F & \quad\arrz\quad F \\
\symb{predtest}_{\symb{p}[\pi]}\ cs\ \sfalse\ F & \quad\arrz\quad
  \symb{predhelp}_{\symb{p}[\pi]}\ cs\ F\ (\symb{seed}^1_\pi\ cs)
  \dots (\symb{seed}^a_\pi\ cs) \\
\symb{predhelp}_{\symb{p}[\pi]}\ cs\ F\ \vec{k} & \quad\arrz\quad
  \symb{checkbit}_{\symb{p}[\pi]}\ cs\ (F\ \vec{k})\ 
  (\symb{flip}_{\symb{p}[\pi]}\ cs\ F\ \vec{k}
  )\ \vec{k} \\
\symb{checkbit}_{\symb{p}[\pi]}\ cs\ \strue\ F\ \vec{k} & \quad\arrz\quad F \\
\symb{checkbit}_{\symb{p}[\pi]}\ cs\ \sfalse\ F\ \vec{k} & \quad\arrz\quad
  \symb{predhelp}_{\symb{p}[\pi]}\ cs\ F\ (\symb{pred}^1_\pi\ cs\ 
  \vec{k}) \dots (\symb{pred}^a_\pi\ cs\ \vec{k}) \\
\symb{flip}_{\symb{p}[\pi]}\ cs\ F\ \vec{k}\ \vec{n} & \quad\arrz\quad
  \symb{ifelse}_\bool\ (\symb{equal}_\pi\ cs\ \vec{k}\ \vec{n})\ 
  (\symb{not}\ (F\ \vec{n}))\ (F\ \vec{n}) \\
\phantom{X_{p[f]}}\symb{not}\ \strue & \quad\arrz\quad \sfalse \\
\phantom{X_{p[f]}}\symb{not}\ \sfalse & \quad\arrz\quad \strue \tag*{\qEd}
\end{align*}
\endgroup
\def\popQED{} 
\end{proof}

Combining Example~\ref{ex:countlin} with
Lemmas~\ref{lem:RQcount} and~\ref{lem:expQcount}, we can define a
$(\lambda n.\exp_2^{K-1}((n+1)^b))$-counting module
$C_{\symb{p}[\dots[\symb{p}[\symb{lin} \cdots \symb{lin}]]\dots]}$
of type order $K$ for any $K,b \geq 1$.
As the ATRSs of Figure~\ref{fig:TM} and the modules are all
non-overlapping, we thus recover one side of Jones' result: any
problem in $\exptime{K-1}$ is decided using a deterministic
$K^{\text{th}}$-order cons-free ATRS.

\begin{remark}
The construction used here largely follows the one in~\cite{jon:01}.
Differences mostly center around the different formalisms: on the one
hand Jones' language did not support pattern matching or constructors
like $\symb{action}$; on the other, we had to code around the lack of
pairs.  Our notion of a counting module is more complex---restricting
the way tuples of terms may be reduced---to support the
non-deterministic modules we will consider below.
\end{remark}

\subsection{Counting higher}
In ATRSs, we can do better than merely translating Jones'
result. By exploiting non-determinism much like we did in
\secshort\ref{subsec:sat}, we can count up to $2^{n+1}-1$ using only
a first-order ATRS, and obtain the jump in expressivity promised in
the introduction.

\begin{lemma}\label{lem:mainmodule}
There is a first-order ($\lambda n.2^{n+1}$)-counting module.
\end{lemma}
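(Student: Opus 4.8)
The plan is to represent a number $i \in \{0,\dots,2^{n+1}-1\}$ as an $(n+1)$-bit string, but — crucially, exploiting non-determinism in the spirit of \secshort\ref{subsec:sat} — to store that bit string implicitly via a pair of "set" terms rather than as a single normalising term. Concretely, I would use $C_{\mathtt{lin}}$ from Example~\ref{ex:countlin} to index bit positions: a position is a (term reducing to a) suffix of $cs$, so there are exactly $n+1$ of them, with the empty suffix $\nil$ being the most significant bit. A number will then be represented by a tuple $(\mathit{yss},\mathit{zss})$ of type $\mathtt{list} \otimes \mathtt{list}$, where $\mathit{yss}$ non-deterministically reduces to exactly those suffixes $xs$ with bit $1$ at the corresponding position, and $\mathit{zss}$ to exactly those with bit $0$; this is the same trick as $\mathtt{either}$ in Figure~\ref{fig:sat}. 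So $\A_{cs}$ is the set of pairs $(\mathit{yss},\mathit{zss})$ of type $\mathtt{list}$ such that for every suffix $xs$ of $cs$, exactly one of $\mathit{yss} \arrr{R} xs$ and $\mathit{zss} \arrr{R} xs$ holds (and neither reduces to anything else that is a suffix of $cs$), and $\numinterpret{(\mathit{yss},\mathit{zss})}_{cs}$ is the integer whose bit at position $|xs|$ is $1$ iff $\mathit{yss} \arrr{R} xs$. Note $\mathtt{either}$-built terms are not in normal form, which is exactly why the counting-module definition was phrased to tolerate lazy reduction of number representations.

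Next I would supply the required operations. For $\symb{seed}_{\symb{main}}$ (value $2^{n+1}-1$, all bits $1$) I take $\mathit{yss} := cs$-iterated-$\symb{either}$ over all suffixes of $cs$ (built by a small defined symbol recursing down $cs$ with rules $\symb{allsuffixes}\ \nil \arrz \nil$ and $\symb{allsuffixes}\ (x\cons xs) \arrz \symb{either}\ (x\cons xs)\ (\symb{allsuffixes}\ xs)$, reusing the $\symb{either}$ rules $\symb{either}\ u\ v \arrz u$, $\symb{either}\ u\ v \arrz v$), and $\mathit{zss} := $ a normal form that reduces to no suffix of $cs$, e.g. a dead constant. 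For $\symb{zero}_{\symb{main}}$ I must check that all bits are $0$, i.e. that $\mathit{zss}$ hits every suffix; this I do by recursing over a copy of $cs$ as the list of positions to check, at each step using $C_{\mathtt{lin}}$'s $\symb{equal}$ (Figure~\ref{fig:succ}) together with reducts of $\mathit{zss}$ to confirm the current suffix is in the zero-set, getting stuck (non-data) otherwise — so $\symb{zero}_{\symb{main}}\ cs\ \mathit{yss}\ \mathit{zss} \arrr{R} \strue$ iff the number is $0$, and symmetrically $\arrr{R}\sfalse$ iff some bit is $1$ (detected by finding a suffix that $\mathit{yss}$ reduces to). For $\symb{pred}_{\symb{main}}$ I implement binary decrement from the least significant end exactly as in Example~\ref{ex:plushoconsfree}: walk positions from $\nil$ upward (i.e. from the full suffix $cs$ downward via $C_{\mathtt{lin}}$'s $\symb{pred}$ — wait, directions must be chosen so "least significant" = "shortest position"), flipping $0\to 1$ until the first $1$, which becomes $0$ and stops; flipping a bit means producing new $\mathit{yss}',\mathit{zss}'$ that agree with $\mathit{yss},\mathit{zss}$ off the flipped position and are swapped on it, which is done with $\symb{either}$ plus an $\symb{equal}$-guarded "override" helper that, given a position $q$, reduces to $q$ on the new side and otherwise falls through to the old term. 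Then $\symb{succ}_{\symb{main}}$ comes for free via Figure~\ref{fig:succ}. The module is visibly first-order: every symbol has type built from $\mathtt{list},\mathtt{bool},\mathtt{symb}$, and all rules are left-linear, cons-free constructor rules (the right-hand sides only ever place constructors that already occur on the left or are $\strue/\sfalse$).

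The main obstacle is verifying the invariant that $\A_{cs}$ is preserved and $\numinterpret{\cdot}_{cs}$ is unchanged under the operations, in the presence of non-determinism — in particular that $\symb{pred}_{\symb{main}}$ never creates a pair $(\mathit{yss}',\mathit{zss}')$ that reduces to a suffix on *both* sides or that "loses" a position, and that the last clause of the counting-module definition holds: reducing $\mathit{yss}$ or $\mathit{zss}$ further cannot change the represented number (true because their reducts that are suffixes of $cs$ are exactly preserved by $\arr{R}$, which follows from $\B$-safety, Lemma~\ref{lem:safetypreserve}, together with the fact that each $\symb{either}$ step only discards alternatives). A secondary subtlety, already flagged in Remark~\ref{rem:nontermination} for $\symb{equal}_\pi$, is that several of these helper rules are non-terminating; as there, it suffices that a terminating reduction witnessing each required $\arrr{R}$ fact exists, and I would point to that. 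Once the invariant is in hand, the five bullet conditions of the counting-module definition are routine to check clause by clause.
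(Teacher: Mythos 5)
Your construction is, in outline, exactly the paper's: the same pair-of-$\symb{either}$-terms representation indexed by the suffixes of $cs$ via $C_{\mathtt{lin}}$, the same seed (an $\symb{either}$-chain over all suffixes paired with a non-reducing term), a zero test that iterates over all positions, a predecessor implemented as binary decrement, and $\symb{succ}$ recovered from Figure~\ref{fig:succ}. However, two steps would fail as literally described, both for the same underlying reason: a term built from $\symb{either}$ supports only two safe operations---\emph{adding} alternatives, and \emph{committing} to a single alternative by pattern-matching its head constructor---and your sketch uses such terms in two places where neither operation suffices.

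First, you test whether $zss$ reduces to a given position using $\symb{equal}_{\mathtt{lin}}$ from Figure~\ref{fig:succ}. That rule copies its second argument into the $\symb{zero}_{\mathtt{lin}}$ guards and into the $\symb{pred}_{\mathtt{lin}}$ recursive call, and each copy of $zss$ may then commit to a \emph{different} alternative. For instance, if $zss$ reduces to suffixes of lengths $2$ and $5$, then $\symb{equal}_{\mathtt{lin}}\ cs\ \numrep{3}\ zss$ reduces to $\strue$ by letting the three ``nonzero'' guards use the length-$5$ branch while the final zero test uses $\symb{pred}^3$ of the length-$2$ branch; no step gets stuck, so your safety net does not catch this, and $\symb{zero}$ becomes unsound. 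The paper flags precisely this in the paragraph following the lemma (``they cannot be replaced by an $\symb{equals}_{\mathtt{lin}}$ check'') and instead uses a structural comparison $\symb{eqLen}$ whose first pattern match forces $zss$ to a single data normal form \emph{before} anything is duplicated, combined with $\symb{checkreducts}$, which treats only $\strue$ answers as conclusive. Second, your predecessor's ``override helper that \dots otherwise falls through to the old term'' cannot work: wrapping the old $yss$ or $zss$ in further $\symb{either}$s can only add reducts, never remove them, yet the decrement must \emph{clear} bits on both sides (position $i$ must leave $yss$, positions beyond $i$ must leave $zss$). The paper therefore rebuilds both components from scratch with a $\symb{copy}$ function that walks over all positions and re-includes each according to a committed $\symb{bitset}$ query. (Your self-flagged wobble about which end is least significant is harmless, as is the choice of a dead constant versus $\bot$ for the second seed component.) With these two repairs the argument coincides with the paper's proof.
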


\begin{proof}
Intuitively, we represent a bitstring $b_0\dots b_N$ by a
pair of non-normalized terms $(yss,zss)$, such that $yss \arrr{} [$a
list of length $i]$ iff $b_i = \one$ and $zss \arrr{} [$a list of
length $i]$ iff $b_i = \nul$.  Formally, we let $C_{\symb{e}} :=
(\bits \otimes \bits,\Sigma,R,\A,\numinterpret{\cdot})$,
where:
\begin{itemize}
\item $\A_{cs}$ contains all pairs $(yss,zss)$ such that (a)
  all normal forms of $yss$ or $zss$
  are subterms of $cs$,
  and (b) for each $u \unlhd cs$ either
  $yss \arrr{R} u$ or $zss \arrr{R} u$, but not both.
\item Writing $cs = c_N\cons\dots\cons c_1\cons \nil$, we let $cs_i =
  c_i\cons\dots c_1\cons\nil$ for $1 \leq i \leq N$.  Let
  $\numinterpret{(yss,zss)}_{cs} =
  \sum_{i = 0}^N \{ 2^{N-i} \mid yss \arrr{R} cs_i \}$; then
  $\numinterpret{(yss,zss)}_{cs}$ is the number with bit representation
  $b_0\dots b_N$ (most significant digit first) where $b_i
  = 1$ if{f} $yss \arrr{R} cs_i$, if{f} $zss \not\arrr{R} cs_i$.
\item $\Sigma$ consists of the defined symbols introduced in $R$,
  which we construct below.
\end{itemize}
We include the rules from Figure~\ref{fig:succ}, the rules
for $\symb{seed}^1_{\mathtt{lin}},\ \symb{pred}^1_{\mathtt{lin}}$ and
$\symb{zero}^1_{\mathtt{lin}}$ from Example~\ref{ex:countlin}---
to handle the data lists---and $\symb{ifte}_{\bits}$
defined similar to other $\symb{ifte}$ rules.

As in \secshort\ref{subsec:sat}, we use non-deterministic selection
functions to construct $(yss,zss)$:
\[
\begin{array}{rclcrclcrcl}
\symb{either}\ n\ xss & \arrz & n & \quad & \symb{either}\ n\ xss & \arrz & xss & \quad &
  \bot & \arrz & \bot \\
\end{array}
\]
The symbol $\bot$ will be used for terms which do not reduce to
any data (the $\bot \arrz \bot$ rule serves to force $\bot \in
\Defineds$).
As discussed in Remark~\ref{rem:nontermination},
non-termination by itself is not an issue.
For the remaining functions, we consider bitstring arithmetic.
First, $2^{N+1}-1$ corresponds to the bitstring where each
$b_i = 1$, so $yss$ reduces to all subterms of $cs$:
\[
\begin{array}{rcl}
\symb{seed}_{\symb{e}}^1\ cs & \arrz & \symb{all}\ cs\ 
  (\symb{seed}^1_{\mathtt{lin}}\ cs)\ \bot \\
\symb{seed}_{\symb{e}}^2\ cs & \arrz & \bot \\
\symb{all}\ cs\ n\ xss & \arrz & \symb{ifte}_\bits\ 
  (\symb{zero}^1_{\mathtt{lin}}\ cs\ n)\ (\symb{either}\ n\ xss) \\
  & & \phantom{\symb{ifte}_\bits\ (\symb{zero}^1_{\mathtt{lin}}\ cs\ n}\ \ 
  (\symb{all}\ cs\ (\symb{pred}^1_{\mathtt{lin}}\ cs\ n)\ 
  (\symb{either}\ n\ xss))
\end{array}
\]
(The use of $\symb{seed}^1_{\mathtt{lin}}\ cs$ where simply
$cs$ would have sufficed may seem overly verbose, but is
deliberate because it will make the results of
\secshort\ref{sec:pairing} easier to present.)

In order to define $\symb{zero}_{\symb{e}}$, we must test the value
of all bits in the bitstring.  This is done by forcing an evaluation
from $yss$ or $zss$ to some data term.  This test is constructed in such
a way that both $\strue$ and $\sfalse$ results necessarily reflect the
state of $yss$ and $zss$; any undesirable non-deterministic choices lead
to the evaluation getting stuck.
\[
\begin{array}{rclcrcl}
\symb{eqLen}\ \nil\ \nil & \arrz & \strue & \quad &
\symb{eqLen}\ \nil\ (y\cons ys) & \arrz & \sfalse \\
\symb{eqLen}\ (x\cons xs)\ (y\cons ys) & \arrz &
  \symb{eqLen}\ xs\ ys & &
\symb{eqLen}\ (x\cons xs)\ \nil & \arrz & \sfalse \\
\end{array}
\]
\[
\begin{array}{rclcrcl}
\symb{bitset}\ n\ yss\ zss & \arrz &
  \symb{checkreducts}\ (\symb{eqLen}\ n\ yss)\ 
  (\symb{eqLen}\ n\ zss) \\
\symb{checkreducts}\ \strue\ b & \arrz & \strue \\
\symb{checkreducts}\ b\ \strue & \arrz & \sfalse \\
\end{array}
\]
Then $\symb{zero}_{\symb{e}}\ cs\ yss\ zss$ simply tests whether the bit is
unset for each sublist of $cs$.
\[
\begin{array}{rcl}
\symb{zero}_{\symb{e}}\ cs\ yss\ zss & \arrz & \symb{zo}\ cs\ 
  (\symb{seed}^1_{\mathtt{lin}}\ cs)\ yss\ zss \\
\symb{zo}\ cs\ n\ yss\ zss & \arrz & \symb{ifte}_\bool\ 
  (\symb{bitset}\ n\ yss\ zss)\ 
  \sfalse \\
  & & \phantom{Ab}
  (\ \symb{ifte}_\bool\ (\symb{zero}^1_{\mathtt{lin}}\ cs\ n)\ 
  \strue\ (\symb{zo}\ cs\ (\symb{pred}^1_{\mathtt{lin}}\ cs\ n)\ yss\ zss)\ ) \\
\end{array}
\]
For the predecessor function, we again replace $b_0 \dots
b_{i-1} b 1 0 \dots 0$ by $b_0 \dots b_{i-1} 0 1 \dots 1$.
To do so, we
fully rebuild
$yss$ and $zss$.  We first define a helper function $\symb{copy}$ to
copy $b_0 \dots b_{i-1}$:
\[
\begin{array}{rcl}
\symb{copy}\ cs\ n\ yss\ zss\ \sfalse & \arrz &
    \symb{addif}\ (\symb{bitset}\ n\ yss\ zss)\ n \\
  & & \phantom{ABC}
    (\ \symb{copy}\ cs\ (\symb{pred}^1_{\mathtt{lin}}\ cs\ n)\ yss\ zss\ 
    (\symb{zero}_{\mathtt{lin}}\ cs\ n)\ ) \\
\symb{copy}\ cs\ n\ yss\ zss\ \strue & \arrz & \bot \\
\symb{addif}\ \strue\ n\ xss & \arrz & \symb{either}\ n\ xss \\
\symb{addif}\ \sfalse\ n\ xss & \arrz & xss \\
\end{array}
\]
Then, for all $i$,
$\symb{copy}\ cs\ cs_{\max(i-1,0)}\ yss\ zss\ [i=0]$ reduces to
those $cs_j$ with $0 \leq j < i$ where $b_j = 1$, and $\symb{copy}\ cs\ 
cs_{\max(i-1,0)}\ zss\ yss\ [i=0]$ reduces to those with $b_j = 0$.
This works because $yss$ and $zss$ are complements.
To define $\symb{pred}$, we first handle the zero case:
\[
\begin{array}{rcl}
\symb{pred}_{\symb{e}}^1\ cs\ yss\ zss & \arrz &
  \symb{ifte}_\bits\ (\symb{zero}_{\symb{e}}\ cs\ yss\ zss)\ yss\ 
  (\symb{pr}^1\ cs\ (\symb{seed}^1_{\mathtt{lin}}\ cs)\ yss\ zss) \\
\symb{pred}_{\symb{e}}^2\ cs\ yss\ zss & \arrz &
  \symb{ifte}_\bits\ (\symb{zero}_{\symb{e}}\ cs\ yss\ zss)\ zss\ 
  (\symb{pr}^2\ cs\ (\symb{seed}^1_{\mathtt{lin}}\ cs)\ yss\ zss) \end{array}
\]
Then $\symb{pr}\ cs\ cs_N\ yss\ zss$ flips the bits $b_N,b_{N-1},
\dots$ until an index is encountered where $b_i = 1$; this last bit
is flipped, and the remaining bits are copied:
\begingroup
\addtolength{\jot}{-5pt}
\begin{align*}
\symb{pr}^1\ cs\ n\ yss\ zss & \hspace{0.7em}\arrz\hspace{0.7em}
  \symb{ifte}_\bits\ (\symb{bitset}\ n\ yss\ zss) \\
  &  \phantom{\hspace{0.7em} AB\hspace{0.7em}\hspace{0.7em}}
  (\ \symb{copy}\ cs\ (\symb{pred}^1_{\mathtt{lin}}\ cs\ n)\ yss\ zss\ 
  (\symb{zero}_{\mathtt{lin}}\ cs\ n)\ 
  ) \\
  &  \phantom{\hspace{0.7em} AB\hspace{0.7em}\hspace{0.7em}}
  (\ \symb{either}\ n\ (\symb{pr}^1\ cs\ (\symb{pred}^1_{\mathtt{lin}}\ 
  cs\ n)\ yss\ zss)
  \ ) \\
\symb{pr}^2\ cs\ n\ yss\ zss & \hspace{0.7em}\arrz\hspace{0.7em}
  \symb{ifte}_\bits\ (\symb{bitset}\ n\ yss\ zss) \\
  &  \phantom{\hspace{0.7em} AB \hspace{0.7em}\hspace{0.7em}}(\ 
  \symb{either}\ n\ (\symb{copy}\ cs\ (\symb{pred}^1_{\mathtt{lin}}\ 
  cs\ n)\ zss\ yss\ (\symb{zero}_{\mathtt{lin}}\ cs\ n\mathrlap{))
  \ ) }\\
  &  \phantom{\hspace{0.7em} AB\hspace{0.7em}\hspace{0.7em}}(\ 
  \symb{pr}^2\ cs\ (\symb{pred}^1_{\mathtt{lin}}\ cs\ n)\ yss\ zss
  \ ) \tag*{\qEd}
\end{align*}
\endgroup
\def\popQED{} 
\end{proof}

Note that, unlike Lemma~\ref{lem:expQcount}, Lemma~\ref{lem:mainmodule}
cannot be used directly to define composite modules: the rules for
$\symb{eqLen}$ rely on the specific choice of the underlying counting
module $C_{\mathtt{lin}}$.  They cannot be replaced by an
$\symb{equals}_{\mathtt{lin}}$ check, because the crucial property is
that---like in \secshort\ref{subsec:sat}---the $\symb{bitset}$
functionality relies on evaluating $yss$ and $zss$ to some normal form.
Nevertheless, even without composing we obtain additional power:

\begin{theorem}\label{thm:simulation}
Any decision problem in $\etime{K}$ is accepted by a
$K^{\text{th}}$-order cons-free ATRS.
\end{theorem}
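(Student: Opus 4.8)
The plan is to reduce the general statement to the building blocks already established. Given a decision problem $S \in \etime{K}$, by definition there is some $a \in \nats$ and a deterministic Turing machine $M$ deciding $S$ in time $\lambda n.\mathrm{exp}_2^K(an)$. By the closure remarks in the preliminaries (on polynomials of $\mathrm{exp}_2^K$), it suffices to bound the running time by $P(\mathrm{exp}_2^{K-1}((n+1)^b))$ for a suitable polynomial $P$ and $b \geq 1$; in fact, since $\mathrm{exp}_2^K(an) = 2^{\mathrm{exp}_2^{K-1}(an)} \leq 2^{\mathrm{exp}_2^{K-1}((n+1)^b)}$ for $b$ large enough, it suffices to take the counting module up to $\lambda n. 2^{\mathrm{exp}_2^{K-1}((n+1)^b)}$ and recall that $\timecomp{\cdot}$ absorbs the constant factor $a$ from the definition of $\timecomp{}$.

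First I would assemble the counting module. Starting from the linear module $C_{\mathtt{lin}}$ of Example~\ref{ex:countlin} and applying Lemma~\ref{lem:RQcount} repeatedly, we obtain a first-order $(\lambda n.(n+1)^b)$-counting module for any $b$. Then applying Lemma~\ref{lem:mainmodule} is not quite what we want for iteration, so instead the key step is to combine the first-order $(\lambda n.(n+1)^b)$-module with $K-1$ applications of Lemma~\ref{lem:expQcount}: starting from the $(\lambda n.(n+1)^b)$-module of order $1$, one application of Lemma~\ref{lem:expQcount} gives a $(\lambda n.2^{(n+1)^b})$-module of order $2$, and after $K-1$ applications we obtain a $(\lambda n.\mathrm{exp}_2^{K-1}((n+1)^b))$-counting module of order $K$, exactly as noted in the paragraph following Lemma~\ref{lem:expQcount}. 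To gain the extra exponential — i.e.\ to reach $\etime{K}$ rather than $\exptime{K-1}$ — I would apply Lemma~\ref{lem:mainmodule} as the \emph{outermost} layer; but since Lemma~\ref{lem:mainmodule} only produces a $(\lambda n.2^{n+1})$-module over $C_{\mathtt{lin}}$ specifically and does not compose, the correct route is to observe that the order-$K$ module $C_{\symb{p}[\dots[\symb{p}[\symb{lin}\cdots\symb{lin}]]]}$ already yields $\exptime{K-1}$, and then use a \emph{separately constructed} variant of Lemma~\ref{lem:mainmodule}'s non-deterministic technique layered on top of the order-$(K-1)$ polynomial-exponential module to reach $\lambda n.2^{P(n)}$ with $P$ itself of the form $\mathrm{exp}_2^{K-1}$, staying at order $K$. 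I expect the paper's actual construction here is to generalise Lemma~\ref{lem:mainmodule} so it works over an arbitrary underlying module (overcoming the $\symb{eqLen}$ obstruction mentioned in the note after the lemma), giving a $(\lambda n.2^{P(n)})$-module of the \emph{same} order as the $P$-counting module; with that in hand, taking $P = \mathrm{exp}_2^{K-1}((n+1)^b)$ of order $K-1$ would yield an order-$K$, $(\lambda n.\mathrm{exp}_2^K((n+1)^b))$-counting module.

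Second, I would plug this counting module into the Turing-machine simulation of Figure~\ref{fig:TM}. The simulation needs exactly the four operations a counting module supplies — computing $\numrep{P(|cs|)}$ (or an overestimate), the zero test, the bounded predecessor, and the bounded successor — each split across tuple components as described in \secshort\ref{subsec:counting}. Since the module has order $K$ and Figure~\ref{fig:TM}'s own symbols are first-order, the resulting ATRS has order $K$. Cons-freeness and left-linearity are preserved because every ingredient (the linear module, the product modules, the exponential modules, the generalised non-deterministic module, and Figure~\ref{fig:TM}) is individually cons-free and left-linear and the unions remain constructor systems with disjoint signatures. Correctness of acceptance follows: by the defining property of the simulation, $\symb{decide}\ \encode{E} \arrr{\Rules} \strue$ iff $M$ accepts $E$ iff $E \in S$, which is precisely Definition~\ref{def:acceptance_ATRS}.

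The main obstacle I anticipate is the composition issue flagged in the note after Lemma~\ref{lem:mainmodule}: the non-deterministic $(\lambda n.2^{n+1})$ module is written specifically against $C_{\mathtt{lin}}$ because $\symb{eqLen}$ exploits that ``numbers are lengths of sublists of $cs$'', and it cannot be black-boxed via $\symb{equal}_{\pi}$ since the whole point is that $\symb{bitset}$ must \emph{force} $yss$ and $zss$ to a normal form. So the real work of the proof is to re-engineer Lemma~\ref{lem:mainmodule} over a general order-$K'$ underlying module — replacing the ``length of a sublist'' representation and the $\symb{eqLen}$ test by something that still forces evaluation to a canonical representative while being expressible from the underlying $\symb{zero}_{\pi}$, $\symb{pred}_{\pi}$, $\symb{seed}_{\pi}$, $\symb{equal}_{\pi}$ — and then to stack it on top of the order-$(K-1)$ polynomial-exponential module. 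Everything else (bounding the TM runtime by a polynomial in $\mathrm{exp}_2^K$, invoking Figure~\ref{fig:TM}, checking cons-freeness and left-linearity are closed under the relevant disjoint unions) is routine given the lemmas already proved.
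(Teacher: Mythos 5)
There is a genuine gap here, and it is exactly the one you flag yourself at the end: your route requires generalising Lemma~\ref{lem:mainmodule} so that the non-deterministic $(\lambda n.2^{P(n)})$-construction works over an \emph{arbitrary} underlying $P$-counting module, and you leave that step as ``the real work of the proof''. The paper never performs that generalisation, and there is good reason to believe it cannot be done within plain ATRSs: the $\symb{bitset}$/$\symb{eqLen}$ mechanism works only because the ``numbers'' of $C_{\mathtt{lin}}$ have canonical data normal forms (sublists of $cs$) to which $yss$ and $zss$ can be forced to evaluate. For a general module the representatives are tuples of possibly non-normalised, possibly higher-order terms with no canonical reduct, and Section~\ref{sec:pairing} shows that as soon as one \emph{can} force such a partial evaluation (via pairing), expressivity jumps all the way to $\elementary$ --- so a construction of the kind you are asking for would either fail or overshoot. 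Your fallback computation also does not rescue the argument: the module $C_{\symb{p}[\dots[\symb{lin}\cdots\symb{lin}]\dots]}$ of order $K$ only counts to $\mathrm{exp}_2^{K-1}((n+1)^b)$, which yields $\exptime{K-1}$, strictly weaker than $\etime{K}$.

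The idea you are missing is that the non-deterministic module belongs at the \emph{bottom} of the tower, not the top. The paper's proof is a three-line composition: Lemma~\ref{lem:mainmodule} gives a first-order $(\lambda n.2^{n+1})$-module; iterating Lemma~\ref{lem:RQcount} (which composes arbitrary modules without raising the order) gives a first-order $(\lambda n.(2^{n+1})^a) = (\lambda n.2^{a(n+1)})$-module; and $K-1$ applications of Lemma~\ref{lem:expQcount} --- which, unlike Lemma~\ref{lem:mainmodule}, \emph{is} stated for an arbitrary underlying module --- lift this to a $K^{\text{th}}$-order $(\lambda n.\mathrm{exp}_2^{K-1}(2^{a(n+1)})) = (\lambda n.\mathrm{exp}_2^K(a(n+1)))$-module, which plugged into Figure~\ref{fig:TM} handles any machine running in time $\mathrm{exp}_2^K(an)$. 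Ordering the layers this way means the only non-compositional ingredient is used exactly once, against $C_{\mathtt{lin}}$, where its rules are actually defined; the extra exponential over Jones' $\exptime{K-1}$ is injected at order $1$ and then carried up the tower for free. Your surrounding remarks (closure of $\timecomp{\cdot}$ under polynomials and constants, preservation of cons-freeness and left-linearity under disjoint unions, correctness of the Figure~\ref{fig:TM} simulation) are fine, but the central construction as you propose it does not go through.
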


\begin{proof}
Following the construction in Figure~\ref{fig:TM}, it suffices to
find a $K^{\text{th}}$-order counting module counting up to
$\mathrm{exp}_2^K(a\cdot n)$ where $n$ is the size of the input and
$a$ a fixed positive integer.  Lemma~\ref{lem:mainmodule} gives a
first-order $\lambda n.2^{n+1}$-counting module, and by iteratively
using Lemma~\ref{lem:RQcount} we obtain $\lambda n.(2^{n+1})^a =
\lambda n.2^{a(n+1)}$ for any $a$.  Iteratively applying
Lemma~\ref{lem:expQcount} on the result gives a
$K^{\text{th}}$-order $\lambda n.\mathrm{exp}_2^K(a \cdot (n+1))$-%
counting module.
\end{proof}

\section{Finding normal forms}\label{sec:algorithm}

In the previous section we have seen that every function in
$\etime{K}$ can be implemented by a cons-free $K^{\text{th}}$-order
ATRS.  Towards a characterization result, we must therefore show the
converse: that every function accepted by a cons-free
$K^{\text{th}}$-order ATRS is in $\etime{K}$.

To achieve this goal, we will now give an algorithm running
in $\timecomp{\exp_2^K(a\cdot n)}$ that, on input any basic term in a
fixed ATRS of order $K$, outputs its set of data normal forms.

A key idea is to associate terms of higher-order type to
functions.  For a given set $\B$ of data terms (a shorthand
for a set $\B_s$ following Definition~\ref{def:B}), we let:
\[
\begin{array}{rcl}
\interpret{\asort} & = & \P(\{ s \mid s \in \B\ \wedge
  s : \asort \})\ \ \text{for}\ \asort \in \Sorts\ \ 
  (\text{so $\interpret{\asort}$ is a set of subsets of}\ \B) \\
\interpret{\atype \arrtype \btype} & = & 
  \interpret{\btype}^{\interpret{\atype}}\ \ 
  (\text{so the set of functions from}\ \interpret{\atype}\ 
  \text{to}\ \interpret{\btype}) \\
\end{array}
\]

We will refer to the elements of each $\interpret{\atype}$
as \emph{term representations}.
Intuitively, an element of $\interpret{\asort}$ represents a set of
possible reducts of a term $s : \asort$, while an element of
$\interpret{\atype \arrtype \btype}$ represents the function defined
by a functional term $s : \atype \arrtype \btype$.
Since each $\interpret{\atype}$ is \emph{finite}, we can enumerate
its elements.
In Algorithm~\ref{alg:main} below,
we build
functions $\Conf^0,\Conf^1,\dots$, each mapping statements
$\apps{\identifier{f}}{A_1}{A_m} \leadsto t$
to a value in $\{\top,\bot\}$.
Intuitively, $\Conf^i[\apps{\identifier{f}}{A_1}{A_m} \leadsto t]$
denotes whether, in step $i$ in the algorithm, we have confirmed that
$\apps{\identifier{f}}{s_1}{s_m}$ has normal form $t$, where each
$A_j$ represents the corresponding $s_j$.

To achieve this, we will use two helper definitions.
First:
\begin{definition}\label{def:eta}
For a defined symbol $\identifier{f} : \atype_1 \arrtype \dots
\arrtype \atype_m \arrtype \asort \in \Defineds$, rule
$\rho\colon\apps{\identifier{f}}{\ell_1}{\ell_k} \arrz r \in
\Rules$, variables $x_{k+1} : \atype_{k+1},\dots,x_m : \atype_m$ not
occurring in $\rho$ and $A_1 \in \interpret{\atype_1}, \dots,A_m \in
\interpret{\atype_m}$, let the \emph{mapping associated to
$\rho,\ \vec{x}$ and $\identifier{f}\ \vec{A}$} be the function $\eta$
on domain $\{ \ell_j \mid 1 \leq j \leq k \wedge \ell_j \in \V \}
\cup \{ x_{k+1},\dots,x_m \}$ such that $\eta(\ell_j) = A_j$ for
$j \leq k$ with $\ell_j \in \V$, and $\eta(x_j) = A_j$ for $j > k$.
\end{definition}
Second, the algorithm employs a function $\NF^i$ for all $i$, mapping
a term $s : \atype$ and a mapping $\eta$ as above to an element
of $
\interpret{\atype}$ (which depends on $\Conf^i$).
Intuitively, if $\delta$ is a substitution such that each $\eta(x)$
represents $\delta(x)$, then $\NF^i(s,\eta)$ represents the term
$s\delta$.

\begin{algorithm}\label{alg:main}
\quad

{\bf Input:} A basic term $s = \apps{\identifier{g}}{s_1}{s_M}$.

{\bf Output:} The set of data normal forms of $s$. Note that
this set may be empty.

\smallskip
Set $\B := \B_s$.
For all $\identifier{f} : \atype_1 \arrtype \dots \arrtype \atype_m
\arrtype \asort \in \Defineds$ with $\asort \in \Sorts$, all $A_1 \in
\interpret{\atype_1},\dots,A_m \in \interpret{\atype_m}$, all $t \in
\interpret{\asort}$, let $\Conf^0[\apps{\identifier{f}}{A_1}{A_m}
\leadsto t] := \bot$.
For all such $\identifier{f},\vec{A},t$ and all $i \in \nats$:
\begin{itemize}
\item if $\Conf^i[\identifier{f}\ \vec{A} \leadsto t] = \top$, then
  $\Conf^{i+1}[\identifier{f}\ \vec{A} \leadsto t] := \top$;
\item otherwise, for all $\rho\colon
  \apps{\identifier{f}}{\ell_1}{\ell_k} \arrz r \in
  \Rules$ and fresh variables $x_{k+1} : \atype_{k+1},\dots,x_m :
  \atype_m$,
  all substitutions $\gamma$ on domain $\Var(\identifier{f}\ 
  \vec{\ell}) \setminus \{ \vec{\ell} \}$
  such that $\ell_j\gamma \in A_j$ whenever
  $\ell_j \notin \V$,
  let $\eta$ be the
  mapping associated to $\rho,\vec{x}$ and $\identifier{f}\ 
  \vec{A}$.
  Test whether $t \in \NF^i((\apps{r}{x_{k+1}}{x_m})\gamma,\eta)$.
  Let $\Conf^{i+1}[\identifier{f}\ \vec{A} \leadsto t]$ be $\top$ if
  there are $\rho,\gamma$ where this test succeeds,
  $\bot$ otherwise.
\end{itemize}
Here, $\NF^i(t,\eta) \in \interpret{\btype}$ is defined recursively for $\B$-safe terms $t : \btype$
and functions $\eta$ mapping all variables $x : \atype$ in $\Var(t)$
to an element of $\interpret{\atype}$, as follows:
\begin{itemize}
\item if $t$ is a data term, then $\NF^i(t,\eta) := \{ t \}$;
\item if $t = \apps{\identifier{f}}{t_1}{t_m}$ with $\identifier{f} :
  \atype_1 \arrtype \dots \arrtype \atype_m \arrtype \asort
  \in \Defineds$ (for $\asort \in \Sorts$), then
  $\NF^i(t,\eta)$ is the set of all $u \in \B$ such that
  $\Conf^i[\apps{\identifier{f}}{\NF^i(t_1,\eta)}{\NF^i(t_m,\eta)}
  \leadsto u] = \top$;
\item if $t = \apps{\identifier{f}}{t_1}{t_n}$ with $\identifier{f} :
  \atype_1 \arrtype \dots \arrtype \atype_m \arrtype \asort
  \in \Defineds$ (for $\asort \in \Sorts$) and $n < m$, then
  $\NF^i(t,\eta) :=$ the function mapping $A_{n+1},\dots,A_m$ to
  the set of all $u \in \B$ such that $\Conf^i[\apps{\apps{
  \identifier{f}}{\NF^i(t_1,\eta)}{\NF^i(t_n,\eta)}}{A_{n+1}}{A_m}
  \leadsto u] = \top$;
\item if $t = \apps{x}{t_1}{t_n}$ with $n \geq 0$ and $x$ a variable,
  then $\NF^i(t,\eta) := \eta(x)(\NF^i(t_1,\eta),\dots,\linebreak
  \NF^i(t_n,\eta))$; so also $\NF^i(t) = \eta(t)$ if $t$ is
  a variable.
\end{itemize}
When $\Conf^{i+1}[\identifier{f}\ \vec{A} \leadsto t] =
\Conf^i[\identifier{f}\ \vec{A} \leadsto t]$ for all statements,
the algorithm ends; we let $I := i+1$ and return $\{ t \in \B \mid
\Conf^I[\apps{\identifier{g}}{\{s_1\}}{\{s_M\}} \leadsto t] = \top\}$.
\end{algorithm}

This is well-defined because a non-variable pattern
$\ell_j$ necessarily has base type, which means $A_j$ is a set.
As $\Defineds,\ \B$ and all $\interpret{\atype_i}$ are all finite,
and the number of positions at which $\Conf^i$ is $\top$ increases
in every step, the algorithm always terminates.
The intention is that $\Conf^I$ reflects rewriting for basic terms.
This result is stated formally in
Lemma~\ref{lem:algorithmsoundcomplete}.

\begin{example}
Consider the \emph{majority} ATRS of Example~\ref{ex:majority}, with
starting term $s = \symb{majority}\ (\one\cons\nul\cons\nil)$.  Then
$\B_s = \{ \one,\nul,\one\cons\nul\cons\nil,\nul\cons\nil,\nil\}$.
We have $\interpret{\symbs} = \{ \emptyset,\{\nul\},\{\one\},
\{\nul,\one\} \}$ and $\interpret{\bits}$ is the set containing all
eight subsets of $\{ \one\cons\nul\cons\nil,\ \nul\cons\nil,\ \nil
\}$.
Thus, there are $8 \cdot 2$ statements of the form $\symb{majority}\ 
A \leadsto t$,\ $8^3 \cdot 2$ statements of the form $\symb{count}\ 
A_1\ A_2\ A_3 \leadsto t$ and $8^2 \cdot 2$ of the form $\symb{cmp}\ 
A_1\ A_2 \leadsto t$; in total, 1168 statements are considered in each
step.

We consider one statement in the first step, determining
$\Conf^1[\symb{cmp}\ \{ \nul\cons\nil \}\ \{ \nul\cons\nil,\ \nil \}
\leadsto \nul]$.
There are two viable combinations of a rule and a substitution:
$\symb{cmp}\ (y\cons ys)\ (z\cons zs) \arrz \symb{cmp}\ ys\ zs$ with
substitution $\gamma = [y:=\nul,ys:=\nil,z:=\nul,zs:=\nil]$ and
$\symb{cmp}\ (y\cons ys)\ \nil \arrz \nul$ with substitution
$\gamma = [y:=\nul,ys:=\nil]$.
Consider the first.  As there are no functional variables, $\eta$ is
empty and we need to determine whether $\nul \in \NF^1(\symb{cmp}\ 
\nil\ \nil,\emptyset)$.  This fails, because $\Conf^0[\xi] = \bot$ for
all statements $\xi$.  However, the check for the second rule, $\nul
\in \NF^1(\nul,\emptyset)$, succeeds.
Thus, we mark $\Conf^1[\symb{cmp}\ \{ \nul\cons\nil \}\ 
\{ \nul\cons\nil,\ \nil \} \leadsto \nul] = \top$.
\end{example}

Before showing correctness of Algorithm~\ref{alg:main}, we see that it
has the expected complexity.

\begin{lemma}\label{lem:algorithmcomplexity}
If $(\F,\Rules)$ has type order $K$, then Algorithm~\ref{alg:main}
runs in $\timecomp{\exp_2^K(a \cdot n)}$ for some $a$.
\end{lemma}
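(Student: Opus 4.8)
The plan is to bound, for a fixed ATRS of type order $K$, both the number of statements $\apps{\identifier{f}}{A_1}{A_m} \leadsto t$ that the algorithm tracks and the cost of one iteration, and then observe that the number of iterations is at most the total number of statements (since the count of $\top$-positions strictly increases each step until the fixpoint). First I would bound the sizes of the term-representation sets $\interpret{\atype}$. Writing $n$ for the size of the input basic term $s$, the set $\B = \B_s$ has $\OO(n)$ elements by the remarks following Definition~\ref{def:B}, so for a sort $\asort$ we have $|\interpret{\asort}| \le 2^{|\B|} \le \mathrm{exp}_2^1(c \cdot n)$ for a suitable constant $c$ depending only on $(\F,\Rules)$. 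For a functional type $\atype \arrtype \btype$ of order $d$, $|\interpret{\atype \arrtype \btype}| = |\interpret{\btype}|^{|\interpret{\atype}|}$, and a straightforward induction on the type order shows $|\interpret{\atype}| \le \mathrm{exp}_2^{\mathit{order}(\atype)+1}(c' \cdot n)$ for some constant $c'$: each extra arrow bumps the exponential tower by one, using the estimate $\mathrm{exp}_2^{j}(x)^{\mathrm{exp}_2^{j}(y)} \le \mathrm{exp}_2^{j+1}(\OO(x+y))$. Since all argument types $\atype_i$ of a defined symbol have order $\le K-1$ (the symbol itself has order $\le K$ and base result type), $|\interpret{\atype_i}| \le \mathrm{exp}_2^{K}(c' \cdot n)$, hence the number of statements $\apps{\identifier{f}}{A_1}{A_m} \leadsto t$ is at most a product of finitely many such factors times $|\Defineds|$, which is again $\le \mathrm{exp}_2^{K}(a_0 \cdot n)$ for some constant $a_0$.

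Next I would bound the work done to process a single statement in a single step. Here one ranges over the (finitely many, constant number of) rules $\rho$ for $\identifier{f}$ and over substitutions $\gamma$ on the finitely many pattern variables; since each such variable is mapped into a set $A_j \in \interpret{\asort_j}$ of cardinality $\le |\B| = \OO(n)$ and patterns have bounded size, there are at most $\mathrm{poly}(n)$ relevant $\gamma$. For each, one evaluates $\NF^i$ on the (fixed-size) right-hand side $(\apps{r}{x_{k+1}}{x_m})\gamma$; the recursion on $\NF^i$ has depth bounded by the size of $r$ (a constant), and each recursive call either looks up a value $\Conf^i[\cdots]$, applies a function in some $\interpret{\atype \arrtype \btype}$ to arguments, or takes a union — each of these is an operation on objects of size $\le \mathrm{exp}_2^{K}(a_1 \cdot n)$, doable in time $\mathrm{exp}_2^{K}(a_1 \cdot n)$ (a function table in $\interpret{\atype \arrtype \btype}$ has that many entries, and membership/union of subsets of $\B$ is cheap). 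So one statement costs $\le \mathrm{exp}_2^{K}(a_2 \cdot n)$, and one full iteration over all statements costs $\le \mathrm{exp}_2^{K}(a_0 \cdot n) \cdot \mathrm{exp}_2^{K}(a_2 \cdot n) \le \mathrm{exp}_2^{K}(a_3 \cdot n)$, using again that a product of two $\mathrm{exp}_2^{K}$ terms is absorbed into $\mathrm{exp}_2^{K}$ of a larger linear argument (by the identity $(\mathrm{exp}_2^K(x))^d = 2^{d\cdot\mathrm{exp}_2^{K-1}(x)} \le \mathrm{exp}_2^K(dx)$ quoted in the preliminaries, and similarly for sums).

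Finally, the number of iterations $I$ is bounded: from step $i$ to $i+1$ the set of statements assigned $\top$ is non-decreasing and, until the algorithm halts, strictly increasing, so $I$ is at most one more than the total number of statements, i.e.\ $I \le \mathrm{exp}_2^{K}(a_0 \cdot n) + 1$. Multiplying the per-iteration cost by $I$ and absorbing constants and the extra factor into the linear argument of the tower once more gives an overall running time $\le \mathrm{exp}_2^{K}(a \cdot n)$ for a suitable constant $a$, which is what we want since $\timecomp{\cdot}$ is closed under $\OO$. The main obstacle — really the only delicate point — is the bookkeeping in the inductive bound $|\interpret{\atype}| \le \mathrm{exp}_2^{\mathit{order}(\atype)+1}(\OO(n))$: one must be careful that when $\atype = \atype_1 \arrtype \btype$ with $\atype_1$ of high order and $\btype$ of lower order, the tower height is governed by $\max(\mathit{order}(\atype_1)+1, \mathit{order}(\btype))$ exactly as in Definition~\ref{def:types}, and that iterating the rule $\mathrm{exp}_2^{j}(x)^{\mathrm{exp}_2^{j}(y)} \le \mathrm{exp}_2^{j+1}(\OO(x+y))$ over the (constantly many) arrows of a type does not inflate the linear coefficient beyond a constant. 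Everything else is routine counting, and the non-termination of the $\symb{equal}_\pi$-style rules is irrelevant here because Algorithm~\ref{alg:main} is what we are analyzing, not the rewrite relation directly.
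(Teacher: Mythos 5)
Your proposal is correct and follows essentially the same route as the paper's proof: an induction on types bounding $\card(\interpret{\atype})$ by an exponential tower of height $\mathit{order}(\atype)+1$ with a linear argument, then counting statements, bounding the number of iterations by the number of statements and the per-iteration cost by another such tower, and absorbing all polynomial overheads via the closure of $\etime{K}$ under polynomials. The only slip is the phrase ``each extra arrow bumps the exponential tower by one''---it is the \emph{order}, not the arrow count, that governs the tower height---but your stated inductive bound and your closing remark about $\max(\mathit{order}(\atype_1)+1,\mathit{order}(\btype))$ already get this right.
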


\begin{proof}
Write $N := |\B|$; $N$ is linear in the size of the only
input, $s$ ($\Rules$ and $\F$ are not considered input).
We claim: if $K,d \in \nats$ are such that $\atype$ has at most
order $K$, and the longest sequence $\atype_1 \arrtype
\dots \arrtype \atype_n \arrtype \asort$ occurring in $\atype$ has
length $n+1 \leq d$, then $\card(\interpret{\atype}) \leq
\mathrm{exp}_2^{K+1}(d^K \cdot N)$.

\begin{quote}
(Proof of claim.)
Proceed by induction on the form of $\atype$.
Observe that $\P(\B)$ has cardinality $2^N$, so for
$\asort \in \Sorts$ also $\card(\interpret{\asort}) \leq 2^N =
\exp_2^1(d^0 \cdot N)$.  For the induction step, write $\atype =
\atype_1 \arrtype \dots \arrtype \atype_n \arrtype \asort$ with $n <
d$ and each $\atype_j$ having order at most $K-1$. We have:
\[\small
\begin{array}{ll}
\card(\interpret{\atype})\!\!\! &=
\card((\cdots (\interpret{\asort}^{\interpret{\atype_n}})^{
\interpret{\atype_{n-1}}}\cdots)^{\interpret{\atype_1}}) 
\\&= \card(\interpret{\asort})^{\card(\interpret{\atype_n})\cdots
  \card(\interpret{\atype_1})} \\
&\leq  2\text{\textasciicircum}(N \cdot \card(\interpret{\atype_n})
  \cdots \card(\interpret{\atype_1}))
 \\&\leq 2\text{\textasciicircum}(N \cdot \mathrm{exp}_2^K(d^{K-1} \cdot N)
  \cdots \mathrm{exp}_2^K(d^{K-1} \cdot N))
  \\
&= 2\text{\textasciicircum}(N \cdot \mathrm{exp}_2^K(d^{K-1} \cdot N)^n)
\\&\leq 2\text{\textasciicircum}(\mathrm{exp}_2^K(d^{K-1} \cdot N \cdot n +
  N))\ \ \hfill
  (\text{by induction on}\ K \geq 1)  \\
&=  \mathrm{exp}_2^{K+1}(n \cdot d^{K-1} \cdot N + N)
  \\&\leq \mathrm{exp}_2^{K+1}(d \cdot d^{K-1} \cdot N)
  \\&= \exp_2^{K+1}(d^K \cdot N)\ \ \:
  \hfill (n + 1 \leq d) \\
\end{array}
\]
(End of proof of claim.)
\end{quote}
\noindent
Since, in a $K^{\text{th}}$-order ATRS, all arguments types have order
at most $K-1$, we thus find $d$ (depending
solely on $\F$) such that all sets $\interpret{\atype}$ in the
algorithm have cardinality $\leq \mathrm{exp}_2^K(d^{K-1} \cdot
N)$.  Writing $a$ for the maximal arity in $\F$, there are therefore
at most $|\Defineds| \cdot \mathrm{exp}_2^K(d^{K-1} \cdot N)^a
\cdot N \leq |\Defineds| \cdot \mathrm{exp}_2^K((d^{K-1} \cdot a +
1) \cdot N)$ distinct statements $\identifier{f}\ \vec{A} \leadsto t$.

Writing $m := d^{K-1} \cdot a + 1$ and $X := |\Defineds| \cdot
\mathrm{exp}_2^K(m \cdot N)$, we thus find: the algorithm has at most
$I \leq X+2$ steps, and in each step $i$ we consider at most $X$
statements $\varphi$ where $\Conf^i[\varphi] = \bot$.  For every
applicable rule, there are at most $(2^N)^a$ different substitutions
$\gamma$
, so we have to test a statement 
$t \in \NF^i((r\ \vec{x})\gamma,\eta)$ at most $X \cdot (X+2)
\cdot |\Rules| \cdot 2^{aN}$ times.  The exact cost of calculating
$\NF^i((r\ \vec{x})\gamma,\eta)$ is implementation-specific, but is
certainly bounded by some polynomial $P(X)$ (which depends on the form
of $r$).
This leaves the total time cost of the algorithm at $\OO(X \cdot (X+1)
\cdot 2^{aN} \cdot P(X)) = P'(\mathrm{exp}_2^K(m \cdot N))$ for
some polynomial $P'$ and constant $m$. As $\etime{K}$ is robust under
taking polynomials, the result follows.
\end{proof}

\subsection{Algorithm correctness}\label{subsec:correct}

The one remaining question is whether our algorithm accurately
simulates rewriting.  This is set out in
Lemma~\ref{lem:algorithmsoundcomplete}.

\begin{lemma}\label{lem:algorithmsoundcomplete}
Let $\identifier{g} : \asort_1 \arrtype \dots \arrtype \asort_M
\arrtype \asort \in \Defineds$ and $s_1 : \asort_1,\dots,s_M :
\asort_M, t : \asort$ be data terms.  Then
$\Conf^I[\apps{\identifier{g}}{\{s_1\}}{\{s_M\}} \leadsto t] = \top$
if and only if $\apps{\identifier{g}}{s_1}{s_M} \arrr{\Rules} t$.
(Here, $I$ is the point at which the algorithm stops
progressing, as defined in the last line of Algorithm~\ref{alg:main}.)
\end{lemma}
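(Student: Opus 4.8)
The plan is to prove the two directions of the equivalence separately, each by an appropriate induction, and to do so by first establishing a more general statement about $\NF^i$ and $\Conf^i$ that applies to arbitrary $\B$-safe terms (not just basic terms built from data), since the recursion in the algorithm naturally produces such subterms. The central auxiliary notion I would introduce is a \emph{semantic agreement} relation: say that a term representation $A \in \interpret{\atype}$ \emph{represents} a term $s : \atype$ (relative to the current $\Conf^i$) if, at base type, $A$ is exactly the set of data normal forms in $\B$ reachable from $s$ that have been ``confirmed'' by step $i$, and at arrow type the functional condition holds pointwise: $A$ applied to any representation of an argument $u$ yields a representation of $s\,u$. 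I would prove by induction on the structure of $t$ that if $\eta$ maps each variable in $\Var(t)$ to a representation of $\delta(x)$ for some substitution $\delta$, then $\NF^i(t,\eta)$ represents $t\delta$ — with the precise meaning of ``represents'' at base type being an inclusion/soundness half and a completeness half that are handled in the two directions below.

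For the \textbf{soundness direction} ($\Conf^I[\identifier{g}\ \{\vec{s}\} \leadsto t] = \top \Rightarrow \identifier{g}\ \vec{s} \arrr{\Rules} t$), I would prove by induction on $i$ the statement: whenever $\Conf^i[\identifier{f}\ \vec{A} \leadsto u] = \top$ and $\vec{A}$ represents $\vec{s}$ in the sense that every element of $A_j$ (at base type) is a reduct of $s_j$ (and the analogous condition at higher type, phrased via the applicative behaviour), then $\identifier{f}\ \vec{s} \arrr{\Rules} u$. The inductive step unfolds the definition of $\Conf^{i+1}$: a rule $\rho : \identifier{f}\ \vec{\ell} \arrz r$ and a substitution $\gamma$ witness the $\top$, so $\ell_j\gamma \in A_j$, hence by assumption each $\ell_j\gamma$ is a reduct of $s_j$; combining $\gamma$ with $\eta$ into a single substitution $\delta$ that reduces $\identifier{f}\ \vec{s}$ to $(\identifier{f}\ \vec{\ell})\delta\cdot(\text{extra args})$, one applies the rule and then invokes the $\NF^i$-lemma (whose $\NF$-recursion bottoms out in $\Conf^i$, handled by the outer induction hypothesis) to conclude $u$ is a reduct of $r\delta$ and hence of $\identifier{f}\ \vec{s}$. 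Here Lemma~\ref{lem:safetypreserve} is needed to guarantee all intermediate terms stay $\B$-safe so that $\NF^i$ is even defined on them, and Lemma~\ref{lem:niceconstructor} lets us ignore higher-order constructors.

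For the \textbf{completeness direction} ($\identifier{g}\ \vec{s} \arrr{\Rules} t \Rightarrow \Conf^I[\cdots] = \top$), I would argue by induction on the length of the reduction $\identifier{g}\ \vec{s} \arrr{\Rules} t$, simultaneously strengthened over all basic terms and, via the $\NF$-lemma, over $\B$-safe terms: if $\identifier{f}\ \vec{s} \arrr{\Rules} u$ with $u$ a data term and $\vec{A}$ representations of $\vec{s}$ that are ``large enough'' (contain all the relevant reducts), then $\Conf^i[\identifier{f}\ \vec{A} \leadsto u] = \top$ for $i$ large enough, hence for $i = I$ by monotonicity of $\Conf$. The key subtlety is that a reduction $\identifier{f}\ \vec{s} \arrr{\Rules} u$ need not start by contracting the head: one must first reduce the arguments, but since $\vec{A}$ is required to already contain their data reducts this is absorbed into the representation hypothesis, and then at some point a head step $\identifier{f}\ \vec{\ell}\gamma' \cdots \arrz r\gamma' \cdots$ occurs; one chooses the matching $\gamma$ in the algorithm so that $\ell_j\gamma \in A_j$ (possible because $A_j$ is large enough), and recurses on the strictly shorter reduction from $r\gamma'\cdots$ to $u$, using the $\NF$-lemma to see that the test $u \in \NF^i((r\ \vec{x})\gamma,\eta)$ eventually succeeds once the relevant sub-statements have been confirmed at earlier stages — this is where one must be slightly careful that the ``$i$ large enough'' can be chosen uniformly, which follows because there are only finitely many statements and $\Conf$ is monotone, so all needed confirmations happen by some fixed step.

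The \textbf{main obstacle} I anticipate is making the mutual recursion between $\Conf^i$ (defined by reference to $\NF^i$) and $\NF^i$ (defined by reference to $\Conf^i$) play nicely with the inductions: the soundness induction on $i$ and the completeness induction on reduction length must each be threaded through the structural recursion of $\NF^i$ without circularity. I would handle this by fixing, in each direction, the right primary induction measure and treating the $\NF$-claim as a lemma proved by an \emph{inner} structural induction on the term argument with the primary induction hypothesis available as a parameter — in particular, in the completeness direction, by being careful that when $\NF^i(t,\eta)$ consults $\Conf^i[\identifier{h}\ \cdots \leadsto u]$ for a proper subterm $\identifier{h}\ \vec{t'}$ of $t$, the corresponding semantic reduction $\identifier{h}\ \vec{t'}\delta \arrr{\Rules} u$ is a strict ``sub-reduction'' of the one we started with, so the induction hypothesis applies. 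A secondary nuisance is the bookkeeping around the fresh variables $x_{k+1},\dots,x_m$ and the split of the matching substitution into the algorithmic $\gamma$ (on pattern variables) versus the $\eta$-part (on the $x_j$ and on functional pattern variables); I would set this up once, cleanly, at the start of the $\NF$-lemma and reuse it in both directions.
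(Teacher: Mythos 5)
Your proposal has the right overall architecture---a representation relation between terms and elements of $\interpret{\atype}$, defined pointwise at arrow types, an auxiliary lemma about $\NF^i$ proved by an inner structural induction, and careful bookkeeping of the pattern variables versus the fresh $x_{k+1},\dots,x_m$---and this much matches the paper. The genuine gap is in the induction measures. The paper explicitly notes that neither induction on the algorithm step $i$ nor induction on reduction length works well here, and the reasons are concrete. First, $\arr{\Rules}$ is not terminating for cons-free systems (cf.\ Remark~\ref{rem:nontermination}), so there is no structural well-founded order on terms to fall back on. Second, the algorithm traces a very particular shape of reduction: non-variable pattern positions are evaluated all the way to data terms in $A_j$, while variable pattern positions are kept at the \emph{original} argument $s_j$, represented by $\eta(\ell_j)=A_j$. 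An arbitrary reduction $\apps{\identifier{f}}{s_1}{s_m} \arrr{\Rules} t$ interleaves argument steps with head steps, so it must first be reorganized into this ``semi-outermost'' form; but postponing a step inside an argument past a head step multiplies that step by the number of occurrences of the corresponding variable in the right-hand side, so the reorganized reduction can be strictly \emph{longer} than the original and your completeness induction on reduction length does not decrease. Third, and most damagingly, in the clause $\NF^i(\apps{x}{t_1}{t_n},\eta)=\eta(x)(\dots)$ the ``corresponding semantic reduction'' starts from $\apps{\delta(x)}{t_1\delta}{t_n\delta}$, where $\delta(x)$ is a functional term imported from an enclosing call; the length of its reduction bears no relation to the reduction you are currently analysing, so it is not a ``strict sub-reduction'' in any sense your measure can detect.

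The paper's way out is a labelling device your proposal lacks: it passes to a system $(\Flab,\Ruleslab)$ in which every defined symbol carries a natural-number label that strictly decreases at each head step, proves this system terminating by a computability argument (Lemma~\ref{lem:terminating}, which needs Lemma~\ref{lem:niceconstructor} to exclude functional constructors), shows that data normal forms are preserved (Lemma~\ref{lem:labeledequiv}), and only then establishes the semi-outermost normalization (Lemma~\ref{lem:semiouter}) by induction on the well-founded order $\arr{\Ruleslab}\cup\mathord{\supterm}$. Both directions of the equivalence are then proved \emph{simultaneously} by one mutual induction on labeled terms ordered by $\arr{\Ruleslab}\cup\mathord{\rhd}$, with an inner induction on the derivation of the inductively defined representation relation $\thickapprox$ to handle the variable-application case. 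Your asymmetric split also hides a further problem: the half of your ``represents'' relation that makes soundness work (``$A_j$ contains only reducts of $s_j$'') is not preserved when $s_j$ is reduced before the head step, because the system is not confluent; so the two halves cannot both be threaded through the same decomposition of the reduction, which is precisely why the paper forces $\ell_j\gamma = s_j$ for variable patterns via semi-outermostness instead. Without the labeled system, or some equivalent well-founded measure on reductions to data, the proposal as written does not go through.
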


A key understanding for Lemma~\ref{lem:algorithmsoundcomplete}
is that algorithm~\ref{alg:main} traces \emph{semi-outermost}
reductions:

\begin{definition}
A reduction $s \arrr{\Rules} t$ is semi-outermost if either $s = 
t$, or it has the form $s = \apps{\identifier{f}}{s_1}{s_n}
\arrr{\Rules} \apps{\apps{\identifier{f}}{(\ell_1\gamma)}{(\ell_k
\gamma)}}{s_{k+1}}{s_m} \arr{\Rules} \apps{(r\gamma)}{s_{k+1}}{s_m}
\arrr{\Rules} t$, the sub-reductions $s_i \arrr{\Rules}
\ell_i\gamma$ and $\apps{(r\gamma)}{s_{k+1}}{s_m} \arrr{\Rules} t$ are
semi-outermost, and $s_j = \ell_j\gamma$ whenever $\ell_j \in \V$.
\end{definition}

\begin{proof}[Proof Idea of Lemma~\ref{lem:algorithmsoundcomplete}]
By postponing reductions at argument positions until needed,
we can safely assume that any reduction in a cons-free
ATRS is semi-outermost.  Then, writing $s \thickapprox A$ to indicate
that $s$ is ``represented'' by $A$
, we
prove by induction:
\begin{itemize}
\item if $s_j \thickapprox A_j$ for $1 \leq j \leq m$, then
  $\Conf^I[\apps{\identifier{f}}{A_1}{A_m} \leadsto t]$ if{f}
  $\apps{\identifier{f}}{s_1}{s_m} \arrr{\Rules} t$;
\item if $\delta$ and $\eta$ have the same domain, and both
  $\delta(x) \thickapprox \eta(x)$ for all $x$ and
  $t_j \thickapprox A_j$ for $1 \leq j \leq n$, then
  $t \in \NF^I(s,\eta)(A_1,\dots,A_n)$ if{f}
  $\apps{(s\delta)}{t_1}{t_n} \arrr{\Rules} t$.
\end{itemize}
Lemma~\ref{lem:algorithmsoundcomplete} is then obtained as
an instance of the former statement.
\end{proof}

\medskip
To translate this intuition to a formal proof we must
overcome three difficulties: to translate an arbitrary reduction into
a semi-outermost one, to associate terms to term representations, and
to find an ordering to do induction on (as, in practice, neither
induction on the algorithm nor on reduction lengths works very well
with the definition of $\NF^i$).
The first challenge would be easily handled by an induction on terms
if $\arr{\Rules}$ were terminating, but that is not guaranteed.  To
solve this issue, we will define a terminating relation
corresponding to $\arr{\Rules}$.  This will also be very useful for
the latter two challenges.

\begin{definition}[Labeled system]
Let
$
\Flab := \Constructors \cup \{ \identifier{f}_i : \atype \mid \identifier{f} :
  \atype \in \Defineds \wedge i \in \nats \}
$.
For $s \in \Terms(\F,\V)$ and $i \in \nats$, let $\labl_i(s)$ be $s$
with all instances of any defined symbol $\identifier{f}$ replaced by
$\identifier{f}_i$.  For $t \in \Terms(\Flab,\V)$, let $\|t\|$ be $t$
with all symbols $\identifier{f}_i$ replaced by $\identifier{f}$.
Then, let
\[
\Ruleslab = \{ 
\identifier{f}_{i+1} \arrz \identifier{f}_i \mid \identifier{f} \in
\Defineds \wedge i \in \nats \} \cup \{ \apps{\identifier{f}_{i+1}}{
\ell_1}{\ell_k} \arrz \labl_i(r) \mid \apps{\identifier{f}}{\ell_1}{
\ell_k} \arrz r \in \Rules \wedge i \in \nats \}
\]
\end{definition}

Note that constructor terms are unaffected by $\labl_i$ and $\|\cdot\|$.
The ATRS $(\Flab,\Ruleslab)$ is both non-deterministic and infinite in
its signature and rules, but can be used as a reasoning tool because
data normal forms correspond between the labeled and unlabeled system:

\begin{lemma}\label{lem:labeledequiv}
For all $\identifier{f} : \atype_1 \arrtype \dots \arrtype \atype_m
\arrtype \asort \in \Defineds$ and data terms $s_1,\dots,s_m,t$:
\[
\apps{\identifier{f}}{s_1}{s_m} \arrr{\Rules} t\ 
\text{if and only if}\ 
\apps{\identifier{f}_i}{s_1}{s_m} \arrr{\Ruleslab} t\ 
\text{for some}\ i
\]
\end{lemma}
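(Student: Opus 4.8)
The plan is to prove the two implications separately, each by a routine induction on reduction length, after setting up three easy auxiliary facts. First, the maps $\labl_i$ and $\|\cdot\|$ commute with forming (applicative) contexts and with applying substitutions (e.g.\ $\labl_i(C[w]) = \labl_i(C)[\labl_i(w)]$ and $\|s\gamma\| = \|s\|\,\|\gamma\|$, where $\|\gamma\|(x) := \|\gamma(x)\|$), and $\|\labl_i(r)\| = r$. Second, every data term is fixed by both $\labl_i$ and $\|\cdot\|$, since data terms contain no defined symbols. Third --- and this is the only point where the shape of rules matters --- the left-hand side $\apps{\identifier{f}}{\ell_1}{\ell_k}$ of a rule in $\Rules$ contains no defined symbol below its head (the $\ell_j$ are patterns, hence built only from constructors and variables), so $\labl_i(\apps{\identifier{f}}{\ell_1}{\ell_k}) = \apps{\identifier{f}_i}{\ell_1}{\ell_k}$.

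I would then record a \emph{monotonicity lemma} for the collapse rules: if $u,u' \in \Terms(\Flab,\V)$ satisfy $\|u\| = \|u'\|$ and every occurrence of a defined symbol in $u'$ carries a label no larger than the label of the corresponding occurrence in $u$, then $u \arrr{\Ruleslab} u'$. This follows by induction on the structure of $u$: at a defined-symbol leaf $\identifier{f}_p$ matched by $\identifier{f}_q$ with $q \le p$, apply the rule $\identifier{f}_{j+1} \arrz \identifier{f}_j$ exactly $p - q$ times (these rules fire at arbitrary positions, including heads of applications, via the congruence clauses of $\arr{\Ruleslab}$); at constructor or variable leaves there is nothing to do; and on an application one rewrites the two immediate subterms in turn using the induction hypothesis and the congruence clauses.

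For the ``only if'' direction, the crux is a \emph{single-step lifting claim}: if $u \arr{\Rules} v$ and $i \ge 1$, then $\labl_i(u) \arrr{\Ruleslab} \labl_{i-1}(v)$. Writing $u = C[\apps{\identifier{f}}{\ell_1}{\ell_k}\gamma]$ and $v = C[r\gamma]$ for the rule $\apps{\identifier{f}}{\ell_1}{\ell_k} \arrz r$ and substitution $\gamma$ used, the third auxiliary fact gives $\labl_i(u) = \labl_i(C)[\apps{\identifier{f}_i}{\ell_1}{\ell_k}\,\labl_i(\gamma)]$; writing $i = j+1$ and firing the labeled rule $\apps{\identifier{f}_{j+1}}{\ell_1}{\ell_k} \arrz \labl_j(r)$ produces $\labl_i(C)[\labl_j(r)\,\labl_{j+1}(\gamma)]$. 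This last term has the same underlying unlabeled term as $\labl_j(v) = \labl_j(C)[\labl_j(r)\,\labl_j(\gamma)]$ and all of its defined-symbol labels are $\ge$ the corresponding labels of $\labl_j(v)$, so the monotonicity lemma yields $\labl_i(u) \arrr{\Ruleslab} \labl_j(v) = \labl_{i-1}(v)$. Iterating this along a reduction $u \arr{\Rules}^n v$ (induction on $n$) gives $\labl_n(u) \arrr{\Ruleslab} \labl_0(v)$; taking $u := \apps{\identifier{f}}{s_1}{s_m}$ and $v := t$, and using that data terms are fixed by $\labl$, this reads $\apps{\identifier{f}_n}{s_1}{s_m} \arrr{\Ruleslab} t$, which is the claim with $i := n$.

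For the ``if'' direction I would dually show that $\|\cdot\|$ projects $\arr{\Ruleslab}$ to $\arrr{\Rules}$: if $u \arr{\Ruleslab} v$ then $\|u\| \arrr{\Rules} \|v\|$ in at most one step. A collapse step gives $\|u\| = \|v\|$; a step using $\apps{\identifier{f}_{j+1}}{\ell_1}{\ell_k} \arrz \labl_j(r)$ at some position with substitution $\delta$ gives, using the first auxiliary fact, $\|u\| = C'[\apps{\identifier{f}}{\ell_1}{\ell_k}\,\|\delta\|] \arr{\Rules} C'[r\,\|\delta\|] = \|v\|$ via the original rule $\apps{\identifier{f}}{\ell_1}{\ell_k} \arrz r \in \Rules$. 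Induction on reduction length then turns $\apps{\identifier{f}_i}{s_1}{s_m} \arrr{\Ruleslab} t$ into $\apps{\identifier{f}}{s_1}{s_m} \arrr{\Rules} t$, since $s_1,\dots,s_m,t$ are data terms, hence unchanged by $\|\cdot\|$. The only genuine obstacle in all of this is bookkeeping inside the single-step lifting claim --- making sure the head of the instantiated labeled left-hand side is exactly $\identifier{f}_i$ (this needs the third auxiliary fact) and that the leftover discrepancy between $\labl_i$ and $\labl_{i-1}$ on the surrounding context and on the substitution is purely a matter of labels that can be lowered by collapse rules; no properties of cons-freeness or left-linearity are used.
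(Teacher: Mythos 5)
Your proposal is correct and follows essentially the same route as the paper: the ``if'' direction by observing that $\|\cdot\|$ projects each $\arr{\Ruleslab}$-step to an $\arr{\Rules}$-step or an equality, and the ``only if'' direction by lifting each unlabeled step $u \arr{\Rules} v$ to $\labl_{i+1}(u) \arrr{\Ruleslab} \labl_i(v)$ via the corresponding labeled rule plus the collapse rules $\identifier{g}_{i+1} \arrz \identifier{g}_i$ to lower the labels elsewhere. Your explicit ``monotonicity lemma'' is just a careful spelling-out of that last label-lowering step, which the paper leaves implicit.
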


\begin{proof}
The \emph{if} direction is trivial, as $u \arr{\Ruleslab} v$ clearly
implies that $\|u\| \arr{\Rules} \|v\|$ or $\|u\| = \|v\|$.
For the \emph{only if} direction, note that $u \arr{\Rules} v$
implies $\labl_{i+1}(u) \arrr{\Ruleslab} \labl_i(v)$ for any $i$, by
using the labeled rule $\apps{\identifier{f}_{i+1}}{\ell_1}{\ell_k}
\arrz \labl_i(r)$ if the step $u \arr{\Rules} v$ uses rule
$\apps{\identifier{f}}{\ell_1}{\ell_k} \arrz r$ and using the labeled
rules $\identifier{g}_{i+1} \arrz \identifier{g}_i$ to lower the labels
of all other symbols in $u$.
\end{proof}

Despite the label decrease, termination of $\arr{\Ruleslab}$
is non-obvious due to variable copying.  For example, a pair of rules
$\symb{f}_1\ (\symb{c}\ F) \arrz F,\ \symb{g}_2\ x \arrz \symb{f}_1\ x
\ x$ with the constructor $\symb{c} : (\asort \arrtype \asort)
\arrtype \asort$ is non-terminating through the term
$\symb{f}_1\ (\symb{c}\ \symb{g}_2)\ (\symb{c}\ \symb{g}_2)$.  In our
setting, such rules can be assumed not to occur by
Lemma~\ref{lem:niceconstructor}, however.  Thus, we indeed obtain:

\begin{lemma}\label{lem:terminating}
There is no infinite $\arrr{\Ruleslab}$ reduction.
\end{lemma}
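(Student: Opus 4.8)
The plan is to reduce termination to a statement parametrised by the largest label occurring in a term, and then argue by induction on that parameter. First I would observe that no $\arr{\Ruleslab}$-step can increase the largest label occurring in a term: for a relabelling step $\identifier{f}_{i+1} \arrz \identifier{f}_i$ this is immediate, and for a step $C[\apps{\identifier{f}_{i+1}}{\ell_1\gamma}{\ell_k\gamma}] \arr{\Ruleslab} C[(\labl_i(r))\gamma]$ it holds because every defined-symbol occurrence of the contractum is either one of the symbols of $r$, which now carries label $i < i+1$, or an occurrence inside some $\gamma(x)$, which already appeared in the redex. Since any term is finite it carries some largest label $L \in \nats$, and by the observation every $\arr{\Ruleslab}$-reduction out of it stays among terms whose labels are $\le L$. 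It therefore suffices to prove, for every $L$, that every term all of whose defined-symbol labels are $\le L$ is strongly normalising, and I would do this by induction on $L$. The base case $L = 0$ is immediate: both left-hand-side schemes of $\Ruleslab$ have a head of label $\ge 1$, so a term with only label-$0$ defined symbols contains no redex at all.

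For the inductive step, assume every term with labels $\le L-1$ is strongly normalising, and consider $t$ with labels $\le L$. I would write $t$ as a context $C[u_1,\dots,u_p]$ in which $u_1,\dots,u_p$ are the maximal subterms of $t$ whose head is a label-$L$ symbol and $C$ is a context whose defined symbols all have label $\le L-1$; plugging fresh variables $\vec z$ for the holes yields a term $C[\vec z]$ with labels $\le L-1$, hence strongly normalising by the induction hypothesis. A single step on $t$ then falls into three kinds: (i) it takes place strictly inside $C$, in which case it projects to a genuine $\arr{\Ruleslab}$-step on $C[\vec z]$ (a redex in $C$ can only match the $u_j$ against variables of the rule, since a pattern admits a defined-symbol-headed subterm only at a variable position), although it may move or \emph{duplicate} some of the $u_j$; (ii) it takes place strictly inside some $u_j$, which keeps its label-$L$ head; (iii) it contracts the root of some $u_j = \apps{\identifier{f}_L}{u_{j,1}}{u_{j,k}}$, turning that position into a term rooted at a label $\le L-1$ whose label-$L$ content is drawn entirely from proper subterms of the $u_{j,l}$. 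The measure I would attach to $t$ is a recursive, path-order-flavoured combination of the nesting forest of maximal label-$L$ subterms of $t$ (a node is compared by the multiset extension of the order being defined, applied to its children) together with, at each node, the strong-normalisation rank under $\arr{\Ruleslab}$ of the label-$(\le L{-}1)$ context separating it from its children, which is well defined by the induction hypothesis. One then checks that case (i) strictly decreases the context rank guarding the holes, case (ii) is absorbed by the recursion on the relevant child, and case (iii) replaces a node by a multiset of strictly smaller ones; the relabelling rule $\identifier{f}_{i+1} \arrz \identifier{f}_i$ and closure of the measure under contexts and substitutions are routine.

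The hard part will be case (i): a step at a label-$(\le L{-}1)$ position may duplicate a label-$L$ subterm, so the naive ``multiset of label-$L$ subterms'' measure increases, and one must weave the context ranks and the forest structure into a single well-founded order — this is precisely why a path-order-style construction, rather than a plain lexicographic product of ``multiset of label-$L$ subterms'' with ``context rank'', is required. It is also here that the hypothesis provided by Lemma~\ref{lem:niceconstructor} — that all constructors have type order $\le 1$, so that no defined symbol of functional type can be hidden inside a constructor pattern and later extracted into head position — is indispensable: without it, a label-$(\le L{-}1)$ step could \emph{create} a fresh label-$L$ root out of nowhere, and indeed the two-rule system with the order-$2$ constructor $\symb{c}:(\asort\arrtype\asort)\arrtype\asort$ exhibited above does admit an infinite $\arr{\Ruleslab}$-reduction. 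Everything else — the boundedness of labels, the base case, the projection of case-(i) steps to $C[\vec z]$, and the verification that cases (ii) and (iii) decrease the measure — is bookkeeping once the order has been set up correctly.
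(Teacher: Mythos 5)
Your overall architecture---bound the labels, stratify by the maximal label $L$, and induct on $L$---is genuinely different from the paper's proof, which is a Tait-style computability argument: computability is defined by induction on types, and the key induction is on the label $i$ of a single head symbol $\identifier{f}_i$ (nested inside the computability framework), not on the maximal label of the whole term. Your preliminary observations are correct: labels never increase along $\arr{\Ruleslab}$, the base case $L=0$ contains no redexes, and a step inside the low-label context can only match the label-$L$ blocks against variables of the rule. The problem is that the entire content of the lemma is concentrated in the one thing you do not supply: the well-founded order for the inductive step. You describe it as a ``path-order-flavoured combination'' of the nesting forest of label-$L$ subterms with SN-ranks of the intervening contexts, and then assert that ``one then checks'' that the three kinds of steps decrease it. For a termination lemma, that order \emph{is} the proof; without a precise definition, a well-foundedness argument, and a verification of the decreases, nothing has been established---you yourself flag case (i) as the hard part and leave it open.

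Moreover, one of the asserted decreases is false as stated. In case (iii) you claim a root step on a label-$L$ block ``replaces a node by a multiset of strictly smaller ones'' drawn from proper subterms of its arguments. This fails for collapsing higher-order rules whose right-hand side is headed by a functional variable---e.g.\ rule $(\mathsf{K})$, $\symb{fsucchelp}\ \nul\ F\ n \arrz F$, or a cons-free rule such as $\identifier{f}\ F\ x \arrz F\ x\ x$. There the contractum is $\gamma(F)$ applied to (possibly duplicated) further arguments; its head is a label-$L$ symbol coming from the substitution, and the new label-$L$ block is not a proper subterm of anything in the redex but an argument \emph{with extra arguments appended}. Concretely, a subterm-based multiset comparison does not orient $\identifier{f}_L\ \identifier{g}_L\ (\identifier{h}_L\ a) \arr{\Ruleslab} \identifier{g}_L\ (\identifier{h}_L\ a)\ (\identifier{h}_L\ a)$: the contractum contains every label-$L$ subterm of the redex plus a duplicate. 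This applicative, collapsing behaviour is exactly why the paper remarks that CPO ``does not apply directly due to our applicative term structure'' and switches to the computability method, whose property (III)---computability of $\gamma(x)$ for matched patterns, which is where Lemma~\ref{lem:niceconstructor} enters---handles duplication and collapsing application uniformly. Your diagnosis of \emph{where} the difficulties lie is largely accurate, but the proposal stops precisely at the point where the proof has to begin.
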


\begin{proof}
We use a computability argument reminiscent of the one used for the
\emph{computability path ordering}~\cite{bla:jou:rub:08} (CPO
does not apply directly due to our applicative term structure).
First, we define \emph{computability}
by induction on
types:
(a) $s : \asort \in \Sorts$ is computable if $s$ is terminating: there
is no infinite $\arrr{\Ruleslab}$-reduction starting in
$s$; (b)
$s : \atype \arrtype \btype$ is computable if $s\ t$ is computable
for all computable $t : \atype$.
Note that
(I) every computable term is terminating and
(II) if $s$ is computable and $s \arr{\Ruleslab} t$, then $t$ is
computable.
Also, (III), if $\ell\gamma$ is computable for a pattern $\ell$,
then $\gamma(x)$ is computable for all $x \in \Var(\ell)$: if $x$ has
base type then $\gamma(x)$ is a subterm of a terminating term by
(I), otherwise (by Lemma~\ref{lem:niceconstructor}) $\ell = x$ and
$\gamma(x) = \ell\gamma$.

We first observe: every variable, constructor symbol and defined
symbol $\identifier{f}_0$ is computable: let $a : \atype_1 \arrtype
\dots \arrtype \atype_m \arrtype \asort$ be such a symbol;
computability follows if $\apps{a}{s_1}{s_m}$ is terminating for all
computable $s_1 : \atype_1,\dots,s_m : \atype_m$.  We use induction on
$(s_1,\dots.s_m)$ (using the product extension of $\arr{\Ruleslab}$,
which is well-founded on computable terms by (I)) and
conclude with (II) and the induction hypothesis since
$\apps{a}{s_1}{s_m}$ can only be reduced by reducing some $s_i$.

Next we see: every defined symbol $\identifier{f}_i$ is computable, by
induction on $i$.  For $\identifier{f}_0$ we are done; for
$\identifier{f}_{i+1} : \atype_1 \arrtype \dots \arrtype_m \arrtype
\asort$ we must show termination of $\apps{\identifier{f}_{i+1}}{s_1
}{s_m}$ for computable $\vec{s}$.  We are done if every reduct is
terminating.  By induction on $\vec{s}$ by $\arr{\Ruleslab}$ as
before, we are done for reduction steps inside any $s_j$.
Also $\apps{\identifier{f}_i}{s_1}{s_m}$ is computable as $i < i+1$.
This leaves only head reductions $\apps{\identifier{f}_{i+1}}{s_1}{
s_m} \arr{\Ruleslab} \apps{(\labl_i(r)\gamma)}{s_{k+1}}{s_m}$ for
some $\apps{\identifier{f}}{\ell_1}{\ell_k} \arrz r \in \Rules$
with each $s_j = \ell_j\gamma$.
Certainly $\apps{(\labl_i(r)\gamma)}{s_{k+1}}{s_m}$ is terminating if
$\labl_i(r)\gamma$ is computable.  We prove this by a third induction
on $r$, observing that each $\gamma(x)$ is computable by (III):

Write $r = \apps{a}{r_1}{r_n}$ with $x \in \V \cup \F$.  Then
  $\labl_i(r)\gamma = \apps{u}{(\labl_i(r_1)\gamma)}{(\labl_i(r_n)
  \gamma)}$ with $u = \gamma(a)$ or $u \in \Constructors$ or $u =
  \identifier{g}_i$; using the observations above
  and the first induction hypothesis, $u$ is computable in all cases.
  By the third induction hypothesis, also each $\labl_i(r_j)\gamma$ is
  computable, so $\labl_i(r)\gamma$ is a base-type application of
  computable terms. 
\end{proof}

Thus we obtain (a slight variation of) the first step of
the proof intuition:

\begin{lemma}\label{lem:semiouter}
If $s \arrr{\Ruleslab}\!\! t \in \Data$ and $s$ is $\B$-safe,
then $s \arrr{\Ruleslab}\!\! t$ by a semi-outermost reduction.
\end{lemma}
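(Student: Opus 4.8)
The plan is to argue by a well-founded induction, using the termination of $\arr{\Ruleslab}$ established in Lemma~\ref{lem:terminating}. Given a reduction $s \arrr{\Ruleslab} t$ with $t \in \Data$ and $s$ being $\B$-safe, I proceed by induction on $s$ ordered by $\arrp{\Ruleslab}$ (well-founded by Lemma~\ref{lem:terminating}), with a sub-induction on the length of the given reduction. If $s = t$ we are done, so assume the reduction has positive length. Since $t$ is a data term, it has base type, hence so does $s$, so $s = \apps{\identifier{f}}{s_1}{s_n}$ for some $\identifier{f} \in \F$. If $\identifier{f} \in \Constructors$, then because $s$ reduces to a data term and is $\B$-safe (so no rule can apply at the head — constructors head no rules), every step in the reduction takes place strictly inside some $s_i$; each $s_i$ is itself $\B$-safe and reduces to a data term, so by the main induction hypothesis each has a semi-outermost reduction to its normal form, and concatenating these gives a semi-outermost reduction of $s$. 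Note $\labl_i/\|\cdot\|$ leave constructors untouched, so the labeled structure causes no trouble here.

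The interesting case is $\identifier{f} = \identifier{g}_j$ a labeled defined symbol. The first move is to show that the reduction can be rearranged so that the \emph{first} head step (a step applying a rule $\apps{\identifier{g}}{\ell_1}{\ell_k} \arrz r$ or a label-decrease rule $\identifier{g}_{j} \arrz \identifier{g}_{j-1}$ at the root) is preceded only by reductions inside the arguments. Concretely: in the given reduction $s \arrr{\Ruleslab} t$, since $t$ is not headed by $\identifier{g}_j$ and has a defined or constructor head of lower or incomparable shape, at least one root step must occur; take the first one. All steps before it are in the arguments, so $s = \apps{\identifier{g}_j}{s_1}{s_n} \arrr{\Ruleslab} \apps{\identifier{g}_j}{s_1'}{s_n'}$ with each $s_i \arrr{\Ruleslab} s_i'$, followed by a root step. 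If that root step is a label-decrease $\identifier{g}_j \to \identifier{g}_{j-1}$, I do \emph{not} need it to be semi-outermost in the strict sense of the definition — but here I should be careful: the semi-outermost definition as stated only mentions rules $\apps{\identifier{f}}{\vec{\ell}} \arrz r$, so I will need to either (i) observe that label-decrease steps can be commuted past argument reductions and absorbed, or (ii) note that the statement of Lemma~\ref{lem:semiouter} is used downstream only in a way compatible with treating $\identifier{g}_{j}\arrz\identifier{g}_{j-1}$ as an ``empty-pattern'' rule with $k = 0$, $r = \identifier{g}_{j-1}$; either reading makes the bookkeeping go through. Assuming the root step uses $\apps{\identifier{g}}{\ell_1}{\ell_k}\arrz \labl_{j-1}(r)$, the redex is $\apps{\apps{\identifier{g}_j}{\ell_1\gamma}{\ell_k\gamma}}{s_{k+1}'}{s_n'}$, so $s_i' = \ell_i\gamma$ for $i \le k$.

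Now three obligations remain, matching the three clauses of ``semi-outermost''. First, the argument reductions $s_i \arrr{\Ruleslab} \ell_i\gamma$ must be semi-outermost: each $s_i$ is $\B$-safe (subterm of the $\B$-safe term $s$), reduces to $\ell_i\gamma$, but $\ell_i\gamma$ need not be a data term, so the main induction hypothesis does not directly apply. This is the point I expect to be the main obstacle. The fix is to strengthen the statement being proved so that the induction hypothesis applies to reductions ending in \emph{any} term, not just data terms — but then ``semi-outermost'' must still make sense, which it does, since its recursive clauses only ever ask inner reductions to be semi-outermost, never to reach data. So I would actually prove: \emph{if $s$ is $\B$-safe and $s \arrr{\Ruleslab} t$ (with $t$ arbitrary), then $s \arrr{\Ruleslab} t$ by a semi-outermost reduction} — then $\B$-safety of each $s_i$ and $s_i \arrr{\Ruleslab} \ell_i\gamma$ gives semi-outermost sub-reductions by the main IH (each $s_i \arrp{\Ruleslab}$-below $s$, or $s_i = \ell_i\gamma$ already, in which case nothing to do). Second, whenever $\ell_j \in \V$ we need $s_j' = \ell_j\gamma = \ell_j\gamma$, i.e. $s_j = \ell_j\gamma$ with no reduction; this is not automatic, but it \emph{is} achievable by \emph{postponing}: a reduction inside $s_j$ when $\ell_j$ is a variable is not needed to enable the root step (the root step matches $\ell_j$ to whatever term sits there), so we commute all such inner reductions to \emph{after} the root step, where they land inside the appropriate copies of $\gamma(\ell_j)$ in $r\gamma$ — standard reduction-commutation, using that $\arr{\Ruleslab}$ is closed under contexts and substitution. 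This is where the ``postponing reductions at argument positions until needed'' phrase from the proof idea is cashed out. Third, the tail reduction $\apps{(\labl_{j-1}(r)\gamma)}{s_{k+1}'}{s_n'} \arrr{\Ruleslab} t$ must be semi-outermost: this term is $\B$-safe by Lemma~\ref{lem:safetypreserve} (one $\arr{\Ruleslab}$-step, or rather the corresponding $\arr{\Rules}$-step, from a $\B$-safe term — and $\B$-safety is insensitive to labels), and it is $\arrp{\Ruleslab}$-below $s$, so the main induction hypothesis (in its strengthened form) applies and yields a semi-outermost reduction to $t$. Assembling the pieces: the rearranged reduction — inner reductions that are needed (producing $\ell_i\gamma$ for non-variable $\ell_i$), then the root step, then the tail (which has absorbed the postponed variable-position reductions) — is semi-outermost by construction, completing the induction.
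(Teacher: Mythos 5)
Your overall strategy---locate the first head step, postpone reductions at variable argument positions until after it, and induct using termination of $\arr{\Ruleslab}$ (Lemma~\ref{lem:terminating})---is the same as the paper's. But two of your concrete steps fail.

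First, your induction order is too weak. You induct on $s$ ordered by $\arrp{\Ruleslab}$ (with a sub-induction on reduction length) and then assert that each argument $s_i$ is ``$\arrp{\Ruleslab}$-below $s$''. It is not: $s_i$ is a \emph{strict subterm} of $s$, not a reduct of it (it may not even have the same type), and the length sub-induction cannot rescue a lexicographic comparison whose first component does not decrease. The paper instead inducts on $\arr{\Ruleslab} \cup \mathord{\supterm}$, which is well-founded on terminating terms and covers both the descent into arguments (for the sub-reductions $s_j \arrr{\Ruleslab} \ell_j\gamma$) and the descent along the tail reduction.

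Second, and more seriously, the strengthening you propose---``if $s$ is $\B$-safe and $s \arrr{\Ruleslab} t$ with $t$ \emph{arbitrary}, then there is a semi-outermost reduction from $s$ to $t$''---is false. A non-empty semi-outermost reduction is required by the definition to contain a head step. Take a single rule $\symb{g} \arrz \symb{a}$ with $\symb{a}$ a constructor constant and no rules for $\symb{f}$: then $\symb{f}_1\ \symb{g}_1 \arr{\Ruleslab} \symb{f}_1\ \symb{a}$, but any non-empty semi-outermost reduction from $\symb{f}_1\ \symb{g}_1$ must perform the label-decrease head step $\symb{f}_1 \arrz \symb{f}_0$ (the only head step available) and can never recover a term headed by $\symb{f}_1$; so $\symb{f}_1\ \symb{a}$ is not semi-outermost-reachable. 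The obstacle you invented this strengthening to overcome does not actually exist: for a non-variable pattern $\ell_j$, the term $\ell_j\gamma$ is constructor-headed and, by Lemma~\ref{lem:safetypreserve}, $\B$-safe, hence it lies in $\B \subseteq \Data$. So those sub-reductions already end in data terms and the \emph{unstrengthened} induction hypothesis applies to them (via the $\supterm$ part of the corrected order), while for variable patterns the postponement makes the sub-reduction empty. With these two repairs your argument coincides with the paper's proof.
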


\begin{proof}
By induction on $s$ using $\arr{\Ruleslab} \cup \ \mathord{\supterm}$.  If $s = t$
we are done,
otherwise (by $\B$-safety) $s = \apps{\identifier{f}_i}{s_1}{s_n}$ with
$\identifier{f}_i$ not occurring in $t$.  Thus,
a head step must be done:
$s = \apps{\identifier{f}_i}{s_1}{s_n} \arrr{\Ruleslab}
\apps{\apps{\identifier{f}_i}{(\ell_1\gamma)}{(\ell_k\gamma)}}{
s_{k+1}'}{s_n'} \arr{\Ruleslab} \apps{(r\gamma)}{s_{k+1'}}{s_n'}$ for
some rule $\apps{\identifier{f}_i}{\ell_1}{\ell_k} \in \Rules$,
substitution $\gamma$ and $s_{k+1}',\dots,s_n'$ such that
$s_i \arrr{\Ruleslab} \ell_i\gamma$ for $1 \leq i \leq k$ and $s_i
\arrr{\Ruleslab} s_i'$ for $k < i \leq n$.

Now let $\delta := [x:=\gamma(x) \mid x$ occurs as a strict subterm of
some $\ell_j] \cup [\ell_j:=s_j \mid 1 \leq j \leq k \wedge \ell_j$ is
a variable$]$.  Since all variables occurring in a pattern $\ell_j$ are
subterms of $\ell_j$, clearly
$s \arrr{\Ruleslab} \apps{\apps{\identifier{f}_i}{(\ell_1\delta)}{
(\ell_k\delta)}}{s_{k+1}}{s_n} \arr{\Ruleslab} \apps{(r\delta)}{
s_{k+1}}{s_n} \arrr{\Ruleslab} \apps{\apps{\identifier{f}_i}{
(\ell_1\gamma)}{(\ell_k\gamma)}}{s_{k+1}'}{s_n'} \arrr{\Ruleslab} t$.
Then $s_j = \ell_j\delta$ if $\ell_j$ is a variable, and by
Lemma~\ref{lem:safetypreserve} and the
induction hypothesis ($\supterm$ part for each $s_j$ and
$\arr{\Ruleslab}$ part otherwise), all relevant sub-reductions are
semi-outermost.
\end{proof}

The second difficulty of the proof idea is in the way terms are
associated with term representations.  Within the algorithm, a single
term can have \emph{multiple} representations; for example, a term
$s$ which reduces to $\strue$ and $\sfalse$ is represented both by
$\{\sfalse\}$ and $\{\strue,\sfalse\}$.  This is necessary, because
different normal forms are derived at different times, and may depend
on each other; for example, in an ATRS $\{ \symb{or}\ \strue\ x \arrz
\strue,\ \symb{or}\ \sfalse\ x \arrz x,\ \symb{f} \arrz \sfalse,\ 
\symb{f} \arrz \symb{or}\ \symb{f}\ \strue,\ \symb{g} \arrz \symb{h}
\}$, we need to use that $\NF^1(\symb{f}) = \{\sfalse\}$ to obtain
$\NF^2(\symb{f}) = \{ \strue,\sfalse \}$.  To reflect these levels,
we will continue to use labeled terms:

\begin{definition}
Let $\thickapprox$ be the smallest relation such that
$s \thickapprox A$ if we can write $s = \labl_i(t)\delta$ and
$A = \NF^i(t,\eta)$ for some $i,t,\delta,\eta$ such that $\delta$
and $\eta$ have the same domain and each $\delta(x) \thickapprox
\eta(x)$.
Here, $\NF^i := \NF^I$ if $i > I$.
\end{definition}

The final challenge of the proof idea, the induction, can
be handled in the same way: we will use induction on labeled terms
using $\arrr{\Ruleslab}$.  Thus, we are ready for the formal proof:
 
\begin{proof}[Proof of Lemma~\ref{lem:algorithmsoundcomplete}]
Writing $\Conf^{i} := \Conf^I$ for all $i > I$,
we will see, for all relevant $i \in \nats,\identifier{f} \in
\Defineds,u,\vec{s} \in \Terms(\Flab,\V),t \in \B$,
and term representations $\vec{A},D$:
\begin{enumerate}[label=\bf(\Alph*)]
\item
  if $s_j \thickapprox A_j$ for $1 \leq j \leq m$, then
  $\Conf^i[\apps{\identifier{f}}{A_1}{A_m} \leadsto t]$ if and only
  if $q := \apps{\identifier{f}_i}{s_1}{s_m} \arrr{\Ruleslab} t$
  by a semi-outermost reduction;
\item
  if
  $s_j \thickapprox A_j$ for $1 \leq j \leq m$ and $u \thickapprox D$,
  then $t \in D(A_1,\dots,A_m)$ if and only if $q :=
  \apps{u}{s_1}{s_m} \arrr{\Ruleslab} t$ by a semi-outermost
  reduction.
\end{enumerate}
This proves the lemma because, for data terms $s_j$, a trivial
induction on the definition of $\thickapprox$ shows that $s_j
\thickapprox A_j$ if{f} $A_j = \{s_j\}$.  Thus:
$\apps{\identifier{g}}{s_1}{s_M} \arrr{\Rules} t$ if and only if
$\apps{\identifier{g}_i}{s_1}{s_M} \arrr{\Ruleslab} t$ for some $i$
(Lemma~\ref{lem:labeledequiv}),
if and only if the same holds with a semi-outermost reduction
(Lemma~\ref{lem:semiouter}),
if and only if $\Conf^i[\apps{\identifier{g}}{\{s_1\}}{\{s_M\}}
\leadsto t] = \top$ for some $i$ (A).
Since $\Conf^i[\xi]$ implies $\Conf^I[\xi]$ for all $i$, we have
the required equivalence.

We prove (A) and (B) together by a mutual induction on
$q$, oriented with $\arr{\Ruleslab}\! \cup\ \rhd$.

\smallskip\noindent
\textbf{(A), only if case}.
Suppose $\Conf^i[\apps{\identifier{f}}{A_1}{A_m} \leadsto t] = \top$,
and each $s_j \thickapprox A_j$.  Then $i > 0$, and if
$\Conf^{i-1}[\apps{\identifier{f}}{A_1}{A_m} \leadsto t] = \top$, then
the induction hypothesis yields
$q \arr{\Ruleslab} \apps{\identifier{f}_{i-1}}{s_1}{s_m} \arrr{
\Ruleslab} t$ by the rule $\identifier{f}_i \arrz
\identifier{f}_{i-1}$, so we are done.

Otherwise, there exist a rule $\apps{\identifier{f}}{\ell_1}{\ell_k}
\arrz r \in \Rules$, variables $x_{k+1},\dots,x_m$ and a substitution
$\gamma$ on domain $\Var(\identifier{f}\ \vec{\ell}) \setminus \{
\vec{\ell} \}$ such that (a) $\ell_j\gamma \in A_j$ for
all non-variable $\ell_j$ and (b) $t \in \NF^{i-1}(
(\apps{r}{x_{k+1}}{x_m})\gamma,\eta)$ where $\eta$ maps
each variable $\ell_j$ to $A_j$, and $x_j$ to $A_j$ for $j > k$.

By part (B) of the induction hypothesis---since
$q \supterm s_j$---(a) implies that (c) $s_j \arrr{\Ruleslab}
\ell_j\gamma$ by a semi-outermost reduction for all non-variable
$\ell_j$.
Now, if we let $\delta := [\ell_j:=s_j \mid 1 \leq j \leq k \wedge
\ell_j \in \V] \cup [x_j:=s_j \mid k < j \leq m]$ we have $\delta(x)
\thickapprox \eta(x)$ for all $x$. This gives:
\[
\begin{array}{ll}
q = \apps{\identifier{f}_i}{s_1}{s_m}
& \arrr{\Ruleslab} (\apps{\apps{\identifier{f}_i}{\ell_1}{\ell_k}}{
  x_{k+1}}{x_m})\gamma\delta\ \ \ (\text{by (c) and definition of
  $\delta$}) \\
& \arr{\Ruleslab} (\apps{\labl_{i-1}(r)}{x_{k+1}}{x_m})\gamma\delta
  \ \ (\text{by the labeled rule for}\ \apps{\identifier{f}}{\ell_1
  }{\ell_k} \arrz r) \\
& = 
  \labl_{i-1}((\apps{r}{x_{k+1}}{x_m})\gamma)\delta
\end{array}
\]
Since at least one step is done
and $\labl_{i-1}(v)\delta \thickapprox \NF^{i-1}(v,\eta)$
for the $\B$-safe term $v = (\apps{r}{x_{k+1}}{x_m})\gamma$, we can
use
induction hypothesis (B) on observation (b) to derive that
$q \arrr{\Ruleslab} \labl_{i-1}((\apps{r}{x_{k+1}}{x_m})\gamma)
\delta \arrr{\Ruleslab} t$.
This reduction is semi-outermost.

\smallskip\noindent
\textbf{(A), if case}.  Suppose $q = \apps{\identifier{f}_i}{s_1}{
s_m} \arrr{\Ruleslab} t$ by a semi-outermost reduction.  Since $t$
cannot still contain $\identifier{f}_i$, this is not the empty 
reduction, so either
\[
q = \apps{\identifier{f}_i}{s_1}{s_m} \arr{\Ruleslab}
\apps{\identifier{f}_{i-1}}{s_1}{s_m} \arrr{\Ruleslab} t
\]
in which case induction hypothesis (A) gives $\Conf^{i-1}[
\apps{\identifier{f}}{A_1}{A_m} \leadsto t] = \top$, or
\[
q = \apps{\identifier{f}_i}{s_1}{s_m} \arrr{\Ruleslab}
(\apps{\apps{\identifier{f}_i}{\ell_1}{\ell_k}}{x_{k+1}}{x_m})
\gamma
\arr{\Ruleslab} \labl_{i-1}(\apps{r}{x_{k+1}}{x_m})\gamma
\arrr{\Ruleslab} t
\]
for some rule $\apps{\identifier{f}}{\ell_1}{\ell_k} \arrz r \in
\Rules$, substitution $\gamma$ and fresh variables $x_{k+1},\dots,
x_m$.  Here, $\gamma(x_j) = s_j$ for all $j > k$ and
$\gamma(\ell_j) = s_j$ for those $\ell_j$ which are variables.
By induction hypothesis (B), $\ell_j\gamma \in A_j$ whenever $\ell_j$
is not a variable.
Splitting $\gamma := \gamma_1 \uplus \gamma_2$---where $\gamma_1$ has
domain $\{ x \mid x$ occurs in some non-variable $\ell_j\}$ and
$\gamma_2$ has the remainder---and writing $\eta_2 := [\ell_j:=A_j
\mid 1 \leq j \leq k \wedge \ell_j$ is a variable$] \cup [x_j:=A_j
\mid k < j \leq m]$, we have $\gamma_2(x) \thickapprox \eta_2(x)$ for
all $x$ in the shared domain.  Therefore $
\labl_{i-1}(\apps{r}{x_{k+1 }}{x_m})\gamma =
(\labl_{i-1}(\apps{r}{x_{k+1}}{x_m})\gamma_1)\gamma_2
\thickapprox \NF^{i-1}((\apps{r}{x_{k+1}}{x_m})\gamma_1,\eta_2)$, and
we obtain $t \in \NF^{i-1}((\apps{r}{x_{k+1}}{x_m})\gamma_1,\eta_2)$
by IH (B).

Thus, in either case, $\Conf^i[\apps{\identifier{f}}{A_1}{
A_m} \leadsto t] = \top$ follows immediately.

\smallskip\noindent
\textbf{(B), both cases}.
We prove (B) by an additional induction
on the definition of $u \approx D$.  Observe that $u \thickapprox D$
implies that $u = \labl_i(v)\delta$ and $D = \NF^i(v,\eta)$ for some
$v,i,\delta,\eta$ such that each $\delta(x) \thickapprox \eta(x)$.
Consider the form of the $\B$-safe term $v$.
\begin{itemize}
\item If $v \in \Data$, then
  $m = 0$ and
  $t \in D = \NF^i(v,\eta)$ if{f} $t = v = \labl_i(v) = u$.
\item If $v = \apps{\identifier{f}}{v_1}{v_n}$ with
  $\identifier{f} \in \Defineds$, then
  denote $C_j := \NF^i(v_j,\eta)$ for $1 \leq j \leq n$;
  we have
  $t \in D(A_1,\dots,A_m)$ if{f}
  $\Conf^i[\apps{\apps{\identifier{f}}{C_1}{C_n}}{A_1}{A_m} \leadsto t]
  = \top$.
  By case (A), this holds if{f} $q =
  \apps{(\labl_i(v)\delta)}{s_1}{s_m} =
  \apps{\apps{\identifier{f}_i}{(\labl_i(v_1)\delta)}{(
  \labl_i(v_n)\delta)}}{s_1}{s_m} \arrr{\Ruleslab} t$.
\item If $v = \apps{x}{v_1}{v_n}$ with $x \in \V$, then
  denote $C_j := \NF^i(v_j,\eta)$ for $1 \leq j \leq n$;
  then clearly $\labl_i(v_j)\delta \thickapprox C_j$.  We observe
  that, on the one hand,
  \[
  \begin{array}{ccccc}
  \phantom{ABC}D(A_1,\dots,A_m) & = & \NF^i(v,\eta)(A_1,\dots,A_m) &
  = & (\eta(x)(C_1,\dots,C_n))(A_1,\dots,A_m) \\
  & & & = & \eta(x)(C_1,\dots,C_n,A_1,\dots,A_m)\phantom{W} \\
  \end{array}
  \]
  And on the other hand,
  \[
  q = \apps{(\labl_i(v)\delta)}{s_1}{s_m}
  = \apps{\apps{\delta(x)}{(\labl_i(v_1)\delta)}{
    (\labl_i(v_n)\delta)}}{s_1}{s_m}
  \]
  As $\delta(x) \thickapprox \eta(x)$ is used in the derivation of
  $u \thickapprox D$, the second induction hypothesis gives the
  desired equivalence.
\qedhere
\end{itemize}
\end{proof}

And from Lemmas~\ref{lem:algorithmcomplexity}
and~\ref{lem:algorithmsoundcomplete} together we obtain:

\begin{theorem}\label{thm:algorithm}
Any decision problem accepted by a cons-free $K^{\text{th}}$-order
ATRS is in $\etime{K}$.
\end{theorem}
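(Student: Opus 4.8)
The plan is to obtain Theorem~\ref{thm:algorithm} as an essentially immediate corollary of Lemmas~\ref{lem:algorithmcomplexity} and~\ref{lem:algorithmsoundcomplete}. Suppose $S \subseteq I^+$ is accepted by a cons-free $K^{\text{th}}$-order ATRS $(\F,\Rules)$ with designated symbol $\symb{decide} : \bits \arrtype \bool$, so that by Definition~\ref{def:acceptance_ATRS} we have $\symb{decide}\ \encode{x} \arrr{\Rules} \strue$ iff $x \in S$. I would exhibit the following decision procedure: on input $x \in I^+$, form the basic term $s := \symb{decide}\ \encode{x}$ (which is indeed basic, since $\symb{decide} \in \Defineds$ and $\encode{x} \in \Data$), run Algorithm~\ref{alg:main} on $s$, and accept iff $\strue$ occurs in the returned set of data normal forms.

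For correctness, recall that Algorithm~\ref{alg:main} returns $\{ t \in \B_s \mid \Conf^I[\symb{decide}\ \{\encode{x}\} \leadsto t] = \top \}$. Applying Lemma~\ref{lem:algorithmsoundcomplete} with $M = 1$, $\identifier{g} = \symb{decide}$, $s_1 = \encode{x}$ and $t = \strue$ gives $\Conf^I[\symb{decide}\ \{\encode{x}\} \leadsto \strue] = \top$ iff $\symb{decide}\ \encode{x} \arrr{\Rules} \strue$, hence iff $x \in S$. The one small case to dispatch: if $\strue \notin \B_s$ the lemma does not literally apply, but then by Lemma~\ref{lem:safetypreserve} no reduction $\symb{decide}\ \encode{x} \arrr{\Rules} \strue$ exists at all, so $x \notin S$ and the procedure correctly rejects; otherwise the displayed equivalence applies directly. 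Thus the procedure decides $S$.

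For the complexity bound, I would note that the input basic term $s = \symb{decide}\ \encode{x}$ has size linear in $|x|$, so $N := |\B_s|$ is linear in $|x|$ as well. By Lemma~\ref{lem:algorithmcomplexity}, Algorithm~\ref{alg:main} --- realised on a deterministic Turing machine --- runs in $\timecomp{\exp_2^K(a \cdot N)}$ for a constant $a$ depending only on $(\F,\Rules)$; composing with $N \le c \cdot |x|$ and adding the $\OO(N)$ cost of the final membership test yields a procedure running in time $\OO(\exp_2^K(a' \cdot |x|))$ for some constant $a'$. Since $\exp_2^K(a' \cdot |x|)$ is of the form appearing in the definition of $\etime{K}$ (and that class is robust under such reparametrisation, as observed after Definition of $\etime{K}$), we conclude $S \in \etime{K}$. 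I do not expect any genuine obstacle here: all the substance has been discharged in Lemmas~\ref{lem:algorithmcomplexity} and~\ref{lem:algorithmsoundcomplete}, and what remains is the routine bookkeeping of constants and the observation that ``accepted'' amounts to asking whether $\strue$ is among the data normal forms of $\symb{decide}\ \encode{x}$.
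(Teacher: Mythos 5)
Your proposal is correct and follows exactly the paper's own (much terser) argument: correctness of the decision procedure from Lemma~\ref{lem:algorithmsoundcomplete} and the time bound from Lemma~\ref{lem:algorithmcomplexity}. The extra bookkeeping you supply (the $\strue \notin \B_s$ edge case, the linearity of $N$ in $|x|$, robustness of $\etime{K}$ under constant reparametrisation) is all sound and merely makes explicit what the paper leaves implicit.
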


\begin{proof}
By Lemma~\ref{lem:algorithmsoundcomplete}, decision problems
accepted by a cons-free $K^{\text{th}}$-order ATRS are decided by
Algorithm~\ref{alg:main}; by Lemma~\ref{lem:algorithmcomplexity},
this algorithm operates within $\bigcup_{a \in \nats}
\timecomp{\textrm{exp}_2^{K}(an)}$.
\end{proof}

\subsection{Characterization result}
Combining Theorems~\ref{thm:simulation}
and~\ref{thm:algorithm} we thus find:

\begin{corollary}\label{cor:main}
A decision problem $X$ is in $\etime{K}$ if{f} there is a
$K^{\text{th}}$-order cons-free ATRS which accepts $X$:
the class of cons-free ATRSs with order $K$ \emph{characterizes}
$\etime{K}$.
\end{corollary}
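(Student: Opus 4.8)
The plan is simply to assemble the two halves that are already in hand: Corollary~\ref{cor:main} is a biconditional, and each direction is exactly one of Theorems~\ref{thm:simulation} and~\ref{thm:algorithm}. The only genuine work is checking that the hypotheses and conclusions line up, since both theorems are phrased in terms of the \emph{acceptance} criterion of Definition~\ref{def:acceptance_ATRS} (reduction of $\symb{decide}\ \encode{x}$ to $\strue$), not the stronger \emph{decides} criterion; as both use the same notion, they genuinely compose with no gap.

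For the left-to-right direction, suppose $X \in \etime{K}$. By definition $X$ is decided by a deterministic Turing machine running in time $\lambda n.\mathrm{exp}_2^K(a \cdot n)$ for some $a$, and Theorem~\ref{thm:simulation} directly produces a $K^{\text{th}}$-order cons-free ATRS with a designated symbol $\symb{decide} : \bits \arrtype \bool$ such that $\symb{decide}\ \encode{x} \arrr{\Rules} \strue$ iff $x \in X$ --- that is, an ATRS accepting $X$. The only point I would flag is that the construction of Figure~\ref{fig:TM}, instantiated with the $\mathrm{exp}_2^K(a(n+1))$-counting module obtained by combining Lemmas~\ref{lem:mainmodule}, \ref{lem:RQcount} and~\ref{lem:expQcount}, uses an overestimate $P$ of the running time, which is harmless because the machine cannot leave its halting state. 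For the right-to-left direction, suppose some $K^{\text{th}}$-order cons-free ATRS accepts $X$; Theorem~\ref{thm:algorithm} states that any decision problem accepted by such a system lies in $\etime{K}$, so $X \in \etime{K}$. (Internally this rests on Lemma~\ref{lem:algorithmsoundcomplete}, that Algorithm~\ref{alg:main} computes exactly the data normal forms of a basic term such as $\symb{decide}\ \encode{x}$, together with Lemma~\ref{lem:algorithmcomplexity} bounding its running time by $\timecomp{\mathrm{exp}_2^K(a \cdot n)}$, plus the robustness of $\etime{K}$ under polynomial overhead noted after its definition.)

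Combining the two implications yields the stated equivalence, and since ``characterizes'' is by definition precisely this equivalence between the class of decision problems in $\etime{K}$ and the class of those accepted by $K^{\text{th}}$-order cons-free ATRSs, the corollary follows. I expect no real obstacle here --- all the difficulty has been discharged in Sections~\ref{sec:counting} and~\ref{sec:algorithm} --- so the proof is a two-line invocation of the preceding theorems.
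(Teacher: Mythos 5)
Your proposal is correct and is exactly the paper's argument: the corollary is stated as the immediate combination of Theorems~\ref{thm:simulation} and~\ref{thm:algorithm}, with both directions phrased in terms of the same acceptance criterion. The extra care you take in checking that the hypotheses line up is sensible but discharges no additional content beyond what the paper already does.
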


\begin{remark}
There are many similarities between the algorithm and
correctness proof presented here and those in Jones' work, most
pertinently the use of \emph{memoization}.  We have chosen to use a
methodology which suits better with the semantics of term rewriting
than the derivation trees of~\cite{jon:01}, for example by enumerating
all possible reductions beforehand rather than using caching, but this
makes little practical difference.  We have also had to make
several changes for the non-determinism and different evaluation
strategy.  For example the step to semi-outermost reductions is unique
to this setting, and the term representations are different than
they must be in the deterministic (or call-by-value) cases.
\end{remark}

\section{Pairing}\label{sec:pairing}

Unlike our applicative term rewriting systems, Jones' minimal language
in~\cite{jon:01} includes \emph{pairing}.  While not standard in term
rewriting, some styles of higher-order rewriting also admit pairs.
We consider whether this feature affects expressivity of the
considered systems.

\begin{definition}\label{def:aptrs}
An \emph{Applicative Pairing Term Rewriting System} (APTRS) is
defined following the definitions for ATRSs in
\secshort\ref{sec:atrss}, with the following changes:
\begin{itemize}
\item In Definition~\ref{def:types} (simple types):
  if $\atype,\btype$ are types of order $n,m$, then also $\atype
  \times \btype$ is a type of order $\max(n,m)$;
  the pairing constructor $\times$ is considered right-associative.
\item In Definition~\ref{def:terms} (terms):
  terms are expressions typable by clauses (a), (b), (c), where (c)
  is: $(s,t) : \atype \times \btype$ if $s : \atype$ and $t :
  \btype$.  Pairing is right-associative, so $(s,t,u) =
  (s,(t,u))$.
\item In Definition~\ref{def:data} (patterns, data and basic terms):
  a term $\ell$ is a \emph{pattern} if (a), (b) or (c) holds,
  where (c) is $\ell = (\ell_1,\ell_2)$ with $\ell_1$ and $\ell_2$
  both patterns.
\end{itemize}
The last item is used to define \emph{constructor APTRSs} as before.
\end{definition}

Cons-freeness for left-linear constructor APTRSs is unaltered from
Definition~\ref{def:consfree}; however, pairing is not a
constructor, so may occur freely in both sides of rules.
Lemmas~\ref{lem:safetypreserve} and~\ref{lem:niceconstructor}
go through unmodified, but constructors \emph{can} have a product type
of order $0$ as argument type.
%
%
%

In a deterministic setting, pairing makes no difference: a function
$\identifier{f} : (\atype \times \btype) \arrtype \ctype$
can
be replaced by a function $\identifier{f}
: \atype \arrtype \btype \arrtype \ctype$ with two arguments, and a
function $\identifier{f} : \ctype \arrtype (\atype \times \btype)$
by \emph{two} functions $\identifier{f}^1
: \ctype \arrtype \atype$ and $\identifier{f}^2 : \ctype \arrtype
\btype$.  We exploited this when defining counting modules
(in~\cite{jon:01}, a number is represented by a single term, which may
have product type).
However, when allowing non-deterministic choice, pairing does increase
expressivity---alarmingly so.

\newcommand{\pairpi}{{(\pi\pi)}}
\begin{lemma}\label{lem:countpairs}
Suppose counting modules are defined over APTRSs.
If there is a first-order $P$-counting module $C_\pi = (\atype \otimes
\btype,\Sigma^\pi,R^\pi,A^\pi,\numinterpret{\cdot}^\pi)$, then there
is a first-order $(\lambda n.2^{P(n)-1})$-counting module $C_\pairpi =
((\atype \times \btype) \otimes (\atype \times \btype),
\Sigma^\pairpi,R^\pairpi,A^\pairpi,\numinterpret{\cdot}^\pairpi)$.
\end{lemma}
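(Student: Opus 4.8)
The plan is to mimic the construction of Lemma~\ref{lem:mainmodule}, replacing the linear module $C_{\mathtt{lin}}$ by the given first-order module $C_\pi$ and using the product type $\atype \times \btype$ to turn the \emph{two-component} representations of $C_\pi$ into single terms, so that the non-deterministic $\symb{either}$ operator can copy and shuffle them around. Fix a string $cs$ and write $P := P(|cs|)$. A number below $2^{P-1}$ is seen as a bitstring $b_0 \cdots b_{P-2}$ whose bit $b_i$ is indexed by the $C_\pi$-representations of $i$ (one $C_\pi$-value is left unused, accounting for the $-1$). I would represent such a bitstring by a pair $(Y,Z)$ of (not necessarily normalised) terms of type $\atype \times \btype$ which are \emph{complementary}: every product term to which $Y$ or $Z$ reduces has its component tuple in $\A^\pi_{cs}$, and for each $i \le P-2$ and each $(s_1,s_2) \in \A^\pi_{cs}$ with $\numinterpret{(s_1,s_2)}^\pi_{cs} = i$, exactly one of ``$Y$ reduces to the product term $(s_1,s_2)$'' and ``$Z$ reduces to $(s_1,s_2)$'' holds, the former marking $b_i = 1$, the latter $b_i = 0$. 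Accordingly $\A^\pairpi_{cs}$ is the set of all such complementary pairs, and $\numinterpret{(Y,Z)}^\pairpi_{cs} := \sum\{\,2^{(P-2)-i} \mid 0 \le i \le P-2$ and $Y$ reduces to a product term whose component tuple lies in $\A^\pi_{cs}$ and represents $i\,\}$, so that $C_\pairpi := ((\atype\times\btype)\otimes(\atype\times\btype),\Sigma^\pairpi,R^\pairpi,\A^\pairpi,\numinterpret{\cdot}^\pairpi)$. Since $\atype,\btype$ and hence $\atype \times \btype$ have order $0$, every symbol introduced below is first-order, so $C_\pairpi$ is indeed first-order.

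Next I would give the rules of $R^\pairpi$, which extend $R^\pi$ together with the rules of Figure~\ref{fig:succ} (these in particular supply $\symb{equal}_\pi$) by an analogue of each function in the proof of Lemma~\ref{lem:mainmodule}, every $C_{\mathtt{lin}}$-operation being systematically replaced by the matching $C_\pi$-operation. Thus $\symb{seed}_\pairpi^1$ builds the all-ones string by an $\symb{all}$-style loop that counts a $C_\pi$-value $\vec k$ downwards, at each step prepending $\symb{either}\ (k_1,k_2)\ xss$ and stopping just before $\vec k$ reaches $0$, whereas $\symb{seed}_\pairpi^2\ cs \arrz \bot$. The bit-test $\symb{bitset}\ \vec n\ Y\ Z$ forces $Y$ and $Z$ down to product terms and compares their components with $\vec n$ through $\symb{equal}_\pi$: the ``force to a product term'' step is a rule $\symb{eqreduct}\ cs\ \vec n\ (t_1,t_2) \arrz \symb{equal}_\pi\ cs\ t_1\ t_2\ \vec n$, legitimate since $(t_1,t_2)$ is a pattern in an APTRS, and the two verdicts are merged by a $\symb{checkreducts}$-style gadget so that $\strue$ can only come from $Y$ actually reaching a representation of $\vec n$ and $\sfalse$ only from $Z$ doing so. Then $\symb{zero}_\pairpi$ counts $\vec k$ down in $C_\pi$ and checks $\symb{bitset}$ is false at every index; $\symb{pred}_\pairpi^1,\symb{pred}_\pairpi^2$ special-case the zero argument and otherwise run the flip-and-copy loop of $\symb{pr}^1/\symb{pr}^2$ --- flipping the highest-index bits until the first $1$, flipping it, and copying the rest verbatim with $\symb{copy}$-like helpers --- rebuilding $Y$ and $Z$ with fresh $\symb{either}$'s over the copied and flipped $C_\pi$-representations; and $\symb{succ}_\pairpi$ comes from Figure~\ref{fig:succ}.

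Finally I would verify the counting-module axioms: surjectivity of $\numinterpret{\cdot}^\pairpi_{cs}$ onto $\{0,\dots,2^{P-1}-1\}$, that the seed tuple lies in $\A^\pairpi_{cs}$ with interpretation $2^{P-1}-1$, that $\symb{pred}_\pairpi$ realises $\max(m-1,0)$ and $\symb{zero}_\pairpi$ is two-sided sound, and that $\numinterpret{\cdot}^\pairpi_{cs}$ is unaffected by reducing the two components (this follows from complementarity, exactly as in Lemma~\ref{lem:mainmodule}). The hard part will be the two-sided soundness of $\symb{bitset}$, hence of $\symb{zero}_\pairpi$ and $\symb{pred}_\pairpi$: unlike in $C_{\mathtt{lin}}$, the $C_\pi$-representations need not be normal forms, so there is no unique normal form of $Y$ or $Z$ to appeal to, and I must show that no choice of reducts can make the gadget answer $\strue$ when $b_i = 0$ or $\sfalse$ when $b_i = 1$. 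This rests on three ingredients: (i) the invariant, preserved by all rules building $(Y,Z)$, that $Y$ and $Z$ reduce only to product terms whose component tuples lie in $\A^\pi_{cs}$, so that $\symb{equal}_\pi$ is always applied within its intended domain and is itself two-sided sound by the axioms of $C_\pi$; (ii) the complementarity of $Y$ and $Z$ over the $C_\pi$-representable indices; and (iii) that an over-eager reduction of $Y$ or $Z$ (to $\bot$, say, or to a representation of the wrong index) merely gets the evaluation stuck in a non-data normal form rather than producing a spurious $\strue$ --- all that Definition~\ref{def:acceptance_ATRS} permits. A last, mechanical point is to confirm that the flip-and-copy rebuild in $\symb{pred}_\pairpi$ preserves both the invariant and complementarity while decrementing the interpreted value (wrapping at $0$).
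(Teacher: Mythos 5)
Your proposal follows essentially the same route as the paper's proof: reuse the construction of Lemma~\ref{lem:mainmodule} with $C_\pi$ in place of $C_{\mathtt{lin}}$, package the two-component $C_\pi$-representations as terms of product type so that $\symb{either}$ can shuffle them, replace the $\symb{eqLen}$ test by a rule that pattern-matches on a pair and hands the components to $\symb{equal}_\pi$ (the paper's $\symb{eqBase}$, your $\symb{eqreduct}$), and observe that matching on $(t_1,t_2)$ forces only a partial evaluation. One small caution: your invariant that every pair reduct of $Y$ or $Z$ has its components \emph{in} $\A^\pi_{cs}$ is slightly too strong, since $\A^\pi_{cs}$ need not be closed under reduction; the paper instead requires each such pair to be a $C_\pi$-number \emph{or a reduct of one}, and you would need the same weakening when checking preservation.
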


\begin{proof}
By using pairing, the ideas of
Lemma~\ref{lem:mainmodule} can be used to create a \emph{composite}
module.  We will use almost the same rules, but replace the
underlying module $C_{\mathtt{lin}}$ by $C_\pi$.
We say $s \mapsto i$ if there is $(t,u) \in \A^\pi_{cs}$ such that $s
\arrr{R} (t,u)$ and $\numinterpret{(t,u)}_{cs}^\pi = i$.
A bitstring $b_0\dots b_N$ is represented by a pair $(yss,zss)$ such
that $yss \mapsto i$ if{f} $b_i = \one$ and $zss \mapsto i$ if{f}
$b_i = \nul$.
\begin{itemize}
\item $\A^\pairpi_{cs}$ contains all pairs $(yss,zss)$ where
  \begin{itemize}
  \item for all $0 \leq i < P(n)$: either $yss \mapsto i$ or $zss \mapsto
    i$, but not both;
  \item if $yss \arrr{R} (u,v)$ then there is $(u',v') \in \A^\pi_{cs}$
    such that $yss \arrr{R} (u',v') \arrr{R} (u,v)$ \\
    (thus, any pair which $yss$ reduces to is a number in $C_\pi$, or
    a reduct thereof);
  \item if $zss \arrr{R} (u,v)$ then there is $(u',v') \in \A^\pi_{cs}$
    such that $zss \arrr{R} (u',v') \arrr{R} (u,v)$.
  \end{itemize}
\item $\numinterpret{(s,t)}^\pairpi_{cs} = \sum_{i=0}^{P(|cs|)-1} \{
  2^{|cs|-i} \mid s \mapsto i \}$.  So $\numinterpret{(s,t)}^\pairpi_{
  cs}$ is the number with bitstring $b_0\dots b_{P(|cs|)-1}$ where
  $b_i = 1$ if{f}
  $s \mapsto i$, if{f} $t \not\mapsto i$ (with $b_0$ the most
  significant digit).
\item $\Sigma^\pairpi$ consists of the defined symbols introduced in
  $R^\pairpi$, which we construct below.
\end{itemize}
The rules for the module closely follow those in
Lemma~\ref{lem:mainmodule}, except that:
\begin{itemize}
\item calls to $\symb{seed}^1_{\mathtt{lin}}$,
  $\symb{zero}_{\mathtt{lin}}$ and $\symb{pred}^1_{\mathtt{lin}}$ are
  replaced by $\symb{seed}_\pi$, $\symb{zero}_\pi$ and
  $\symb{pred}_\pi$ respectively, where these symbols are supported
  by rules such as $\symb{zero}_\pi\ cs\ (s,t) \arrz \symb{zero}_\pi\ 
  cs\ s\ t$ and $\symb{pred}_\pi\ cs\ (s,t) \arrz (\symb{pred}^1_\pi\ 
  cs\ s\ t, \symb{pred}^2_\pi\ cs\ s\ t)$;
\item calls $\symb{eqLen}\ n\ q$ are replaced by $\symb{eqBase}\ cs\ 
  n\ q$, and the rules for $\symb{eqLen}$ replaced by
  $
  \symb{eqBase}\ cs\ (n_1,n_2)\ (m_1,m_2) \arrz \symb{equal}_\pi\ cs\ 
  n_1\ n_2\ m_1\ m_2$.
  Just like a call to $\symb{eqLen}\ n\ q$ forces a reduction from $q$
  to a data term, a call to $\symb{eqBase}\ cs\ n\ q$ forces $q$ to be
  reduced to a pair---but not necessarily to normal form.
\end{itemize}
With these rules, indeed $\symb{seed}^1_\pairpi\ cs$ is in
$\A^\pairpi_{cs}$, as is $\symb{pred}^1_\pairpi\ cs\ n$ if $n \in \A^\pi_{cs}$.
Moreover, we can check that the requirements on reduction are
satisfied.
\end{proof}

Thus, by starting with $C_{\mathtt{e}}$ and repeatedly using
Lemma~\ref{lem:countpairs}, we can reach arbitrarily high
exponential bounds (since 
$2^{2^n-1}
\geq 2^n$).  Following the reasoning of \secshort\ref{sec:counting},
we thus have:

\begin{corollary}\label{cor:atleastelem}
Every set in $\elementary$ is accepted by a cons-free
first-order APTRS.
\end{corollary}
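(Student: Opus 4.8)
The plan is to iterate the non-deterministic pairing construction of Lemma~\ref{lem:countpairs} on top of the first-order module $C_{\symb{e}}$ of Lemma~\ref{lem:mainmodule}, and then feed the result into the Turing-machine simulation of Figure~\ref{fig:TM}, exactly as in the proof of Theorem~\ref{thm:simulation} but keeping the type order fixed at $1$ throughout.

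First I would reduce the goal to building a single first-order module with a large enough bound. Since $\elementary = \bigcup_{K \geq 1} \etime{K}$, a set $X \in \elementary$ is decided by some deterministic Turing machine $M$ running in time bounded above by $\lambda n.\exp_2^K(an)$ for suitable $K,a \geq 1$. An easy induction gives $\exp_2^a(n) \geq a \cdot n$ for all $n \geq 1$ (the step uses $2^m \geq 2m$ for $m \geq 1$), so by monotonicity $\exp_2^K(an) \leq \exp_2^{K+a}(n)$ for all $n \geq 1$. As Figure~\ref{fig:TM} only requires the machine's running-time bound or an overestimate, it therefore suffices to exhibit a first-order cons-free APTRS counting module whose bound $P$ satisfies $P(n) \geq \exp_2^{K+a}(n)$ for every input size $n \geq 1$ (recall inputs are non-empty strings).

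Next I would construct such a module. Put $C^{(0)} := C_{\symb{e}}$, with bound $g_0 = \lambda n.2^{n+1}$, and let $C^{(j+1)}$ be obtained from $C^{(j)}$ by Lemma~\ref{lem:countpairs}; then each $C^{(j)}$ is a first-order cons-free APTRS counting module with bound $g_{j+1} = \lambda n.2^{g_j(n)-1}$. A short induction on $j$ shows $g_j(n) \geq 2 \cdot \exp_2^{j+1}(n)$ for all $n \geq 1$: the base case is the equality $2^{n+1} = 2 \cdot \exp_2^1(n)$, and the step uses $\exp_2^{j+1}(n) \geq 2$ for $n \geq 1$, so that $2^{g_j(n)-1} \geq 2^{2\exp_2^{j+1}(n)-1} \geq 2^{\exp_2^{j+1}(n)+1} = 2\cdot\exp_2^{j+2}(n)$. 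Taking $j := K + a - 1$ yields a first-order cons-free APTRS counting module whose bound dominates $\exp_2^{K+a}(n)$, hence the running time of $M$, at every input size.

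Finally I would plug this module into the ATRS of Figure~\ref{fig:TM}. This is legitimate because Lemmas~\ref{lem:safetypreserve} and~\ref{lem:niceconstructor} carry over to APTRSs (as already noted in the text), because a first-order module produces a first-order system, and because the system obtained stays a left-linear cons-free constructor APTRS. Its symbol $\symb{decide} : \bits \arrtype \bool$ then satisfies $\symb{decide}\ \encode{E} \arrr{\Rules} \strue$ iff $M$ accepts $E$ iff $E \in X$ (and $\symb{decide}\ \encode{E} \arrr{\Rules} \sfalse$ iff $E \notin X$), so this APTRS accepts (indeed decides) $X$. The one point requiring care is the bookkeeping of exponential-tower heights through the iterated applications of Lemma~\ref{lem:countpairs}; but the induction above shows that the slack already present in $C_{\symb{e}}$ --- the ``$+1$'' in $2^{n+1}$ --- is precisely what is needed to absorb the ``$-1$'' introduced at each step, so no further composition via Lemma~\ref{lem:RQcount} is required.
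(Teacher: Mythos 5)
Your proposal is correct and follows essentially the same route as the paper: iterate Lemma~\ref{lem:countpairs} starting from $C_{\symb{e}}$ of Lemma~\ref{lem:mainmodule} to reach any level of the exponential tower while staying first-order, then plug the resulting module into the simulation of Figure~\ref{fig:TM}. The paper only gestures at the growth estimate (``since $2^{2^n-1}\geq 2^n$''), whereas you make the bookkeeping explicit; your induction showing $g_j(n)\geq 2\cdot\exp_2^{j+1}(n)$ is a correct elaboration of that remark.
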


The key reason for this explosion in expressivity is that, by matching
on a pattern $(x,y)$, a rule forces a \emph{partial evaluation}.
Recall that, in a cons-free ATRS (without pairing), we can limit
interest to \emph{semi-outermost} reductions, where sub-reductions
$\apps{\identifier{f}}{s_1}{s_n} \arrr{\Rules} \apps{\identifier{f}}{
u_1}{u_n} = \ell\gamma \arr{\Rules} r\gamma$ have
$s_i = u_i$ or $u_i \in \Data$ for all $i$: we can postpone an
evaluation at an argument position if it is
not to a data term.  By allowing a wider range of terms than just the
elements of $\B$ to carry testable information, 
expressivity increases accordingly.

We strongly conjecture that it is not possible to accept sets \emph{not}
in $\elementary$, however.
A proof might use a variation of Algorithm~\ref{alg:main}, where
$\interpret{\atype \times \btype} = \{ (A,B) \mid A \in
\interpret{\atype} \wedge B \in \interpret{\btype} \}$: the size of
this set is exponential in the sizes of $\interpret{\atype}$ and
$\interpret{\btype}$, leading to a limit of the form $\exp_2^{a \cdot
n^b}$ depending on the types used.
However, we do not have the space to prove this properly, and the
result does not seem interesting enough to warrant the effort.

\medskip
Yet, product types \emph{are} potentially useful.
We can retain them while suitably constraining expressivity, by
imposing a new restriction.

\begin{definition}
An APTRS is \emph{product-cons-free} if it is cons-free and for all
rules $\apps{\identifier{f}}{\ell_1}{\ell_k} \arrz r$ and subterms
$r \suptermeq (r_1,r_2)$: each $r_i$ has a form (a) $(s,t)$, (b)
$\apps{\identifier{c}}{s_1}{s_n}$ with $\identifier{c} \in
\Constructors$, or (c) $x \in \V$ such that $x \neq \ell_j$ for any
$j$ (so $x$ occurs below a constructor or pair on the left).
\end{definition}

In a product-cons-free APTRS, any pair which is created is
necessarily a data term.  Lemma~\ref{lem:countpairs} does not go
through in a product-cons-free APTRS (due to the rules for
$\symb{pred}_\pi$ and $\symb{seed}_\pi$), but we \emph{do} obtain a
milder increase in expressivity: from $\etime{K}$ to
$\exptime{K}$.

\newcommand{\polmain}{{\symb{exp}\langle a,b \rangle}}
\begin{lemma}\label{lem:countaptrsgood}
Suppose counting modules are defined over product-cons-free APTRSs.
Then for all $a \geq 0, b > 0$, there is a first-order
$(\lambda n.2^{a \cdot (n+1)^b})$-counting module $C_\polmain$.
\end{lemma}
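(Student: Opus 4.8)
The plan is to build $C_\polmain$ by a single exponential jump over a polynomial‑sized base, mimicking how Theorem~\ref{thm:simulation} obtains a $(\lambda n.2^{a(n+1)})$‑counting module from $C_{\symb{e}}$ --- only now the base counts up to $(n+1)^b$ rather than $n+1$. First I would iterate Lemma~\ref{lem:RQcount}, starting from the linear module $C_{\mathtt{lin}}$ of Example~\ref{ex:countlin}, to obtain a first‑order $(\lambda n.(n+1)^b)$‑counting module $C_b$; since Lemma~\ref{lem:RQcount} uses neither pairing nor higher types, $C_b$ is first‑order and product‑cons‑free, and in it a number $i \in \{0,\dots,(n+1)^b-1\}$ is a tuple $(xs_1,\dots,xs_b)$ of sublists of $cs$, read as a base‑$(|cs|+1)$ digit string. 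Then I would construct a first‑order $(\lambda n.2^{(n+1)^b})$‑counting module $C_{\symb{e}[b]}$ as a $b$‑ary analogue of $C_{\symb{e}}$ (Lemma~\ref{lem:mainmodule}), whose number representations are pairs $(yss,zss)$ of terms of the order‑$0$ product type $\tau_b := \bits \times \cdots \times \bits$ ($b$ factors): $\A_{cs}$ consists of those $(yss,zss)$ all of whose data normal forms are \emph{packed positions} $(cs_{i_1},\dots,cs_{i_b})$ (tuples of sublists of $cs$, each carrying a $C_b$‑index), with the complementarity property that for every packed position exactly one of $yss$, $zss$ reduces to it, and $\numinterpret{(yss,zss)}_{cs}$ is the number whose $j$‑th bit is $1$ iff $yss$ reduces to the packed position of index $j$. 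Finally, iterating Lemma~\ref{lem:RQcount} to combine $a$ copies of $C_{\symb{e}[b]}$ yields a $(\lambda n.(2^{(n+1)^b})^a) = (\lambda n.2^{a\cdot(n+1)^b})$‑counting module, which is $C_\polmain$ (for $a=0$ take the trivial one‑element module); every step preserves first‑orderedness and product‑cons‑freeness.

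The heart of the argument is the construction of $C_{\symb{e}[b]}$, for which I would reuse the rules of Lemma~\ref{lem:mainmodule} almost verbatim, with two changes. (i) Wherever Lemma~\ref{lem:mainmodule} walks over positions using $\symb{seed}^1_{\mathtt{lin}}$, $\symb{zero}^1_{\mathtt{lin}}$, $\symb{pred}^1_{\mathtt{lin}}$, I would walk over the packed positions using the $\symb{seed}$, $\symb{zero}$, $\symb{pred}$ of $C_b$ together with a \emph{packing} step converting a $C_b$‑tuple into a $\tau_b$‑term. The observation that makes packing product‑cons‑free is that in the enumeration each digit of a $C_b$‑position is presented in constructor‑headed form $y\cons xs$, so a rule $\symb{pack}\ (y_1\cons xs_1)\cdots(y_b\cons xs_b) \arrz (y_1\cons xs_1,\dots,y_b\cons xs_b)$ has only constructor‑headed pair components, hence satisfies clause~(b) of product‑cons‑freeness (and is cons‑free, each $y_j\cons xs_j$ being a subterm of the left‑hand side); the unique all‑$\nil$ position is packed by a separate ground‑data rule, or, more simply, shifted out of range and handled by hand. (ii) The structural test $\symb{eqLen}$ of Lemma~\ref{lem:mainmodule} is replaced by a test $\symb{eqTuple}$ on $\tau_b$‑data which forces both arguments to reduce to packed positions and then compares them componentwise with the $\symb{eqLen}$ rules; the rules for $\symb{eqTuple}$ only match on pairs and on $\cons$/$\nil$ and return $\strue$ or $\sfalse$, hence build no pairs. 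The non‑deterministic selectors ($\symb{either}$, now at type $\tau_b$, and $\bot$) are copied verbatim from Lemma~\ref{lem:mainmodule}; their right‑hand sides contain no pair constructors, so they remain product‑cons‑free.

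Once the rules are fixed, the verification has two halves. For product‑cons‑freeness I would run through the rules and observe that the only pair‑valued right‑hand sides are (a) packed positions built by the $\symb{pack}$‑style rule above (clause~(b)), or (b) pair‑matched variables copied around unchanged by the analogues of $\symb{either}$, $\symb{copy}$, $\symb{pr}^1$, $\symb{pr}^2$ (clause~(c)); every remaining rule is a copy of a rule of Lemma~\ref{lem:mainmodule} or Figure~\ref{fig:succ} and creates no pair. For counting‑module correctness I would mirror the reasoning of Lemma~\ref{lem:mainmodule}: $\symb{seed}$ produces the all‑ones bitstring of length $(n+1)^b$ (so its value is $2^{(n+1)^b}-1 = P(|cs|)-1$), $\symb{pred}$ realises $\max(m-1,0)$ by flipping the low bits and $\symb{copy}$‑ing the rest, $\symb{zero}$ decides $m=0$, and --- the crucial point --- reducing $yss$ or $zss$ never changes $\numinterpret{(yss,zss)}_{cs}$, because $yss$ and $zss$ are consulted only through $\symb{eqTuple}$, which forces them down to packed positions, and the set of packed positions reachable from $yss$ is reduction‑invariant by the complementarity of $yss$ and $zss$ (exactly as for $\symb{main}$ in \secshort\ref{subsec:sat}).

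The step I expect to be the main obstacle is keeping the whole construction inside the product‑cons‑free fragment while still packing positions: one can never assemble a fresh pair out of scalar values, so every component of a position must be kept in $\cons$‑headed form right up to the instant it is packed. This constrains how $C_b$'s predecessor is threaded through the $\symb{all}$/$\symb{copy}$/$\symb{pr}$ loops and forces the $\nil$‑component positions to be handled separately; once that bookkeeping is in place, re‑deriving the bitstring arithmetic and the value‑stability property is a routine adaptation of Lemma~\ref{lem:mainmodule}.
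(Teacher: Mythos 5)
Your proposal is correct and follows essentially the same route as the paper: positions are product-typed data terms encoding base-$(n{+}1)$ digit strings (so the only pairs ever built have constructor-headed or left-matched components, keeping the rules product-cons-free), the bitstring machinery of Lemma~\ref{lem:mainmodule} is rerun over these positions, and the structural $\symb{eqLen}$ test is replaced by one that matches on pairs to force evaluation --- exactly the fix needed to get around the remark following Lemma~\ref{lem:mainmodule}. The only divergence is minor: the paper absorbs the constant $a$ directly into the position tuples (an extra component $u_0$ with $|u_0| < a$, yielding $a\cdot(|cs|+1)^b$ bit positions in a single module), whereas you first build a $2^{(n+1)^b}$-module and then take its $a$-fold product via Lemma~\ref{lem:RQcount}; both work.
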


\begin{proof}
As in $C_{\mathtt{e}}$ from Lemma~\ref{lem:mainmodule}, we
will represent a number with bitstring $b_0\dots b_N$ by two terms
$yss$ and $zss$, such that $yss \mapsto i$ if{f} $b_i = \one$ and $zss
\mapsto i$ if{f} $b_i = \nul$.  However, where in
Lemma~\ref{lem:mainmodule} we say $s \mapsto i$ if $s$ reduces to a
data term list of length $i$, here we say $s \mapsto i$ if $s$ reduces
to a data term of type $\bits^{b+1}$ which represents $i$ as in
Lemma~\ref{lem:RQcount}.

Formally:
Write $|xs|$ for the length of a data term list $xs$, so the number of
$\cons$ symbols occurring in it.
Let $\mathit{Base}$ be the set of all data terms $(u_0,\dots,u_b) :
\bits^{b+1}$ such that (a) $|u_0| < a$ and (b) for $0 < i \leq b$:
$|u_i| \leq |cs|$.  We say that $(u_0,\dots,u_b) \in \mathit{Base}$
\emph{base-represents} $k \in \nats$ if $k = \sum_{i = 0}^b |u_i|
\cdot (|cs|+1)^{b-i}$.  (This follows the same idea as
Lemma~\ref{lem:RQcount}.)
For a term $s$, we say $s \mapsto k$ if $s$ reduces by $\arr{R}$ to
an element of $\mathit{Base}$ which base-represents $k$.

Now let $C_\polmain := (\bits^{b+1},\Sigma^\polmain,R^\polmain,
\A^\polmain,\numinterpret{\cdot}^\polmain)$,
where:
\begin{itemize}
\item $\A^\polmain_{cs}$ contains all $(yss,zss)$ such that (a)
  all normal forms of $yss$ or $zss$ are in $\mathit{Base}$,
and (b) for all $0 \leq i < a \cdot (|cs|+1)^b$: either
  $yss \mapsto i$ or $zss \mapsto i$, but not both.
\item $\numinterpret{(s,t)}^\polmain_{cs} = \sum_{i=0}^{N} \{
  2^{N-i} \mid s \mapsto i \}$, where $N = a \cdot
  (|cs|+1)^b-1$.
\item $\Sigma^\polmain$ consists of the defined symbols introduced in
  $R^\polmain$, which are those in Lemma~\ref{lem:mainmodule}
  with $\symb{seed}^1_{\mathtt{lin}},\symb{zero}_{\mathtt{lin}},
  \symb{pred}^1_{\mathtt{lin}}$ and $\symb{eqLen}$ replaced by
  $\symb{seed}_{\mathtt{base}},\symb{zero}_{\mathtt{base}}$,
  $\symb{pred}_{\mathtt{base}}$ and $\symb{eqBase}$ respectively,
  along with the following supporting rules:
\[
\begin{array}{rcl}
\symb{seed}_{\mathtt{base}}\ cs & \arrz &
  (\nul\cons\dots\cons\nul\cons\nil, cs, \dots, cs) \\
  \multicolumn{3}{r}{\llbracket
  \text{with}\ |\nul\cons\dots\cons\nul\cons\nil| = a-1
  \rrbracket}
\\
\symb{zero}_{\mathtt{base}}\ cs\ (\nil,\dots,\nil) & \arrz & \strue \\
\symb{zero}_{\mathtt{base}}\ cs\ (xs_0,\dots,xs_{i-1},y\cons ys,\nil,
  \dots,\nil) & \arrz & \sfalse
  \hfill
  \llbracket\text{for}\ 0 \leq i \leq b\rrbracket \\
\symb{pred}_{\mathtt{base}}\ cs\ (\nil,\dots,\nil) & \arrz &
  (\nil,\dots,\nil) \\
\symb{pred}_{\mathtt{base}}\ (c\cons zs)\ (xs_0,\dots,xs_{i-1},y
  \cons ys,\nil,\dots,\nil) & \arrz & (xs_0,\dots,xs_{i-1},ys,c\cons
  zs,\dots,c\cons zs) \\
& & \hfill \llbracket\text{for}\ 0 \leq i \leq b\rrbracket \\
\end{array}
\]
\[
\begin{array}{rcl}
\symb{eqBase}\ (\nil,\dots,\nil)\ (\nil,\dots,\nil) & \arrz & \strue
  \\
\symb{eqBase}\ (xs_0,\dots,xs_{i-1},y\cons ys,\nil,\dots,\nil)\ 
  (zs_0,\dots,zs_{i-1},\nil,\nil,\dots,\nil) & \arrz & \sfalse \\
\symb{eqBase}\ (xs_0,\dots,xs_{i-1},\nil,\nil,\dots,\nil)\ 
  (zs_0,\dots,zs_{i-1},y\cons ys,\nil,\dots,\nil) & \arrz & \sfalse
  \phantom{ABC} \\
\symb{eqBase}\ (xs_0,\dots,xs_{i-1},y\cons ys,\nil,\dots,\nil)\ 
 (zs_0,\dots,zs_{i-1},n\cons ns,\nil,\dots,\nil) & \arrz \\
\multicolumn{3}{r}{
  \symb{eqBase}\ (xs_0,\dots,xs_{i-1},ys,\nil,\dots,\nil)\ 
  (zs_0,\dots,zs_{i-1},ns,\nil,\dots,\nil)}
\end{array}
\]
\end{itemize}
This module functions as the ones from
Lemmas~\ref{lem:mainmodule} and~\ref{lem:countpairs}.  Note that in
the rules for $\symb{pred}_{\mathtt{base}}$, we expanded the
variable $cs$ representing the input list to keep these rules
product-\linebreak cons-free.
By using $c\cons zs$, the list is guaranteed to be normalized (and
non-empty).
\end{proof}

Thus, combining Lemmas~\ref{lem:countaptrsgood}
and~\ref{lem:expQcount} with the rules of Figure~\ref{fig:TM},
we obtain:

\begin{corollary}\label{cor:aptrsatleast}
Any decision problem in $\exptime{K}$ is accepted by a
$K^{\text{th}}$-order product-cons-free APTRS.
\end{corollary}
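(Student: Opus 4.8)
The plan is to assemble the statement from three pieces already in place: the base counting module over product-cons-free APTRSs from Lemma~\ref{lem:countaptrsgood}, the order-raising construction of Lemma~\ref{lem:expQcount}, and the Turing-machine simulation of Figure~\ref{fig:TM}, combined exactly as the second half of Theorem~\ref{thm:simulation} was.

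First I would normalise the complexity bound. If $X \in \exptime{K}$, then by definition there are $a,b \in \nats$ and a deterministic Turing machine that decides $X$ within time $\OO(\mathrm{exp}_2^K(an^b))$. Absorbing the $\OO$-constant --- using $c \cdot \mathrm{exp}_2^K(v) \leq (\mathrm{exp}_2^K(v))^c \leq \mathrm{exp}_2^K(cv)$ for $K \geq 1$, which is precisely the robustness remark following the definition of $\etime{K}$ --- and, if necessary, replacing $n$ by the overestimate $n+1$ and taking $a,b \geq 1$, we may assume $X$ is decided by a deterministic TM whose running time is bounded above by $\lambda n.P(n)$ with $P(n) := \mathrm{exp}_2^K(a(n+1)^b)$.

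Next I would construct a $K^{\text{th}}$-order $P$-counting module over product-cons-free APTRSs. Instantiating Lemma~\ref{lem:countaptrsgood} with this $a$ and $b$ yields a \emph{first-order} product-cons-free counting module for $\lambda n.2^{a(n+1)^b} = \lambda n.\mathrm{exp}_2^1(a(n+1)^b)$. Each application of Lemma~\ref{lem:expQcount} raises the type order by one and replaces the counting function $Q$ by $\lambda n.2^{Q(n)}$, so $K-1$ iterations produce precisely a $P$-counting module of order $K$. Feeding this module into the construction of Figure~\ref{fig:TM} for the machine above then gives an APTRS with $\symb{decide}\ \encode{E} \arrr{\Rules} \strue$ iff $E \in X$ (indeed the system decides $X$); its type order is the maximum of $K$ and the orders of the symbols occurring in Figure~\ref{fig:TM}, all of which are at most $K$, hence $K$, and it therefore accepts $X$.

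The substance of the argument lies in one verification: that every construction involved stays within the product-cons-free fragment. Lemma~\ref{lem:countaptrsgood} delivers this at the base; for the inductive step one must re-inspect the rules of Lemma~\ref{lem:expQcount} and confirm that its right-hand sides build only functional bitstring data and never a pair, so product-cons-freeness survives each iteration; and the rules of Figure~\ref{fig:TM} themselves use neither pairs nor constructors with product or functional argument types, so the composed system is still product-cons-free. The remaining points --- that the constant/polynomial absorption genuinely lands inside $\mathrm{exp}_2^K(a(n+1)^b)$, and that the composed module meets the side conditions Figure~\ref{fig:TM} relies on (notably recomputing $P(|cs|)$ from the input list $cs$) --- are routine, having already been handled for $\etime{K}$ in the proof of Theorem~\ref{thm:simulation}.
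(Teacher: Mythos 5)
Your proposal is correct and follows exactly the route the paper takes: the paper's proof is precisely the combination of Lemma~\ref{lem:countaptrsgood} (as the first-order base), iterated applications of Lemma~\ref{lem:expQcount} to reach order $K$ and counting bound $\mathrm{exp}_2^K(a(n+1)^b)$, and the simulation of Figure~\ref{fig:TM}, just as in Theorem~\ref{thm:simulation}. Your explicit check that product-cons-freeness is preserved through Lemma~\ref{lem:expQcount} and Figure~\ref{fig:TM} is the right (and only nontrivial) verification, which the paper leaves implicit.
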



By standard results, $\etime{K} \subsetneq \exptime{K}$ for all $K
\geq 1$, hence the addition of pairing materially increases
expressivity. Conversely, we have:

\begin{theorem}\label{thm:aptrsatmost}
Any set accepted by a $K^{\text{th}}$-order
product-cons-free APTRS is in $\exptime{K}$.
\end{theorem}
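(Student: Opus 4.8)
The plan is to re-run \secshort\ref{sec:algorithm} in the product-cons-free APTRS setting: adapt Algorithm~\ref{alg:main}, redo the complexity estimate of Lemma~\ref{lem:algorithmcomplexity}, and redo the correctness proof of Lemma~\ref{lem:algorithmsoundcomplete}. The single new ingredient concerns product types. By the observation following Definition~\ref{def:aptrs}, in a product-cons-free APTRS every pair occurring in a reduction from a basic term is a data term; combining this with Lemma~\ref{lem:safetypreserve} (which carries over to APTRSs), the data terms that can occur in a reduction from a basic term $s$ all lie in a set $\B^+_s$ obtained from $\B_s$ by closing under forming tuples of the product types that occur in the fixed signature. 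Since the signature is fixed, every such product type has a bounded number of sort-leaves, so $|\B^+_s|$ is \emph{polynomial} in $|s|$ rather than linear. I would then define $\interpret{\cdot}$ exactly as in \secshort\ref{sec:algorithm} but over $\B^+_s$, treating an order-$0$ product type like a sort: $\interpret{\atype} = \P(\{t \in \B^+_s \mid t : \atype\})$ for every base type $\atype$, and $\interpret{\atype \arrtype \btype} = \interpret{\btype}^{\interpret{\atype}}$ as before; $\NF^i$ gains one clause, sending $(t_1,t_2)$ to the set of data tuples built from $\NF^i(t_1,\eta)$ and $\NF^i(t_2,\eta)$. It is crucial that we keep \emph{sets of tuples} and not \emph{tuples of sets}: a product-valued term may reduce to several data tuples in a correlated way, and the modules of \secshort\ref{sec:pairing} exploit exactly this, so a componentwise interpretation would be unsound.

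For the complexity bound I would replay the inductive claim inside the proof of Lemma~\ref{lem:algorithmcomplexity} with $N := |\B^+_s|$, which is now polynomial in $|s|$. The base case becomes $\card(\interpret{\atype}) \leq 2^N$ for every base type $\atype$ (sorts and order-$0$ product types alike), and the inductive step for a function type $\atype = \atype_1 \arrtype \dots \arrtype \atype_n \arrtype \btype$ with each $\atype_i$ of order $\leq K-1$ is word-for-word the original, giving $\card(\interpret{\atype}) \leq \mathrm{exp}_2^{K+1}(\text{poly}(N))$ for types of order $\leq K$ and hence $\card(\interpret{\atype}) \leq \mathrm{exp}_2^{K}(\text{poly}(N))$ for the argument types actually occurring in a $K^{\text{th}}$-order system. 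Thus the algorithm handles at most $\mathrm{exp}_2^K(\text{poly}(N))$ statements, runs for at most that many steps, and performs polynomially much work per step, for a total running time of $\mathrm{exp}_2^K(\text{poly}(|s|)) \in \exptime{K}$. This is precisely where the jump from $\etime{K}$ to $\exptime{K}$ comes from: closing the data pool under tupling turns the linear pool of \secshort\ref{sec:algorithm} into a polynomial one, and a polynomial inside $\mathrm{exp}_2^K(\cdot)$ is absorbed by $\exptime{K}$ but not by $\etime{K}$ — which is exactly consistent with Corollary~\ref{cor:aptrsatleast}.

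For correctness I would re-prove Lemma~\ref{lem:algorithmsoundcomplete} in the APTRS setting. The whole scaffolding survives: Lemmas~\ref{lem:safetypreserve} and~\ref{lem:niceconstructor} hold for APTRSs, the labeled system and Lemma~\ref{lem:labeledequiv} are unaffected, the termination argument of Lemma~\ref{lem:terminating} adapts with the computability predicate extended to product types in the obvious way (a pair-valued term is computable iff it is terminating and all its pair-shaped reducts have computable components), and semi-outermost reductions and the relation $\thickapprox$ transfer unchanged. The genuinely new cases are: (i) in Lemma~\ref{lem:semiouter}, a left-hand side containing a pair pattern $(\ell_1,\ell_2)$ — since pairs are always data, the matched argument has reduced to a \emph{data} pair, so it behaves exactly like a constructor pattern and the postponement argument goes through; and (ii) the pairing clause in the mutual induction (A)/(B), which is immediate from the new $\NF^i$ clause and the $\B^+_s$-based interpretation. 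I expect the main obstacle to be bookkeeping rather than conceptual: carrying the enlarged pool $\B^+_s$ and the ``sets of tuples'' interpretation consistently through the long mutual induction, and in the case analysis on the shape of $\B^+_s$-safe terms correctly handling nested pairs and variables occurring below pairs. That case analysis is where product-cons-freeness — not mere cons-freeness — is genuinely needed: without it a product-valued term could carry information that no element of any $\interpret{\cdot}$ captures, which is exactly the blow-up of Lemma~\ref{lem:countpairs}. No device beyond the enlarged data pool appears to be required.
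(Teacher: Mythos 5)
Your proposal matches the paper's proof in all essentials: the key lemma that every pair arising in a reduction from a basic term is a data term (the paper proves this via preservation of ``product-$\B$-safety''), the interpretation of order-$0$ product types as \emph{sets of data tuples} (the paper takes $\interpret{\asort_1 \times \dots \times \asort_n} = \P(\{(s_1,\dots,s_n) \mid s_i \in \B\})$, which is your $\B^+_s$-based interpretation in different packaging), the resulting $2^{N^b}$ base case yielding $\timecomp{\exp_2^K(a\cdot n^b)}$, and the rerun of Lemmas~\ref{lem:labeledequiv}, \ref{lem:terminating} and~\ref{lem:algorithmsoundcomplete} with computability and safety extended to pairs. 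This is essentially the same argument as the paper's, correctly identifying where product-cons-freeness is needed and where the jump from $\etime{K}$ to $\exptime{K}$ originates.
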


\begin{proof}
Following the proof of Lemma~\ref{lem:algorithmcomplexity},
the complexity of Algorithm~\ref{alg:main} is polynomial in the
cardinality of the largest $\interpret{\atype}$ used.
The result follows by letting $\interpret{\asort_1 \times \dots \times 
\asort_n}$ contain subsets of $\B^n$---which we can do because the
only pairs occurring in a reduction are data.

Formally, let $(\F,\Rules)$ be a product-cons-free APTRS.
We first prove that any pair occurring in a reduction $s \arrr{\Rules}
t$ with $s$ basic, is a data term.
Let a term $s$ be
\emph{product-$\B$-safe} if $s$ is $\B$-safe and
$s \suptermeq (s_1,s_2)$ implies $(s_1,s_2) \in \Data$ for
all $s_1,s_2$.  We observe:
\emph{(**) if $s$ is product-$\B$-safe and $s \arr{\Rules} t$, then
$t$ is product-$\B$-safe.}
$\B$-safety of $t$ follows by Lemma~\ref{lem:safetypreserve} and for
any $t \suptermeq (t_1,t_2)$:
if not $s \suptermeq (t_1,t_2)$, then
there are $\ell \arrz r \in \Rules$, substitution $\gamma$
and $r_1,r_2$ such that $s \suptermeq \ell\gamma$ and $r \suptermeq
(r_1,r_2)$ and $(t_1,t_2) = (r_1,r_2)\gamma$.
By definition
of product-cons-free, each $r_i$ is a pair---so $r_i\gamma$
is data by induction
on the size of $r$---or has the form $\identifier{c}\ 
\vec{s}$---so
$r_i\gamma$ is a data term by $\B$-safety of $t$---or is a variable
$x$ such that $\gamma(x) \in \Data$ by product-$\B$-safety of $s$.
Thus, $(r_1,r_2)$ is a pair of two data terms, and therefore data
itself.

Thus, we can safely assume that the only product types that occur
have type order $0$, and remove constructors or defined symbols
using higher order product types.

Next, we adapt Algorithm~\ref{alg:main}.  We denote all
types of order $0$ as $\asort_1 \times \dots \times \asort_n$ (ignoring
bracketing) and let $\interpret{\asort_1 \times \dots \times
\asort_n} = \P(\{ (s_1,\dots,s_n) \mid s_i \in \B \wedge \vdash s_i :
\asort_i$ for all $i\})$.  Otherwise, the algorithm
is unaltered.
Let $b$ be the longest length of any product type occurring in $\F$.
As $\P(\B^b)$ has cardinality $2^{N^b}$, the reasoning in
Lemma~\ref{lem:algorithmcomplexity} gives
$\card(\interpret{\atype}) \leq \exp_2^{K+1}(d^K \cdot N^b)$ for a
type of order $k$, which results in $\timecomp{\exp_2^K(a \cdot n^b)}$
for the algorithm.
Lemma~\ref{lem:labeledequiv} goes through unmodified,
Lemma~\ref{lem:terminating} goes through if we define $(s,t)$ to be
computable if both $s$ and $t$ are, and
Lemma~\ref{lem:algorithmsoundcomplete} by using
product-$\B$-safety instead of $\B$-safety in case (B).
\end{proof}

Thus we obtain:

\begin{corollary}
A decision problem $X$ is in $\exptime{K}$ if and only if there is a
$K^{\text{th}}$-order product-cons-free APTRS which accepts $X$
.
\end{corollary}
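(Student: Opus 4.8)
The plan is to obtain the biconditional by simply combining the two implications already proved in this section, exactly as Corollary~\ref{cor:main} was assembled from Theorems~\ref{thm:simulation} and~\ref{thm:algorithm}. The right-to-left direction---any decision problem accepted by a $K^{\text{th}}$-order product-cons-free APTRS lies in $\exptime{K}$---is precisely Theorem~\ref{thm:aptrsatmost}. The left-to-right direction---every decision problem in $\exptime{K}$ is accepted by some $K^{\text{th}}$-order product-cons-free APTRS---is precisely Corollary~\ref{cor:aptrsatleast}. So the corollary itself is a two-line appeal to these two results, and there is nothing further to prove.

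For orientation I recall where the actual work sits, in case one wishes to see the argument in one place. The lower-bound half (Corollary~\ref{cor:aptrsatleast}) rests on the counting-module machinery of Section~\ref{sec:counting}: Lemma~\ref{lem:countaptrsgood} supplies, for every $a \geq 0$ and $b > 0$, a first-order $(\lambda n.2^{a \cdot (n+1)^b})$-counting module inside a product-cons-free APTRS, and iterating Lemma~\ref{lem:expQcount} both raises the type order by one and exponentiates the bound, yielding a $K^{\text{th}}$-order counting module for $\lambda n.\exp_2^K(a \cdot (n+1)^b)$; feeding this into the Turing-machine simulation of Figure~\ref{fig:TM} decides any $\exptime{K}$ problem. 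The upper-bound half (Theorem~\ref{thm:aptrsatmost}) adapts Algorithm~\ref{alg:main}: one first establishes the product-$\B$-safety invariant, which shows that every pair occurring in a reduction from a basic term is already a data term (so that all product types that matter have order $0$ and the crude interpretation of a pair type cannot blow up), then interprets $\interpret{\asort_1 \times \dots \times \asort_n}$ as subsets of $\B^n$, whose power set has cardinality $2^{N^b}$ when $b$ bounds the length of product types; the cardinality estimate of Lemma~\ref{lem:algorithmcomplexity} then gives a time bound of shape $\exp_2^K(a \cdot n^b)$, that is membership in $\exptime{K}$, while Lemmas~\ref{lem:labeledequiv}, \ref{lem:terminating} and~\ref{lem:algorithmsoundcomplete} carry over with $(s,t)$ declared computable iff $s$ and $t$ are, and with product-$\B$-safety replacing $\B$-safety throughout.

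The only place a genuine obstacle could arise---were one reconstructing rather than citing---is in Theorem~\ref{thm:aptrsatmost}: one must verify that the product-$\B$-safety invariant really does confine all pairs to data terms, since it is exactly this that prevents the nested exponential blow-up $2^{2^{\cdots}}$ exploited in Lemma~\ref{lem:countpairs}, and one must check that each auxiliary lemma of Section~\ref{sec:algorithm} survives the change of term representation. As both halves are established, however, the corollary follows without further effort.
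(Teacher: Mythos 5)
Your proposal is correct and matches the paper exactly: the corollary is stated immediately after Theorem~\ref{thm:aptrsatmost} with no further argument, being just the conjunction of Corollary~\ref{cor:aptrsatleast} (the simulation direction) and Theorem~\ref{thm:aptrsatmost} (the algorithmic upper bound). Your recap of where the underlying work lives—the counting module of Lemma~\ref{lem:countaptrsgood} composed with Lemma~\ref{lem:expQcount} for one direction, and the product-$\B$-safety adaptation of Algorithm~\ref{alg:main} for the other—is also an accurate account of the paper's development.
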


\section{Altering ATRSs}

As demonstrated in \secshort\ref{sec:pairing}, the expressivity of
cons-free term rewriting is highly sensitive in the presence of
non-determinism:
minor syntactical changes have the potential to significantly affect
expressivity.  In this section, we briefly discuss three other groups
of changes.

\subsection{Strategy}

In moving from functional programs to term rewriting, we diverge
from Jones' work in two major ways: by allowing
non-deterministic choice, and by not imposing a reduction strategy.
Jones' language in~\cite{jon:01} employs \emph{call-by-value}
reduction.  A close parallel in term rewriting is to consider
\emph{innermost} reductions, where a step $\ell\gamma
\arr{\Rules} r\gamma$ may only be taken if all strict subterms
of $\ell\gamma$ are in normal form.  Based on results by Jones and
Bonfante, and our own 
work on call-by-value
programs, we conjecture the following claims: 

\begin{enumerate}
\item\label{it:confluent}
  \emph{confluent} cons-free ATRSs of order $K$, with
  innermost reduction, characterize \\ $\exptime{K-1}$;
  here, $\exptime{0} = \ptime$, the sets decidable in polynomial time
\item\label{it:nondetfirst}
  cons-free ATRSs of order $1$, with innermost reduction,
  characterize $\ptime$
\item\label{it:nondethigher}
  cons-free ATRSs of order $> 1$, with innermost reduction,
  characterize $\elementary$
\end{enumerate}

(\ref{it:confluent}) is a direct translation of Jones' result on time
complexity from~\cite{jon:01} to innermost rewriting.
(\ref{it:nondetfirst}) translates Bonfante's
result~\cite{DBLP:conf/amast/Bonfante06}, which states that adding a
non-deterministic choice operator to Jones' language does not increase
expressivity in the first-order case.
(\ref{it:nondethigher}) is our own result, presented (again for
call-by-value programs) in~\cite{kop:sim:17}.  The reason for the
explosion is that we can define a similar counting module as the one
for pairing in Lemma~\ref{lem:countpairs}.

Each result can be proved with an argument similar to the one in this
paper:
for one direction, a TM simulation with counting modules;
for the other, an algorithm to evaluate the cons-free program.
While the original results admit pairing, this
adds no expressivity as the simulations can be specified
without pairs.
We believe that the proof is easily changed to accommodate innermost
over call-by-value reduction, but have not done this formally.

\medskip
Alternatively, we may consider \emph{outermost} reductions steps,
where rules are always applied at the highest possible position in a
term.
Outermost reductions are semi-outermost, but may behave
differently in the presence of overlapping rules; for example, given
rules $\symb{f}\ \nul \arrz \strue$ and $\symb{f}\ x \arrz \sfalse$,
an outermost evaluation would have to reduce $\symb{f}\ 
(\nul+\nul)$ to $\sfalse$, while in a semi-outermost evaluation we could
also have $\symb{f}\ (\nul+\nul) \arr{\Rules} \symb{f}\ \nul \arr{
\Rules} \strue$.
We note that
  the ATRS from Figure~\ref{fig:TM} and all counting modules
  evaluate as expected 
  using outermost reduction 
and that
  Theorem~\ref{thm:algorithm} does not consider evaluation
  strategy.
%
This gives:

\begin{corollary}\label{cor:exptime}
A decision problem $X$ is in $\etime{K}$ if and only if there is a
$K^{\text{th}}$-order cons-free ATRS with outermost reduction which
accepts $X$.
\end{corollary}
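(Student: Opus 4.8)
The plan is to prove the two implications of Corollary~\ref{cor:exptime} separately, each by lightly adapting material already in \secshort\ref{sec:counting} and \secshort\ref{sec:algorithm}.

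For the direction ``$X \in \etime{K}$ implies there is such an ATRS'', I would reuse verbatim the $K^{\text{th}}$-order cons-free ATRS constructed for Theorem~\ref{thm:simulation} --- the rules of Figure~\ref{fig:TM} together with the counting module obtained from Lemma~\ref{lem:mainmodule} via Lemmas~\ref{lem:RQcount} and~\ref{lem:expQcount} --- and verify that it ``evaluates as expected'' under outermost reduction: $\symb{decide}\ \encode{E} \arrr{\Rules} \strue$ by an outermost reduction iff the simulated machine accepts. Soundness is immediate, since an outermost reduction is in particular an ordinary reduction and, by the construction behind Theorem~\ref{thm:simulation}, $\symb{decide}\ \encode{E}$ reaches $\strue$ under the unrestricted relation only when the machine accepts. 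Completeness requires exhibiting, for each accepting computation, a corresponding \emph{outermost} reduction, which I would obtain by inspecting the rules. The rules of Figure~\ref{fig:TM} and the deterministic scaffolding of the counting modules are, once the determinism of the machine is taken into account, non-overlapping and delay evaluation of conditions through \symb{ifelse}; outermost reduction, always preferring the root, behaves there exactly like the call-by-name evaluation the construction tacitly assumes. The only genuinely overlapping rules are the branching rules such as $\symb{either}\ n\ xss \arrz n$ and $\symb{either}\ n\ xss \arrz xss$ --- whose two right-hand sides merely realise the intended non-determinism and are both always enabled, so an outermost step may contract such a redex by picking either rule --- and rules like $\symb{checkreducts}\ \strue\ b$ versus $\symb{checkreducts}\ b\ \strue$, whose overlapping instances never occur on terms reachable from $\symb{decide}\ \encode{E}$, by the invariants on $\A_{cs}$ used in Lemmas~\ref{lem:mainmodule}--\ref{lem:expQcount}. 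Finally, whenever a left-hand side demands a constructor at an argument position while the actual argument is headed by a defined symbol (for instance an $\symb{equal}$ applied to a term headed by $\symb{either}$), the root carries no redex, so outermost correctly descends into that argument --- exactly as the intended reduction does. Hence every accepting reduction used for Theorem~\ref{thm:simulation} is realisable outermost.

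For the converse, ``any $X$ accepted by a $K^{\text{th}}$-order cons-free ATRS under outermost reduction is in $\etime{K}$'', one cannot simply cite Theorem~\ref{thm:algorithm}: outermost reduction is strictly more restrictive than the unrestricted relation (which, by Lemma~\ref{lem:semiouter}, reaches the same data normal forms as semi-outermost reduction), so the set accepted under outermost can be a proper subset of the set accepted unrestrictedly, and a subset of a set in $\etime{K}$ need not lie in $\etime{K}$. Instead I would re-run \secshort\ref{sec:algorithm} with ``outermost reduction'' in place of $\arrr{\Rules}$ throughout. Lemmas~\ref{lem:safetypreserve}, \ref{lem:niceconstructor}, \ref{lem:labeledequiv} and~\ref{lem:terminating} are unaffected, and since outermost reductions are semi-outermost they still have the shape traced by Algorithm~\ref{alg:main}, so Lemma~\ref{lem:algorithmsoundcomplete} survives with ``outermost'' inserted on its right-hand side. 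The substantive change is in the rule-firing step: when $\Conf^{i+1}[\identifier{f}\ \vec{A} \leadsto t]$ is set to $\top$ by applying a rule $\apps{\identifier{f}}{\ell_1}{\ell_k} \arrz r$ after postponed argument sub-reductions $s_j \arrr{\Rules} \ell_j\gamma$, one must additionally certify that those sub-reductions can be interleaved into a genuine outermost reduction, i.e.\ that no intermediate term $\apps{\identifier{f}}{s_1'}{s_n'}$ exposes a root redex. Because matching a left-linear constructor pattern amounts to the relevant arguments being specific data terms, ``no root redex is exposed prematurely'' is a finite condition on the term representations $\vec{A}$, checkable with a number of steps polynomial in $|\B|$; the bound of Lemma~\ref{lem:algorithmcomplexity}, $\timecomp{\exp_2^K(a\cdot n)}$, is thereby preserved, giving $X \in \etime{K}$. (The observation of Remark~\ref{rem:nontermination} about non-termination applies unchanged.)

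The main obstacle is precisely this scheduling condition: formalising, inside the abstract-interpretation setting of Algorithm~\ref{alg:main}, when a family of postponed argument-reductions can be carried out without exposing a root redex, and proving the completeness half of the adapted Lemma~\ref{lem:algorithmsoundcomplete} --- that an outermost reduction to $t$ is discovered whenever one exists. Semi-outermost reduction leaves the order of root-contractions free, whereas outermost fixes it, so the induction in the correctness proof must additionally carry, for each sub-reduction, the fact that it avoids premature root redexes; essentially all the new work lies there, the remainder being a transcription of \secshort\ref{sec:counting} and \secshort\ref{sec:algorithm}. Combining the two directions yields Corollary~\ref{cor:exptime}.
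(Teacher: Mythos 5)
Your first direction is exactly the paper's: the authors simply observe that the ATRS of Figure~\ref{fig:TM} and the counting modules evaluate as intended under outermost reduction, which is what you verify in more detail. The divergence is in the converse. The paper disposes of it in half a sentence---``Theorem~\ref{thm:algorithm} does not consider evaluation strategy''---i.e.\ it takes the upper bound to follow from the strategy-free analysis of Algorithm~\ref{alg:main}, whereas you declare this insufficient and set out to rebuild the algorithm around outermost reduction. Your underlying worry is not frivolous: the paper's own example ($\symb{f}\ \nul \arrz \strue$, $\symb{f}\ x \arrz \sfalse$, term $\symb{f}\ (\nul+\nul)$) shows that the set of data normal forms reachable outermost can be a proper subset of the set reachable unrestrictedly, and a subset of a set in $\etime{K}$ need not itself be in $\etime{K}$, so the converse does require more than quoting Theorem~\ref{thm:algorithm} verbatim.

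However, as a proof your converse direction does not go through. The crucial new ingredient---certifying, inside the abstract domain of Algorithm~\ref{alg:main}, that the postponed argument sub-reductions $s_j \arrr{\Rules} \ell_j\gamma$ can be interleaved so that no intermediate term $\apps{\identifier{f}}{s_1'}{s_n'}$ exposes a root redex---is asserted to be ``a finite condition on the term representations $\vec{A}$, checkable in time polynomial in $|\B|$''. But the representations $A_j$ record only the possible normal forms of $s_j$ (or the induced function, at higher type), not the intermediate terms traversed on the way to them, and whether a root redex appears mid-reduction depends precisely on those intermediate terms (e.g.\ a rule with a variable in position $j$ fires at the root no matter what $s_j'$ currently is). To make your plan work you would have to enrich the abstraction to track something about reduction trajectories, re-prove both directions of Lemma~\ref{lem:algorithmsoundcomplete} for the enriched domain, and redo the cardinality count of Lemma~\ref{lem:algorithmcomplexity}; you yourself concede that ``essentially all the new work lies there''. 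So the proposal correctly isolates the one delicate point that the paper glosses over, but it replaces the paper's one-line argument with an unfinished construction rather than with a proof of the corollary.
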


\subsection{Constructor ATRSs and left-linearity}\label{sec:llcatrs}

Recall that we have exclusively considered \emph{left-linear
constructor ATRSs}.  One may wonder whether these restrictions can be
dropped.

The answer, however, is no.  In the case of constructor ATRSs, this is
easy to see: if we do not limit interest to constructor ATRSs---so if,
in a rule $\apps{\identifier{f}}{\ell_1}{\ell_k} \arrz r$ the terms
$\ell_i$ are not required to be patterns---then ``cons-free'' becomes
meaningless, as we could simply let $\Defineds := \F$.  Thus, we would
obtain a Turing-complete language already for first-order ATRSs.

Removing the requirement of left-linearity similarly provides full
Turing-completeness
.  This
is demonstrated by the first-order cons-free ATRS in
Figure~\ref{fig:nonlinear} which simulates an arbitrary TM
on input alphabet $I = \{0,1\}$.  A tape $x_0 \dots x_n\blank
\blank\dots$ with the reading head at position $i$ is represented by
three parameters: $x_{i-1}\symb{::}\dots\symb{::} x_0$ and $x_i$ and
$x_{i+1}\symb{::}\dots\symb{::} x_n$.  Here, the ``list
constructor'' $\symb{::}$ is a \emph{defined symbol}, ensured by
a rule which never fires.  To split a ``list'' into a head and tail,
the ATRS non-deterministically generates a \emph{new} head and tail
using two calls to $\symb{rndtape}$ (whose only shared reducts are
fully evaluated ``lists''), and uses a
non-left-linear rule
to compare their combination to the original ``list''.


\begin{figure}[htb]
\[
\begin{array}{rclcrclcrcl}
\symb{rndtape}\ x & \arrz & \nil & \quad &
\symb{rnd} & \arrz & \nul & \\
\symb{rndtape}\ x & \arrz &  \symb{rnd::rndtape}\ x & \quad &
\symb{rnd} & \arrz & \one \\
\bot\symb{::}t & \arrz & t & \quad &
\symb{rnd} & \arrz & \symb{B} \\
\end{array}
\]
\[
\begin{array}{rcl}
\symb{translate}\ (\nul\cons xs) & \arrz & \nul~\symb{::}~
  (\symb{translate}\ xs) \\
\symb{translate}\ (\one\cons xs) & \arrz & \one~\symb{::}~
  (\symb{translate}\ xs) \\
\symb{translate}\ \nil & \arrz & \symb{B}~\symb{::}~
  (\symb{translate}\ \nil) \\
\symb{translate}\ \nil & \arrz & \nil \\
\symb{equal}\ xl\ xl & \arrz & \strue \\
\end{array}
\]
\[
\begin{array}{rcl}
\symb{start}\ cs & \arrz & \symb{run}\ \symb{start}\ \nil\ \symb{B}\ 
  (\symb{translate}\ cs) \vspace{-6pt} \\
\symb{run}\ \unknown{s}\ xl\ \unknown{r}\ yl & \arrz &
  \symb{shift}\ \unknown{t}\ xl\ \unknown{w}\ yl\ \unknown{d})\ 
  \ \ 
  \llbracket\text{for every transition}\ \transition{\unknown{s}}{
  \unknown{r}}{\unknown{w}}{\unknown{d}}{\unknown{t}}\rrbracket \\
\symb{shift}\ s\ xl\ c\ yl\ d & \arrz & \symb{shift}_1\ s\ xl\ c\ yl\ 
  d\ \symb{rnd}\ (\symb{rndtape}\ \nul)\ (\symb{rndtape}\ \one) \\
\symb{shift}_1\ s\ xl\ c\ yl\ d\ \unknown{b}\ t\ t & \arrz &
  \symb{shift}_2\ s\ xl\ c\ yl\ d\ \unknown{b}\ t\ \ \llbracket
  \text{for every}\ \unknown{b} \in \{\symb{O},\symb{I},\symb{B}\}
  \rrbracket \\
\symb{shift}_2\ s\ xl\ c\ yl\ \symb{R}\ z\ t & \arrz &
  \symb{shift}_3\ s\ (c~\symb{::}~xl)\ z\ t\ 
  (\symb{equal}\ yl\ (z~\symb{::}~t)) \\
\symb{shift}_2\ s\ xl\ c\ yl\ \symb{L}\ z\ t & \arrz &
  \symb{shift}_3\ s\ t\ z\ (c~\symb{::}~yl)\ (\symb{equal}\ xl\ 
  (z~\symb{::}~t)) \\
\symb{shift}_3\ s\ xl\ c\ yl\ \strue & \arrz &
  \symb{run}\ s\ xl\ c\ yl \\
\end{array}
\]
\caption{A first-order non-left-linear ATRS that simulates a given
Turing machine}
\label{fig:nonlinear}
\end{figure}

\subsection{Variable binders}

A feature present in many styles of higher-order term rewriting is
\emph{$\lambda$-abstraction}; e.g., a construction such as $\lambda x.
\identifier{f}\ x$.  Depending on the implementation, admitting
$\lambda$-abstraction in cons-free ATRSs may blow up expressivity, or
not affect it at all.


First, consider ATRSs with $\lambda$-abstractions used only in the
right-hand sides of rules.  Then all abstractions can
be removed by introducing fresh function symbols, e.g., by replacing
a rule
$\symb{f}\ (\symb{c}\ y) \arrz \symb{g}\ 
(\lambda x.\symb{h}\ x\ y)$
by the two rules
$\symb{f}\ (\symb{c}\ y) \arrz \symb{g}\ (
\symb{f}_{\mathtt{help}}\ y)$ and $\symb{f}_{\mathtt{help}}\ y\ x \arrz
\symb{h}\ x\ y$
(where $\identifier{f}_{\mathtt{help}}$ is a fresh symbol).
Since the normal forms of basic terms are not affected by this
change, this feature adds no expressivity.

Second, some variations of higher-order term rewriting require that
function symbols are always assigned to as many arguments as possible;
abstractions are the \emph{only} terms of functional type.
Clearly, this does not increase expressivity as it merely limits the
number of programs (with $\lambda$-abstraction) that we can specify.
Nor does it lower expressivity: the results in this paper
go through in such a formalism, as demonstrated in~\cite{kop:sim:16}.
It does, however, require some changes to the definition of a counting
module.

Finally, if abstractions are allowed in the \emph{left}-hand sides of
rules, then the same problem arises as in Lemma~\ref{lem:countpairs}:
we can force a partial evaluation,
and use this to define $(\lambda n.\exp_2^K(n))$-counting modules for
arbitrarily high $K$ without increasing type orders.  This is because
a rule such as $\identifier{f}\ (\lambda x.Z)$ matches a term
$\identifier{f}\ (\lambda x.\symb{0})$, but does \emph{not} match
$\identifier{f}\ (\lambda x.\symb{g}\ x\ \symb{0})$ because of how
substitution works in the presence of binders.  A full exposition of
this issue 
would require a
more complete definition of higher-order term rewriting with
$\lambda$-abstraction, so is left as an exercise to interested
readers.
%
A restriction such as \emph{fully extended rules} may be used to
bypass this issue; we leave this question to future work.

\section{Conclusions}\label{sec:conclusion}

We have studied the expressive power of cons-free
higher-order term rewriting, and seen that
restricting data order results in
characterizations of different classes.
We have shown
that pairing dramatically increases this expressive
power---and how this can be avoided by using additional
restrictions---and we have briefly
discussed the  effect of other syntactical changes.
The main results are displayed in Figure~\ref{fig:conclusion}.

\begin{figure}[htb]
%
  \def\arraystretch{1.2}
\begin{tabular}{m{6cm} m{3.5cm} m{4cm}}
\hline
  \centering \textbf{P}
& \centering \textbf{C}
& \\
\hline
& & \\[-0.7em]
\parbox{5cm}{confluent cons-free ATRSs with call-by-value reduction} &
$\exptime{K-1}$ &
(translated from~\cite{jon:01}) \\
& & \\[-0.5em]
cons-free ATRSs &
$\etime{K}$ &
(Corollary~\ref{cor:main}) \\
& & \\[-0.5em]
product-cons-free APTRSs &
$\exptime{K}$ &
(Corollary~\ref{cor:exptime}) \\
& & \\[-0.5em]
cons-free APTRSs (so with pairing) &
$\geq \elementary$ &
(Corollary~\ref{cor:atleastelem}) \\
& & \\[-0.7em]
\hline
\end{tabular}

\caption{Overview: systems \textbf{P} with type order $K$ characterize
the class \textbf{C}.}
\label{fig:conclusion}
\end{figure}

\subsection{Future work}

We see two major, natural lines of further inquiry, that we believe
will also be of significant interest in the general---non-rewriting
related---area of implicit complexity.
Namely (I), the imposition of further
restrictions, either on rule formation, reduction strategy or both
that, combined with higher-order rewriting will yield characterization
of  \emph{non}-deterministic classes such as
$\textrm{NP}$, or of sub-linear time classes like $\textrm{LOGTIME}$.
And (II), additions of 
\emph{output}. While cons-freeness does not naturally lend itself to
producing output, it is common in implicit complexity to investigate
characterizations of sets of computable \emph{functions}, e.g. the
polytime-computable functions on integers, rather than
decidable sets.
This could for instance be done by allowing the production
of constructors of specific types.

\vspace{-1pt}

%

\bibliography{references}
\bibliographystyle{plain}

\end{document}